\providecommand{\algorithmname}{Algorithm}
\theoremstyle{plain}
\newtheorem{prop}{\protect\propositionname}
\theoremstyle{definition}
\newtheorem{defn}{\protect\definitionname}
\theoremstyle{definition}
\newtheorem{condition}{\protect\conditionname}
\theoremstyle{plain}
\newtheorem{thm}{\protect\theoremname}
\theoremstyle{plain}
\newtheorem{lem}{\protect\lemmaname}
\author{
Junting~Chen, \textit{Member,~IEEE}, and David~Gesbert, \textit{Fellow,~IEEE}
\thanks{This paper has been accepted in IEEE Transactions on Wireless Communications.
This research was funded, in part, by one or more of the following grants:
No. ZDSYS201707251409055, No. 2017ZT07X152, No. 2018B030338001, No. 2018YFB1800800, and  
ERC under the European Union Horizon 2020 research and innovation program (Agreement no. 670896). This paper 
was presented, in part, at the IEEE International Conference on Communications, Paris, France, 2017. }
\thanks{Junting Chen is with the School of Science and Engineering and 
the Future Network of Intelligence Institute (FNii) at 
the Chinese University of Hong Kong, Shenzhen, Guangdong 518172, China (e-mail: juntingc@cuhk.edu.cn).}
\thanks{David Gesbert is with Department of Communication Systems, 
EURECOM, 06410 Sophia-Antipolis, France (e-mail: gesbert@eurecom.fr).}
}
\newcommand{\newac}{\newacronym}
\newcommand{\ac}{\gls}
\newcommand{\Ac}{\Gls}
\newcommand{\acpl}{\glspl}
\newtheorem*{thm1a}{Theorem 1A}
\providecommand{\conditionname}{Condition}
\providecommand{\definitionname}{Definition}
\providecommand{\lemmaname}{Lemma}
\providecommand{\propositionname}{Proposition}
\providecommand{\theoremname}{Theorem}
\begin{document}
\title{Efficient Local Map Search Algorithms for the Placement of Flying
Relays}

\maketitle
%
%


\newcommand*{\SINGLECOLUMN}{}

\ifdefined\SINGLECOLUMN
	\setkeys{Gin}{width=0.5\columnwidth}
	\newcommand{\figfontsize}{\footnotesize} 
	\newcommand{\labelfontsize}{0.8}
	\newcommand{\ticklabelfontsize}{0.8}
	\newcommand{\numberfontsize}{0.6}
\else
	\setkeys{Gin}{width=1.0\columnwidth}
	\newcommand{\figfontsize}{\normalsize} 
	\newcommand{\labelfontsize}{0.7}
	\newcommand{\ticklabelfontsize}{0.7}
	\newcommand{\numberfontsize}{0.5}
\fi
\begin{abstract}
This paper studies the optimal \ac{uav} placement problem for wireless
networking. The \ac{uav} operates as a flying wireless relay to provide
coverage extension for a \ac{bs} and deliver capacity boost to a
user shadowed by obstacles.
While existing methods rely on statistical models for potential blockage
of a direct propagation link, we propose an approach capable of leveraging
local terrain information to offer performance guarantees. The proposed
method allows to strike the best trade-off between minimizing propagation
distances to ground terminals and discovering good propagation conditions.
The algorithm only requires several propagation parameters, but it
is capable to avoid deep propagation shadowing and is proven to find
the globally optimal UAV position. Only a \emph{local} exploration
over the target area is required, and the maximum length of search
trajectory is linear to the geographical scale. Hence, it lends itself
to online search. Significant throughput gains are found when compared
to other positioning approaches based on statistical propagation models. 
\end{abstract}

\section{Introduction}

\label{sec:intro}

One significant challenge for wireless communication networks is the
rapid increase of demand for high data rate and low latency wireless
service. As a promising solution to future communication networks,
substantial attention has been brought on the exploitation of \acpl{uav}
as flying relays to connect \acpl{bs} with the users in communication
outage \cite{ZenZhaLim:J16,MozSaaBenDeb:J16c,XiaXiaXia:M16,VanChiPol:J16,MerGuv:C15,HoSujJohDe:C13,sharma2016uav}. 

In UAV relaying over a dense urban environment, a fundamental challenge
is the shadowing at the user side, where the degree of obstruction
depends on the geographical environment. For example, the link between
a UAV and a user may be in deep shadow when the UAV is located at
the east side of a building, whereas, the propagation condition may
be significantly improved when the UAV moves to the north side. It
is difficult to know such fine-grained propagation condition prior
to physically flying a UAV to a target position for assessment. Existing
techniques that model the possible obstruction include specifying
a larger path loss exponent, adding additional power loss, and constructing
a random variable that describes the statistics of the shadowing.
However, these models still over-simplify the actual propagation,
because they implicitly assume that the degree of obstruction is \emph{homogeneous}
everywhere, meaning that the path loss is statistically the same everywhere
given the same propagation distance (and elevation angle). Consequently,
in a BS-UAV-user relay network, these models would predict the best
UAV relay on the BS-user axis. In reality, however, a UAV may find
significantly better propagation condition \emph{off} the BS-user
axis. We will demonstrate that substantial performance gain can be
achieved when a more realistic fine-grained propagation model is exploited.
The main goal of the paper is to develop an efficient and blockage-adaptive
search strategy to explore fine-grained propagation condition for
the optimal UAV position.

\subsection{Related Works}

UAV optimization strategies are mainly developed based on specific
air-to-ground path loss models.  In \cite{OuyZhuLinLiu:J14,ZenZha:J17,JinZhaCha:C12,ChoJunSun:C14,ChaChaGagGag:C14,JiaSwi:J12,AzaRosChePol:J18,ur2018uav,lyu2017placement},
the path loss was modeled as a deterministic function of the UAV-to-user
distance, irrelevant to specific UAV positions. Thanks to the simplicity
of these distance-based models, \cite{OuyZhuLinLiu:J14,ZenZha:J17,JinZhaCha:C12,ChoJunSun:C14,ChaChaGagGag:C14}
developed solutions to UAV navigation problems, \cite{JiaSwi:J12}
studied optimization strategies for \ac{mimo} communications with
UAVs with multiple antennas, and \cite{AzaRosChePol:J18} optimized
the UAV position for cooperative communications. The models used in
\cite{OuyZhuLinLiu:J14,ZenZha:J17,JinZhaCha:C12,ChoJunSun:C14,ChaChaGagGag:C14,JiaSwi:J12,AzaRosChePol:J18,ur2018uav,lyu2017placement}
imply that the path loss is the same under the same distance, but
more detailed research in \cite{AlhKanJam:C14,MozSaaBenDeb:C15} suggests
that air-to-ground propagation should also depend on the elevation
angle of the UAV-user link. 

To capture the dependency of obstruction on elevation angles, in \cite{AlhKanJam:C14,MozSaaBenDeb:C15,HouSitLar:J14,MozSaaBenDeb:C16,MozSaaBenDeb:J16b},
the path loss was modeled as a random variable of the propagation
distance and the elevation angle. Specifically, for the UAV position
and coverage optimization problems studied in \cite{MozSaaBenDeb:C15,HouSitLar:J14,MozSaaBenDeb:C16},
the path loss was modeled as the average of the path loss under the
LOS case and that under the NLOS case, where the larger the elevation
angle, the higher the LOS probability. The work \cite{MozSaaBenDeb:J16b}
considered the shadowing statistics as a function of elevation angle.
Yet, these models implicitly assume that the degree of obstruction
is homogeneous for the same distance and elevation angle. In practice,
however, the degree of obstruction may vary from one location to another. 

\subsection{Challenges and Our Contributions}

Although it is theoretically possible to associate each UAV and user
position pair with a channel quality to capture the variation of the
terrain, it is almost infeasible to implement a search strategy for
the optimal UAV position. On one hand, it is time and energy prohibitive
to maneuver a UAV for channel quality assessment at every possible
position. On the other hand, even a 3D city map or a radio map can
be constructed, it is computationally expensive to exhaustively search
of the optimal UAV position. Moreover, the city map or radio map may
not be updated or reliable in practice. 

Therefore, it is essential to develop an air-to-ground path loss model
to capture the variation of the terrain, and at the same time, consists
of some nice structure that can be exploited to efficiently search
for the optimal UAV position. Specifically, two main issues are to
be addressed:
\begin{itemize}
\item \textbf{How to model the air-to-ground path loss for the communication
between ground users and low-altitude UAVs?} We need a model that
not only captures the various degrees of obstruction due to the complex
terrain, but also facilitates a low complexity search strategy for
optimal UAV positioning. 
\item \textbf{How to plan an efficient search trajectory for the optimal
UAV position?} We desire to build a search path to find the globally
optimal UAV position, while the maximum search length is only linear
to the radius of the search area.
\end{itemize}

To answer these questions, we develop a \emph{nested segmented} air-to-ground
propagation model, which, for each user position, partitions the UAV
search area into several segments, and associates each segment with
a path loss model, such as LOS model, obstructed LOS model and \ac{nlos}
model. This feature matches with the observation obtained from the
experimental study in \cite{FenGeeTamNix:C06}. In addition, we impose
that as the UAV moves away from the user, it can only enter propagation
segments with a higher degree of obstruction, as shown in Fig. \ref{fig:geometry-property}
(b). Such a requirement is consistent with many existing air-to-ground
models from both academia and industry \cite{AlhKanJam:C14,TR36777},
where the lower the elevation angle, the higher the probability for
which the UAV-user link is obstructed. Using such a model, we develop
a \emph{shaded-contour-exploration }algorithm to search for the \emph{globally}
optimal UAV position, and the algorithm only requires several propagation
parameters but not the entire radio map. We prove that global optimum
can be attained under a linear search trajectory. While this idea
was partially exploited in our preliminary work \cite{CheGes:C17},
the work \cite{CheGes:C17} focused only on the simplest case of two
propagation segments and the optimality proof was not presented. 

To summarize, the key contributions are made as follows:
\begin{itemize}
\item We propose a nested segmented propagation model to capture the fine-grained
degree of obstruction in air-to-ground propagation. 
\item We develop a shaded-contour-exploration strategy to find the optimal
UAV position for a single user relay system, with proven global optimality
and linear search complexity. Substantial performance gain for the
case of clustered multiple users is also numerically demonstrated. 
\item We perform numerical experiments to evaluate the performance of the
UAV relay system and compare with existing approaches from existing
models. Substantial throughput gain is found from a simulated Manhattan-like
urban environment.
\end{itemize}

The rest of the paper is organized as follows. Section \ref{sec:system-model-dual}
establishes the nested segmented air-to-ground propagation model.
Section \ref{sec:framework-property} develops the search algorithm,
and Section \ref{sec:global-convergence} establishes theoretical
results on the global optimality. Numerical results are demonstrated
in Section \ref{sec:numerical}, and conclusions are given in Section
\ref{sec:conclusion}.

\section{System Model}

\label{sec:system-model-dual}

Consider a cellular network in an urban environment, where the BS
is placed on rooftop or on a tower that is higher than all the buildings.
Due to the possibly dense distribution of buildings and vegetation,
it is likely that signals transmitted from the BS are significantly
obstructed from users on street levels. Consider to deploy a UAV as
a flying relay that connects a user with the BS. Assume that the UAV
moves at a \emph{fixed} height $H_{\text{d}}>H_{\text{b}}$, where
$H_{\text{b}}$ is the height of the BS. Denote the horizontal positions
of the UAV, BS, and user, respectively, as $\mathbf{x},\mathbf{x}_{\text{b}},\mathbf{x}_{\text{u}}\in\mathbb{R}^{2}$,
and hence, $(\mathbf{x},H_{\text{d}}),(\mathbf{x}_{\text{b}},H_{\text{b}}),(\mathbf{x}_{\text{u}},0)\in\mathbb{R}^{3}$
are, respectively, the positions of the UAV, BS, and user in 3D. 

Note that the signal from the UAV can still be obstructed from the
user due to local obstacles surrounding the user. On the other hand,
the UAV relay cannot move too close to the user because it needs to
balance the relay link with the BS. To address this dilemma, the goal
of this paper is to optimize the horizontal position $\mathbf{x}\in\mathbb{R}^{2}$
for the UAV. 

\subsection{Channel Model for the UAV-BS Link}

Define the \emph{channel} as the deterministic power gain averaged
over small scale fading. The BS-UAV channel is modeled as 
\begin{equation}
g_{\text{b}}(\mathbf{x})=\beta_{0}d_{\text{b}}(\mathbf{x})^{-\alpha_{0}}\label{eq:UAV-BS-channel-model}
\end{equation}
where $d_{\text{b}}(\mathbf{x})=\sqrt{\|\mathbf{x}-\mathbf{x}_{\text{b}}\|^{2}+(H_{\text{d}}-H_{\text{b}})^{2}}$
is the distance from the UAV at $(\mathbf{x},H_{\text{d}})$ to the
BS at $(\mathbf{x}_{\text{b}},H_{\text{b}})$, and the constants $\alpha_{0}>1$
and $\beta_{0}>0$ are the classical path loss exponent and offset
parameters. Such a model corresponds to the common scenario where
the altitudes $H_{\text{b}}$ of the BS and $H_{\text{d}}$ of the
UAV are large enough such that there is \emph{always} a LOS condition
for the BS-UAV link. We thus focus on modeling the UAV-user link in
Section \ref{subsec:nested-segmented-model}.

\subsection{Channel Model for the UAV-user Link: A Nested Segmented Model}

\label{subsec:nested-segmented-model}

\begin{figure}
\begin{centering}
\subfigure[Horizontal view]{\includegraphics[width=0.85\columnwidth]{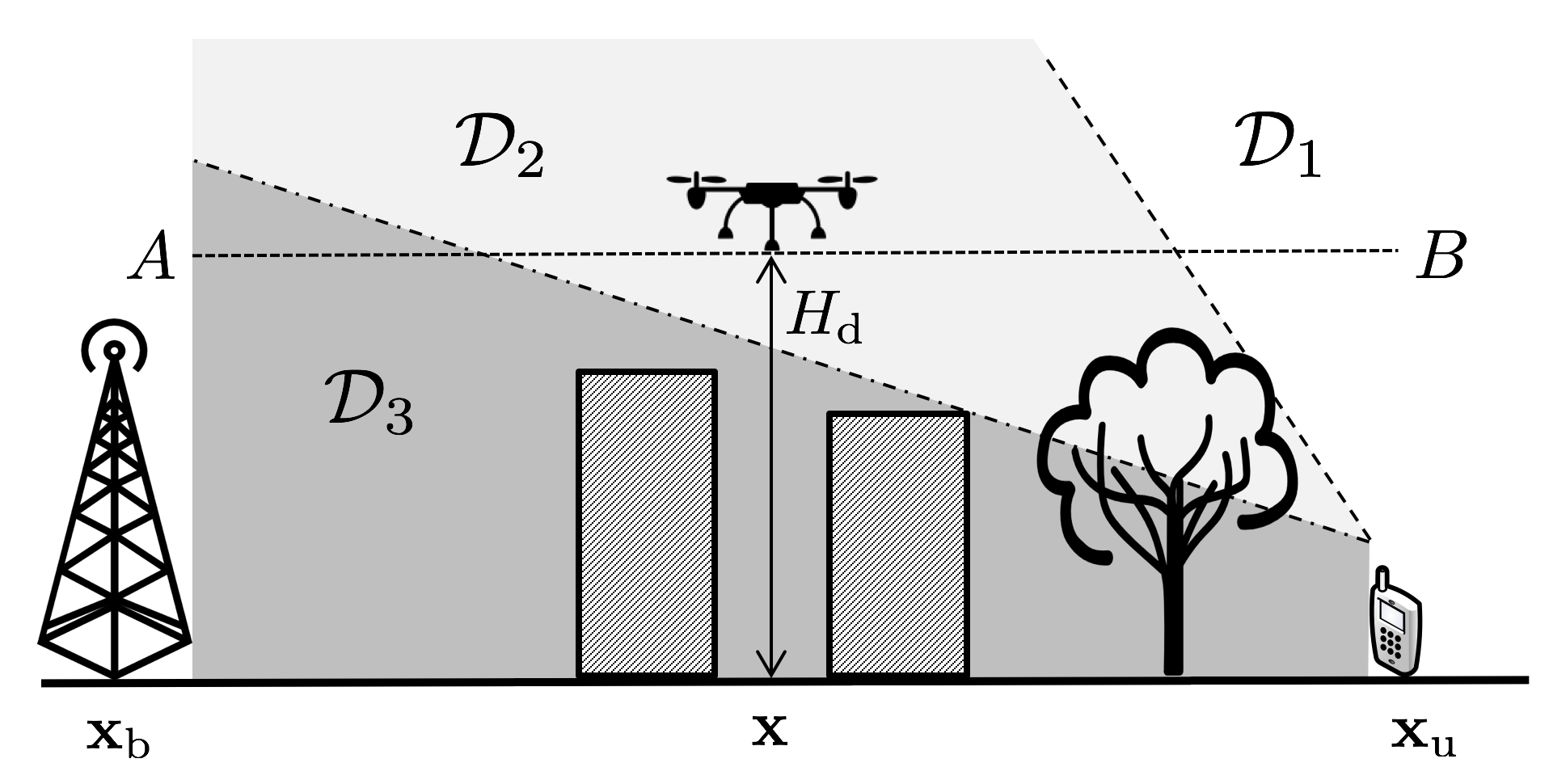}}
\par\end{centering}
\begin{centering}
\subfigure[Nested segmented model]{\includegraphics[width=0.7\columnwidth]{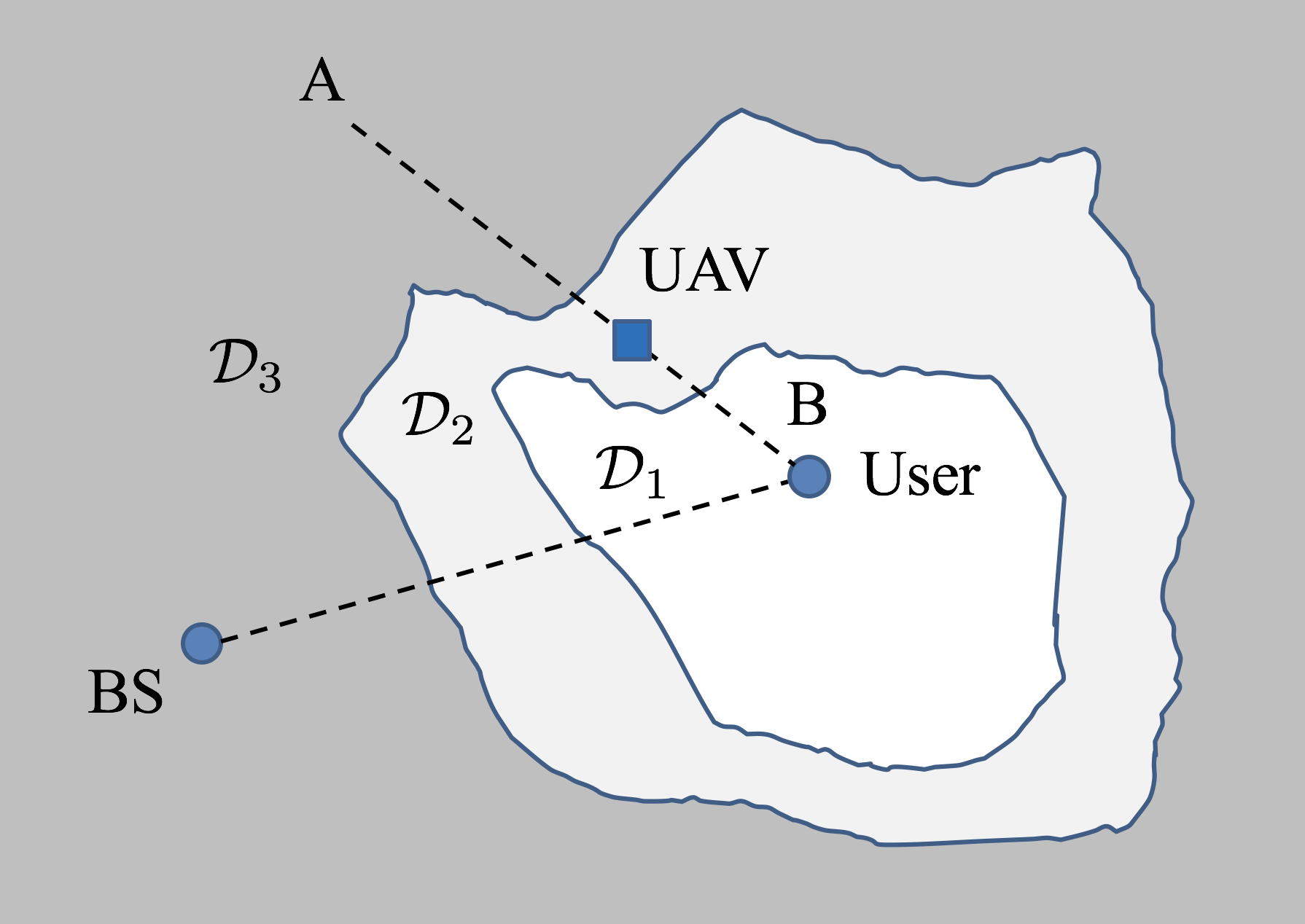}} 
\par\end{centering}
\caption{\label{fig:geometry-property} A geometric interpretation of the nested
segmented model from (a) horizontal view and (b) top view.}
\end{figure}
In the classical large-scale fading channel model, the channel gain
is modeled as $G_{\text{dB}}=b-a\log_{10}d+\xi$, where $\xi$ is
a random variable to capture the shadowing effect. Inspired by the
geometry in the ray-tracing propagation modeling as illustrated in
Fig. \ref{fig:geometry-property}, we propose to split $G_{\text{dB}}$
into several components each associated with a set of parameters $a,b$
and a random component $\xi_{k}$ for a specific degree of link obstruction. 

Specifically, let $\mathbb{D}\subseteq\mathbb{R}^{2}$ be the domain
of all possible UAV positions $\mathbf{x}$ at constant altitude $H_{\text{d}}$.
Consider a partition of $\mathbb{D}$ into $K$ disjoint segments
$\mathbb{D}=\mathcal{D}_{1}(\tilde{\mathbf{x}}_{\text{u}})\cup\mathcal{D}_{2}(\tilde{\mathbf{x}}_{\text{u}})\cup\dots\cup\mathcal{D}_{K}(\tilde{\mathbf{x}}_{\text{u}})$,
where $\mathcal{D}_{k}\cap\mathcal{D}_{j}=\varnothing$, for $k\neq j$,
and $\mathcal{D}_{k}(\tilde{\mathbf{x}}_{\text{u}})$ denotes the
region of UAV locations for which the UAV maintains a degree-$k$
of LOS obstruction from the user. The proposed \emph{segmented} propagation
model for $G_{\text{dB}}$ is specified as: 
\begin{equation}
G_{\text{dB}}(\mathbf{x})=\sum_{k=1}^{K}\big(b_{k}-a_{k}\log_{10}d_{\text{u}}(\mathbf{x})+\xi_{k}\big)\mathbb{I}\{\mathbf{x}\in\mathcal{D}_{k}\}\label{eq:uav-user-channle-model-with-noise}
\end{equation}
where $d_{\text{u}}(\mathbf{x})=\sqrt{\|\mathbf{x}-\mathbf{x}_{\text{u}}\|^{2}+(H_{\text{d}}-H_{\text{u}})^{2}}$
is the distance from the UAV located at $(\mathbf{x},H_{\text{d}})$
to the user at $\tilde{\mathbf{x}}_{\text{u}}=(\mathbf{x}_{\text{u}},H_{\text{u}})$,
$a_{k}$ and $b_{k}$ are some parameters, and $\mathbb{I}\{A\}$
is an indicator function taking value $1$ if condition $A$ is satisfied,
and $0$ otherwise. The random variable $\xi_{k}$ captures the residual
shadowing effect. The segment parameters $\{\alpha_{k},\beta_{k},\mathcal{D}_{k}\}$
are assumed to satisfy the following conditions:
\begin{enumerate}
\item The propagation segment $\mathcal{D}_{k}$ exhibits a higher degree
of LOS obstruction than $\mathcal{D}_{k-1}$, \emph{i.e.}, for $k=2,3,\dots,K$
and any UAV position $\mathbf{x}$, 
\[
b_{k}-a_{k}\log_{10}d_{\text{u}}(\mathbf{x})<b_{k-1}-a_{k-1}\log_{10}d_{\text{u}}(\mathbf{x})
\]
or, in the linear scale representation,
\begin{equation}
\beta_{k}d_{\text{u}}(\mathbf{x})^{-\alpha_{k}}<\beta_{k-1}d_{\text{u}}(\mathbf{x})^{-\alpha_{k-1}}\label{eq:nested-segmented-model-order-condition}
\end{equation}
where $b_{k}=10\log_{10}\beta_{k}$ and $a_{k}=10\alpha_{k}$. 
\item The propagation segments $\mathcal{D}_{k}$ are \emph{nested} along
any directions from the user, \emph{i.e.}, for any $\mathbf{x}\in\mathcal{D}_{k}$
and $0\leq\rho\leq1$ 
\begin{equation}
\mathbf{x}_{\text{u}}+\rho(\mathbf{x}-\mathbf{x}_{\text{u}})\in\mathcal{D}_{j},\qquad\text{for some }j\leq k.\label{eq:nested-segmented-model-nested-condition}
\end{equation}
In other words, when the UAV moves towards the user, the UAV-user
channel tends to become \emph{always} less obstructed \cite{BaiHea:J15}
as there are fewer obstacles in between the UAV and the user, as illustrated
in Fig. \ref{fig:geometry-property}. 
\end{enumerate}

It is intuitive that if the parameters $\{\alpha_{k},\beta_{k},\mathcal{D}_{k}\}$
are obtained correctly, the variance of the residual term $\xi_{k}$
can be substantially reduced compared to that of the random component
$\xi$ in the classical channel model. For example, the variance of
the shadowing in the LOS case is believed to be much smaller than
that in the combined LOS and NLOS case. 

With such an insight, we focus on optimizing the UAV network based
on the average channel gain $\bar{G}_{\text{dB}}(\mathbf{x})\triangleq\mathbb{E}\left\{ G_{\text{dB}}(\mathbf{x})\big|\{\mathcal{D}_{k}\}\right\} $
given the propagation segments, where the expectation is taken over
$\xi_{k}$. Therefore, assuming $\xi_{k}$ with zero mean, the deterministic
channel in linear scale $10\log_{10}g_{\text{u}}(\mathbf{x})\triangleq\bar{G}_{\text{dB}}(\mathbf{x})$
can be written as: 
\begin{equation}
g_{\text{u}}(\mathbf{x})=\sum_{k=1}^{K}\beta_{k}d_{\text{u}}(\mathbf{x})^{-\alpha_{k}}\mathbb{I}\{\mathbf{x}\in\mathcal{D}_{k}\}.\label{eq:uav-user-channel-model}
\end{equation}

The problem of learning $\alpha_{k}$, $\beta_{k}$, and the statistics
of $\xi_{k}$ from measurement data has been partially attempted in
\cite{CheYanGes:C16,CheEsrGesMit:C17}, and it is not the focus of
this paper. Here, we assume that $\alpha_{k}$, $\beta_{k}$, and
the statistics of $\xi_{k}$ are perfectly known, but we still need
to (partially) determine $\mathcal{D}_{k}$, \emph{i.e.}, the boundaries
shown in Fig. \ref{fig:geometry-property} (b), to help search for
the optimal UAV position. Note that learning the entire $\mathcal{D}_{k}$
is much more difficult and time consuming (or even prohibitive) than
learning $\alpha_{k}$ and $\beta_{k}$. The goal of this paper is
to optimize the UAV position by only \emph{partially} exploring $\mathcal{D}_{k}$.

\subsection{Problem Formulation and Application Examples}

\label{subsec:problem-formulation}

Consider an objective cost function $f(g_{\text{u}},g_{\text{b}})$
of the UAV-user channel gain $g_{\text{u}}$ and the BS-UAV channel
gain $g_{\text{b}}$. Assume that $f(x,y)$ is a continuous and decreasing
function in $x$ and $y$, respectively. A generic UAV positioning
problem can be formulated as follows\footnote{More generally, one can design the cost function $f$ in terms of
the UAV position $\mathbf{x}$ to capture other system characteristics,
such as antenna pattern and effect of small-scale fading.}
\[
\mathscr{P}:\qquad\underset{\mathbf{x}\in\mathbb{R}^{2}}{\text{minimize}}\quad f(g_{\text{u}}(\mathbf{x}),g_{\text{b}}(\mathbf{x})).
\]

The problem formulation $\mathscr{P}$ can capture many applications
for a variety of relay transmission strategies. Three examples are
illustrated as follows, where we choose some simple formulations for
easy demonstration of the proposed algorithm.

Consider that the transmission from the BS to the UAV is modeled as
$y_{\text{r}}=\sqrt{P_{\text{b}}g_{\text{b}}}a_{\text{b}}s+n_{\text{r}}$,
and that from the UAV to the user is modeled as $y_{\text{u}}=\sqrt{P_{\text{u}}g_{\text{u}}}a_{\text{u}}s_{\text{r}}+n_{\text{u}}$,
where $n_{\text{r}},n_{\text{u}}\sim\mathcal{CN}(0,1)$ are the receive
noise at the UAV relay and the user, respectively, $s,s_{\text{r}}\sim\mathcal{N}(0,1)$
are the transmit signals from the BS and the UAV relay, respectively,
and $P_{\text{u}}$ and $P_{\text{b}}$ are transmission powers at
the UAV and the BS, respectively. The variables $a_{\text{b}}$ and
$a_{\text{u}}$ model the small scale fading on the BS-UAV link and
the UAV-user link, respectively. For Rayleigh fading channels, $|a_{\text{b}}|^{2}$
and $|a_{\text{u}}|^{2}$ are assumed to follow exponential distribution
with parameter (normalized as) $\lambda=1$. 

\subsubsection{Amplify-and-Forward}

\label{subsec:example-AF}

\begin{figure*}
\begin{centering}
\subfigure[Urban topology]{
\psfragscanon
\psfrag{BS}[][][0.5]{BS}
\psfrag{UE 1}[Bl][Bl][0.5]{User}
\psfrag{Building height [m]}[][][0.5]{Building height [m]}
\psfrag{X-axis}[][][0.5]{Longitude [m]}
\psfrag{Y-axis}[][][0.5]{Latitude [m]}
\psfrag{0}[][][0.4]{0}
\psfrag{15}[][][0.4]{15}
\psfrag{25}[][][0.4]{25}
\psfrag{35}[][][0.4]{35}
\psfrag{45}[][][0.4]{45}
\psfrag{200}[][][0.4]{200}
\psfrag{400}[][][0.4]{400}
\psfrag{600}[][][0.4]{600}
\psfrag{800}[][][0.4]{800}
\psfrag{1000}[][][0.4]{1000}
\psfrag{9}[][][0.4]{9}
\psfrag{18}[][][0.4]{18}
\psfrag{27}[][][0.4]{27}
\psfrag{36}[][][0.4]{36}
\psfrag{45}[][][0.4]{45}\includegraphics[width=0.66\columnwidth]{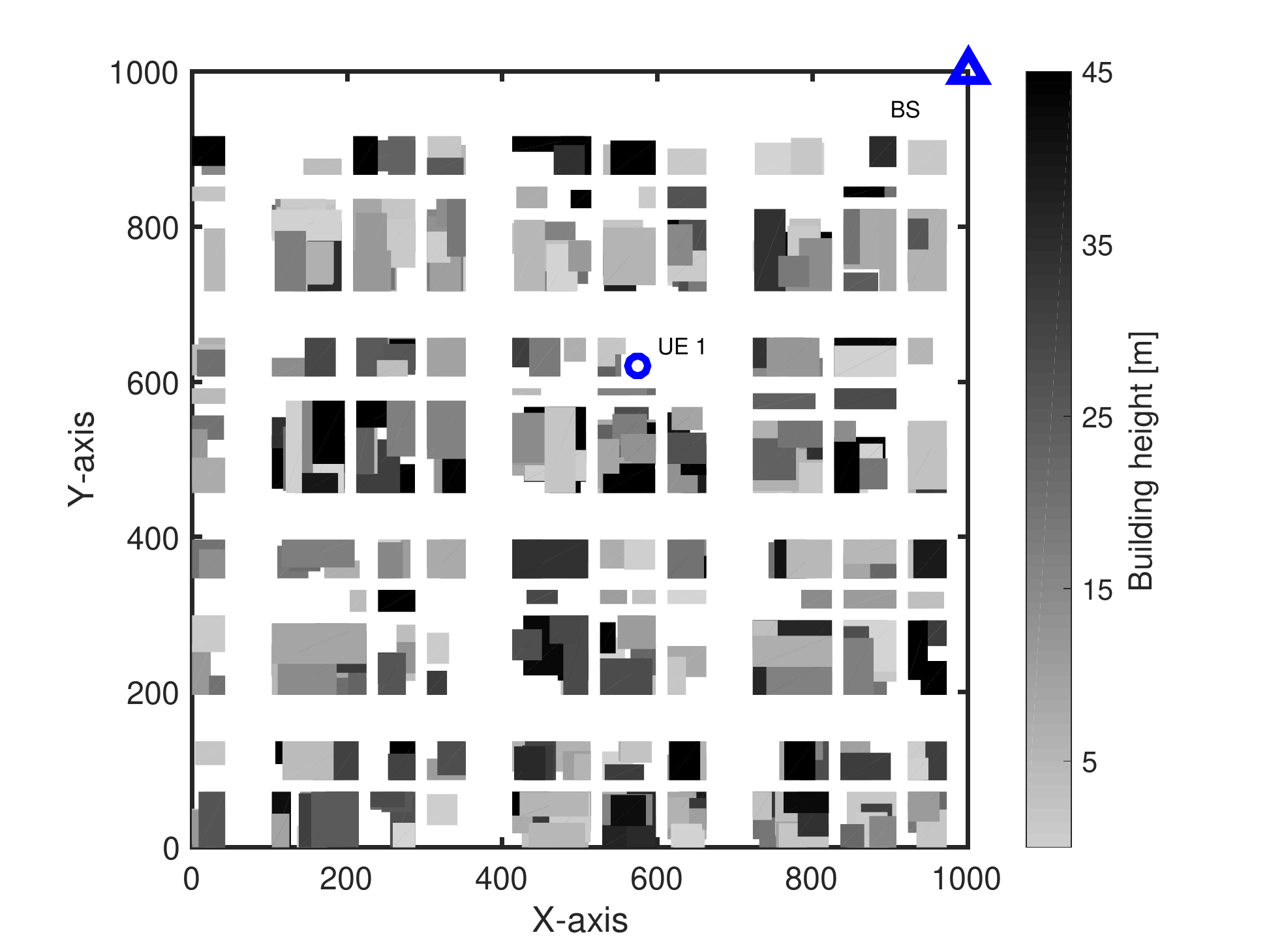}}\subfigure[Power map]{
\psfragscanon
\psfrag{Received power (dBm)}[][][0.5]{Received power (dBm)}
\psfrag{X-axis (m)}[][][0.6]{X-axis (m)}
\psfrag{Y-axis (m)}[][][0.6]{Y-axis (m)}
\psfrag{0}[][][0.5]{0}
\psfrag{200}[][][0.5]{200}
\psfrag{400}[][][0.5]{400}
\psfrag{600}[][][0.5]{600}
\psfrag{800}[][][0.5]{800}
\psfrag{1000}[][][0.5]{1000}\includegraphics[width=0.66\columnwidth]{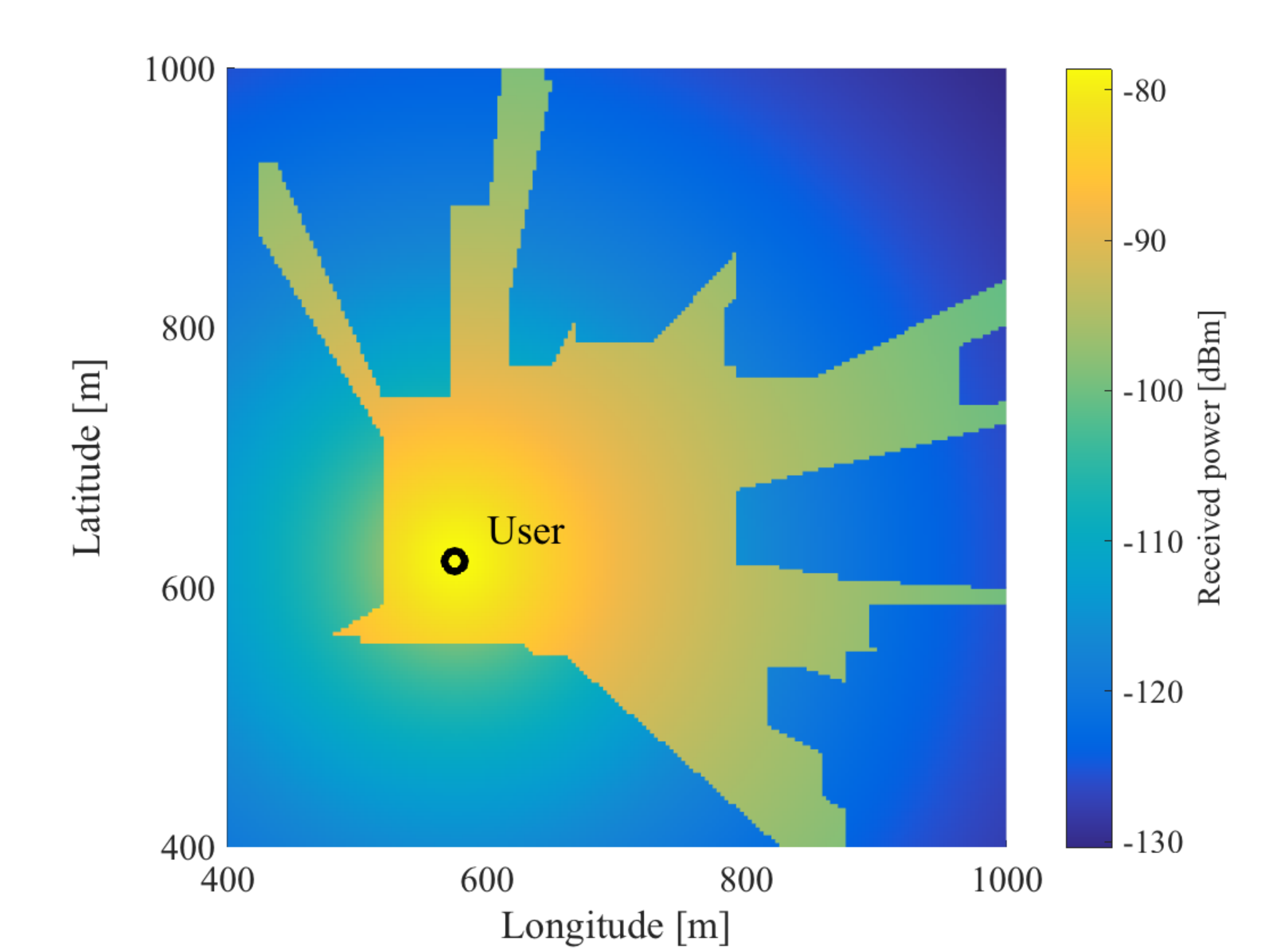}}\subfigure[Capacity map]{
\psfragscanon
\psfrag{Received power (dBm)}[][][0.5]{Received power (dBm)}
\psfrag{X-axis (m)}[][][0.6]{X-axis (m)}
\psfrag{Y-axis (m)}[][][0.6]{Y-axis (m)}
\psfrag{0}[][][0.5]{0}
\psfrag{200}[][][0.5]{200}
\psfrag{400}[][][0.5]{400}
\psfrag{600}[][][0.5]{600}
\psfrag{800}[][][0.5]{800}
\psfrag{1000}[][][0.5]{1000}\includegraphics[width=0.66\columnwidth]{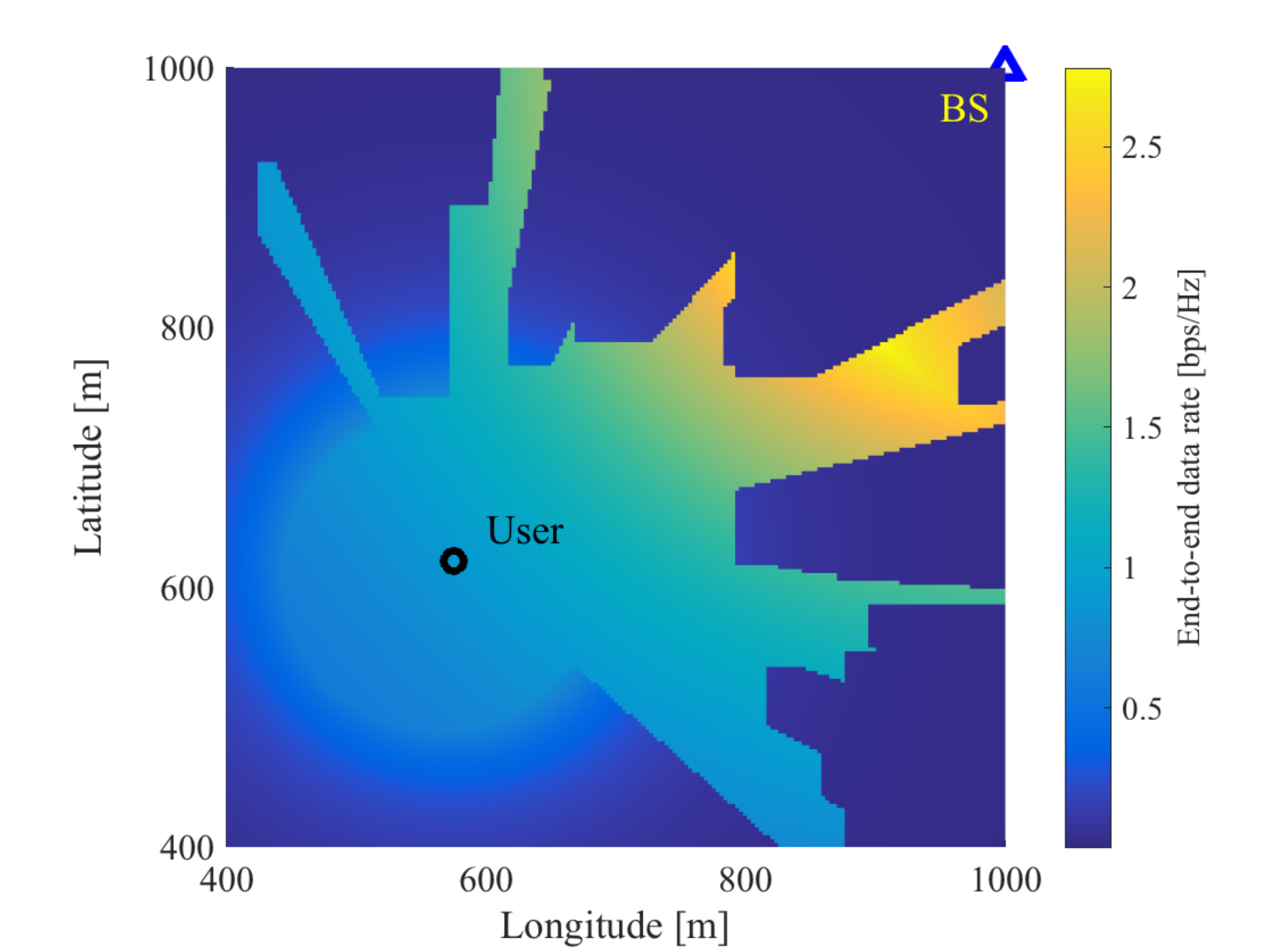}}
\par\end{centering}
\caption{\label{fig:urban-city-map} (a) A map of a dense urban area, where
the rectangles denote the building with colors representing their
heights. (b) The simulated received power map corresponding to every
UAV position. (c) The simulated end-to-end capacity map.}
\end{figure*}
In amplify-and-forward relaying, the UAV relays the information by
transmitting $s_{\text{r}}=y_{\text{r}}/\sqrt{P_{\text{b}}g_{\text{b}}|a_{\text{r}}|^{2}+1}$,
where the scaling factor $\sqrt{P_{\text{b}}g_{\text{b}}|a_{\text{r}}|^{2}+1}$
is to normalize the transmission power at the UAV to be $P_{\text{u}}$.
It was shown in \cite{NabBolKne:J04,LanTseWor:J04} that the capacity
of the relay channel is given by $C_{\text{AF}}=\frac{1}{2}\log_{2}\big(1+q(P_{\text{b}}g_{\text{b}}|a_{\text{r}}|^{2},P_{\text{u}}g_{\text{u}}|a_{\text{u}}|^{2})\big)$,
where $q(x,y)\triangleq xy/(x+y+1)$ and the parameter $\frac{1}{2}$
is to capture the fact that the information requires two time slots
to reach the user. The outage probability with respect to a target
data rate $R$ was shown to be \cite[Lemma 1]{LanTseWor:J04} $\mathbb{P}\{C_{\text{AF}}<R\}\approx\Big(\frac{1}{P_{\text{b}}g_{\text{b}}}+\frac{1}{P_{\text{u}}g_{\text{u}}}\Big)(2^{2R}-1)^{2}$
under high \ac{snr}, \emph{i.e.}, $P_{\text{b}}g_{\text{b}},P_{\text{u}}g_{\text{u}}\gg1$.\footnote{The original problem in \cite{LanTseWor:J04} considered a diversity
scheme that combines the signal from the relay and the signal from
the BS. Such a strategy also leads to problem $\mathscr{P}1$ under
high \ac{snr}. } 

Therefore, to minimize the outage probability of the relay channel,
the desired UAV position can be determined as the solution to $\mathscr{P}$
with a cost function given as: 
\begin{align}
 & f(g_{\text{u}}(\mathbf{x}),g_{\text{b}}(\mathbf{x})):=\frac{1}{P_{\text{u}}g_{\text{u}}(\mathbf{x})}+\frac{1}{P_{\text{b}}g_{\text{b}}(\mathbf{x})}.\label{eq:outage-minimization}
\end{align}

\subsubsection{Decode-and-Forward}

\label{subsec:example-DF}

In decode-and-forward relaying, the UAV fully decodes the message
$\hat{s}$ from the receive signal $y_{\text{r}}$, and transmits
$s_{\text{r}}=\hat{s}$ to the user. The maximum capacity of such
a decode-and-forward relay system can be shown to be $C_{\text{DF}}=\frac{1}{2}\min\{\log_{2}(1+P_{\text{b}}g_{b}|a_{b}|^{2}),\log_{2}(1+P_{\text{u}}g_{\text{u}}|a_{\text{u}}|^{2})\}$
\cite{LanTseWor:J04,WanCanGiaLan:J07}. Using Jensen's inequality
$\mathbb{E}\{C(x)\}\leq C(\mathbb{E}\{x\})$ on a concave function
$C(x)$, an upper bound of the ergodic capacity $\mathbb{E}\{C_{\text{DF}}\}$
is given by $\frac{1}{2}\min\{\log_{2}(1+P_{\text{b}}g_{b}),\log_{2}(1+P_{\text{u}}g_{\text{u}})\}$.
The desired UAV position can be determined by maximizing such a capacity
bound. Equivalently, the problem $\mathscr{P}$ can be specified by
choosing the following cost function: 
\begin{align}
f(g_{\text{u}}(\mathbf{x}),g_{\text{b}}(\mathbf{x})):=\max\Big\{ & -\log_{2}\big(1+P_{\text{b}}g_{b}(\mathbf{x})\big),\label{eq:drone-rate-maximization}\\
 & -\log_{2}\big(1+P_{\text{u}}g_{\text{u}}(\mathbf{x})\big)\Big\}.\nonumber 
\end{align}

A numerical example is given in Fig. \ref{fig:urban-city-map}, where
Fig. \ref{fig:urban-city-map} (b) simulates the received power of
the UAV-user signal \ac{wrt} every UAV position under a segmented
propagation model with $K=2$ segments. Fig. \ref{fig:urban-city-map}
(c) shows the corresponding relay channel capacity from the BS to
the user via the UAV. It is not trivial to find the optimal UAV relay
position due to the irregular propagation pattern. 

\subsubsection{Multiuser Clustered around a Hotspot}

\label{subsec:multiuser}

Suppose that there are $N_{\text{u}}$ users clustered around a hotspot
centered at $\mathbf{x_{\text{c}}}$ with radius $r_{\text{u}}$.
Let $f^{(i)}(\mathbf{x})$ be the cost function taking the form in
(\ref{eq:drone-rate-maximization}) for the $i$th user located at
position $\mathbf{x}_{\text{u}}^{(i)}$. Specifically, the UAV-user
gain $g_{\text{u}}(\mathbf{x};\mathbf{x}_{\text{u}}^{(i)})$ in (\ref{eq:uav-user-channel-model})
is computed based on the user position $\mathbf{x}_{i}$. Consider
to maximize the sum rate $-\frac{1}{N_{\text{u}}}\sum_{i=1}^{N_{\text{u}}}f^{(i)}(\mathbf{x})$.
One may consider to simplify and approximate the cost function $\bar{f}(\mathbf{x})\triangleq\frac{1}{N_{\text{u}}}\sum_{i=1}^{N_{\text{u}}}f^{(i)}(\mathbf{x})$
by constructing a virtual user indexed as $i=N_{\text{u}}+1$. The
virtual user is virtually placed at the hotspot center $\mathbf{x}_{\text{c}}$
(as a similar topological model discussed in \cite{ur2018uav}), and
the corresponding channel gain $g_{\text{u}}(\mathbf{x};\mathbf{x}_{\text{c}})$
for is modeled using the segmented log-distance model (\ref{eq:uav-user-channel-model}),
except that the propagation segment depends on the majority vote from
the $N_{\text{u}}$ actual users. Specifically, the UAV position \textbf{$\mathbf{x}$}
belongs to the $k$th propagation segment $\tilde{\mathcal{D}}_{k}(\mathbf{x}_{\text{c}})$
(for the virtual user), if the majority UAV-user links $(\mathbf{x},\mathbf{x}_{\text{u}}^{(i)})$
belong to the $k$the segment. As a result, it is clear that the cost
function $f(\mathbf{x})\triangleq f^{(N_{\text{u}}+1)}(\mathbf{x})$
for the virtual user is a good approximation of the average cost $\frac{1}{N_{\text{u}}}\sum_{i=1}^{N_{\text{u}}}f^{(i)}(\mathbf{x})$
as long as the cluster radius $r_{\text{u}}$ is small. 

While such an approximation is only suboptimal, in Section \ref{subsec:clustered-user-case},
we numerically demonstrate that our proposed strategy that solves
$\mathscr{P}$ still provide reasonably good performance up to moderate
cluster radius $r_{\text{u}}$ as compared to stochastic optimization
using simplified models. 

\section{Algorithm Designs}

\label{sec:framework-property}

In this section, we first derive some useful insights on the optimal
UAV positions, and then develop a polar representation tool. Based
on that, we develop the search algorithm for the optimal UAV position. 

\subsection{Properties of the Optimal UAV Position}
\begin{prop}
\label{prop:Solution-Property} The optimal solution $\mathbf{x}^{*}$
to $\mathscr{P}$ is either on the BS-user axis, or on the boundary
between two propagation segments.
\end{prop}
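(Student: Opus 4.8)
The plan is a perturbation-and-contradiction argument. Suppose $\mathbf{x}^{\star}$ is optimal for $\mathscr{P}$, lies in the \emph{interior} of some propagation segment $\mathcal{D}_{k}$ (so it is not on the boundary between two segments), and is \emph{not} on the line through $\mathbf{x}_{\text{b}}$ and $\mathbf{x}_{\text{u}}$ (the BS--user axis). I will then exhibit a nearby feasible point with strictly smaller cost, contradicting optimality; the two situations that escape this argument are precisely the two alternatives in the statement.

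The pivotal observation is that \emph{inside} $\mathcal{D}_{k}$ the model (\ref{eq:uav-user-channel-model}) collapses to $g_{\text{u}}(\mathbf{x})=\beta_{k}d_{\text{u}}(\mathbf{x})^{-\alpha_{k}}$, a strictly decreasing function of the single scalar $d_{\text{u}}(\mathbf{x})$, while $g_{\text{b}}(\mathbf{x})=\beta_{0}d_{\text{b}}(\mathbf{x})^{-\alpha_{0}}$ is strictly decreasing in $d_{\text{b}}(\mathbf{x})$ everywhere. Composing with the strictly decreasing $f$, the cost restricted to $\mathcal{D}_{k}$ is strictly increasing in each of $d_{\text{u}}(\mathbf{x})$ and $d_{\text{b}}(\mathbf{x})$. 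Hence it suffices to find an arbitrarily small move from $\mathbf{x}^{\star}$ that stays in $\mathcal{D}_{k}$ and \emph{simultaneously} decreases $d_{\text{u}}$ and $d_{\text{b}}$.

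To build such a move, note $\nabla d_{\text{u}}(\mathbf{x}^{\star})=(\mathbf{x}^{\star}-\mathbf{x}_{\text{u}})/d_{\text{u}}(\mathbf{x}^{\star})$ and $\nabla d_{\text{b}}(\mathbf{x}^{\star})=(\mathbf{x}^{\star}-\mathbf{x}_{\text{b}})/d_{\text{b}}(\mathbf{x}^{\star})$. Since $\mathbf{x}^{\star}$ is off the BS--user axis, the vectors $\mathbf{x}^{\star}-\mathbf{x}_{\text{u}}$ and $\mathbf{x}^{\star}-\mathbf{x}_{\text{b}}$ are nonzero and linearly independent in $\mathbb{R}^{2}$, in particular not anti-parallel, so for
\[
\mathbf{v}=-\frac{\mathbf{x}^{\star}-\mathbf{x}_{\text{u}}}{\|\mathbf{x}^{\star}-\mathbf{x}_{\text{u}}\|}-\frac{\mathbf{x}^{\star}-\mathbf{x}_{\text{b}}}{\|\mathbf{x}^{\star}-\mathbf{x}_{\text{b}}\|}
\]
the Cauchy--Schwarz inequality gives $\mathbf{v}^{\mathsf{T}}(\mathbf{x}^{\star}-\mathbf{x}_{\text{u}})<0$ and $\mathbf{v}^{\mathsf{T}}(\mathbf{x}^{\star}-\mathbf{x}_{\text{b}})<0$, i.e.\ $\mathbf{v}$ is a common strict descent direction for $d_{\text{u}}$ and $d_{\text{b}}$. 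Because $\mathbf{x}^{\star}$ is interior to $\mathcal{D}_{k}$, for all sufficiently small $\epsilon>0$ the point $\mathbf{x}^{\star}+\epsilon\mathbf{v}$ stays in $\mathcal{D}_{k}$ and, by the directional derivatives just computed, satisfies $d_{\text{u}}(\mathbf{x}^{\star}+\epsilon\mathbf{v})<d_{\text{u}}(\mathbf{x}^{\star})$ and $d_{\text{b}}(\mathbf{x}^{\star}+\epsilon\mathbf{v})<d_{\text{b}}(\mathbf{x}^{\star})$; hence $g_{\text{u}},g_{\text{b}}$ both strictly increase and the cost strictly decreases — the desired contradiction.

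The step that needs care is the topological bookkeeping rather than any hard estimate: one must fix what ``on the boundary between two propagation segments'' means and use that a genuinely interior point of $\mathcal{D}_{k}$ has a whole neighborhood in $\mathcal{D}_{k}$, so that the perturbation is feasible and the cost is locally the smooth single-segment expression (the objective $f(g_{\text{u}},g_{\text{b}})$ can jump across segment boundaries, so the argument genuinely requires $\mathbf{x}^{\star}$ interior). A secondary point is strictness of the monotonicity of $f$: if $f$ is only weakly decreasing, the same construction still lets one slide $\mathbf{x}^{\star}$ while keeping the cost optimal until a segment boundary or the axis is hit, so the conclusion should then be read as ``there exists an optimal solution'' of the claimed form.
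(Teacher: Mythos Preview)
Your proof is correct and follows essentially the same contradiction argument as the paper: assume $\mathbf{x}^{\star}$ lies strictly inside some $\mathcal{D}_{k}$ and off the BS--user axis, then exhibit a small perturbation that stays in $\mathcal{D}_{k}$ and decreases both $d_{\text{u}}$ and $d_{\text{b}}$, hence strictly lowers the cost. Your version is more explicit than the paper's (you actually construct the descent direction $\mathbf{v}$ and verify via Cauchy--Schwarz that it works, and you flag the topological and strictness caveats), but the underlying idea is identical.
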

\begin{proof}
Suppose there is a solution $\mathbf{x}$ which is strictly inside
a propagation segment $\mathcal{D}_{k}$ and is off the BS-user axis.
Then there exists a direction $\bm{\delta}$, such that for a sufficiently
small $\epsilon>0$, the new UAV position $\mathbf{x}+\epsilon\bm{\delta}$
decreases the distances to both the user and the BS, while at the
same time, $\mathbf{x}+\epsilon\bm{\delta}\in\mathcal{D}_{k}$. From
the nested segmented model (\ref{eq:uav-user-channel-model}) and
also the UAV-BS channel model (\ref{eq:UAV-BS-channel-model}), smaller
distances $d_{\text{u}}$ and $d_{\text{b}}$ imply larger channel
gains $g_{\text{u}}$ and $g_{\text{b}}$. Due to the monotonicity
property of the cost function $f(g_{\text{u}},g_{\text{b}})$, larger
channel gains yield a smaller cost value, which implies that $\mathbf{x}$
is not the optimal solution. By contradiction, the proposition is
therefore confirmed. 
\end{proof}

However, it is still costly to search along the segment boundaries
because the boundaries may have complex shapes, resulting in an unacceptably
long trajectory (which could be super-linear to the radius of the
search area) for the UAV as demonstrated in Fig.~\ref{fig:urban-city-map}. 

\subsection{Polar Representation}

\label{subsec:polar-representation}

\begin{figure}
\begin{centering}
\includegraphics[width=0.7\columnwidth]{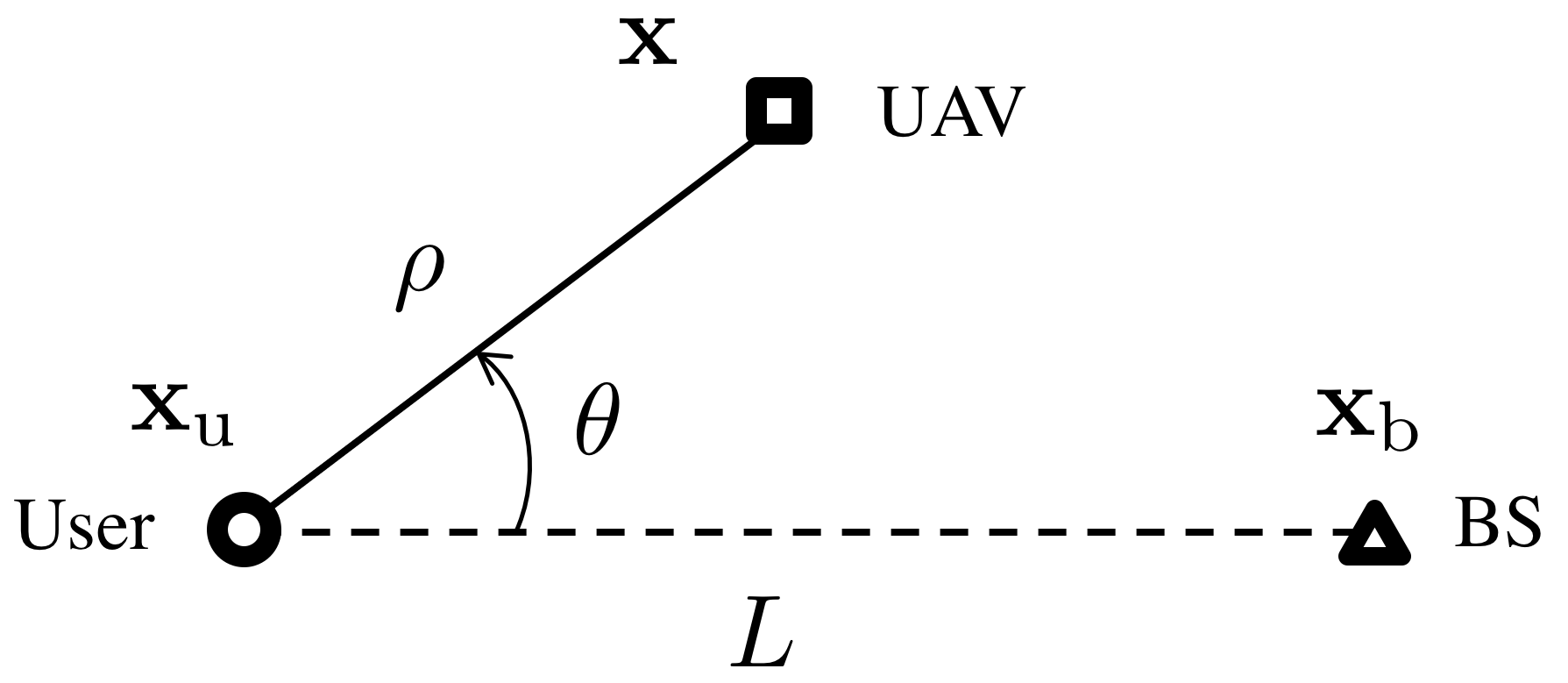}
\par\end{centering}
\caption{\label{fig:polar-coordinate} Illustration of a polar representation
of the UAV position $\mathbf{x}$.}
\end{figure}
To develop a more efficient search strategy, we transform the problem
into the one expressed over a polar coordinate system.

Let $\rho=\|\mathbf{x}-\mathbf{x}_{\text{u}}\|$ be the ground projected
distance from the user at $(\mathbf{x}_{\text{u}},H_{\text{u}})$
to the UAV at $(\mathbf{x},H_{\text{d}})$. Let $\theta\in(-\pi,\pi)$
be the \emph{deviation angle} from the user-to-BS direction to the
user-to-UAV direction as illustrated in Fig.~\ref{fig:polar-coordinate}.
Denote $\mathbf{u}=(u_{1},u_{2})\triangleq\frac{\mathbf{x}_{\text{b}}-\mathbf{x}_{\text{u}}}{\|\mathbf{x}_{\text{b}}-\mathbf{x}_{\text{u}}\|}$
as the normalized user-to-BS direction. The UAV position $\mathbf{x}\in\mathbb{R}^{2}$
can be equivalently expressed by $(\rho,\theta)$ as 
\begin{equation}
\mathbf{x}(\rho,\theta)=\mathbf{x}_{\text{u}}+\rho\mathbf{M}(\theta)\mathbf{u}\label{eq:polar-representation-xy-plane}
\end{equation}
where 
\begin{equation}
\mathbf{M}(\theta)=\left[\begin{array}{ll}
\cos\theta & -\sin\theta\\
\sin\theta & \cos\theta
\end{array}\right]\label{eq:rotation-matrix}
\end{equation}
is a rotation matrix, and 
\begin{equation}
\theta=\text{sign}(z_{2}u_{1}-z_{1}u_{2})\cdot\text{arccos}\left(\mathbf{z}^{\text{T}}\mathbf{u}/\rho\right)\label{eq:theta}
\end{equation}
in which $\mathbf{z}=(z_{1},z_{2})\triangleq\mathbf{x}-\mathbf{x}_{\text{u}}$,
$\text{sign}(x)=1$ if $x>0$, and $\text{sign}(x)=-1$, otherwise. 

%

%

We now define an alternative expression for the cost function $f(g_{\text{u}}(\mathbf{x}),g_{\text{b}}(\mathbf{x}))$.
\begin{defn}
[Fictitious Segment Cost Function]\label{def:Fictitious-Segment-Cost}
For $\mathbf{x}(\rho,\theta)\in\mathcal{D}_{k}$, $k=1,2,\dots,K$,
the cost function $f(g_{\text{u}}(\mathbf{x}),g_{\text{b}}(\mathbf{x}))$
can be written as 
\[
f(g_{\text{u}}(\mathbf{x}),g_{\text{b}}(\mathbf{x}))=F_{k}(\rho,\theta)
\]
where 
\begin{equation}
F_{k}(\rho,\theta)\triangleq f\big(g_{\text{u}}^{(k)}(\mathbf{x}(\rho,\theta)),g_{\text{b}}(\mathbf{x}(\rho,\theta))\big)\label{eq:Fk}
\end{equation}
and $g_{\text{u}}^{(k)}(\mathbf{x})\triangleq\beta_{k}d_{\text{u}}(\mathbf{x})^{-\alpha_{k}}$
is the UAV-user channel in the $k$th segment from the channel model
(\ref{eq:uav-user-channel-model}). 
\end{defn}

As a result, the objective function in $\mathscr{P}$ is transformed
into the polar domain as $F(\rho,\theta)=\sum_{k=1}^{K}F_{k}(\rho,\theta)\mathbb{I}\{\mathbf{x}(\rho,\theta)\in\mathcal{D}_{k}\}$. 

The motivation of working on the polar domain is that by fixing the
deviation $\theta$, increasing $\rho$ only worsens the propagation
condition according to the nested property (\ref{eq:nested-segmented-model-order-condition})\textendash (\ref{eq:nested-segmented-model-nested-condition}).
In addition, the overall cost function $F(\rho,\theta)$ is discontinuous
because it contains the indicator functions, but the functions $F_{k}(\rho,\theta)$
are continuous. As result, the functions $F_{k}(\rho,\theta)$ can
be used to derive search trajectories.

\subsection{Search Trajectory Design for $K=2$}

\label{subsec:algorithm-K2-case}

The algorithm is better illustrated starting from the two segment
case, where $\mathcal{D}_{1}$ corresponds to the LOS segment and
$\mathcal{D}_{2}$ corresponds to the \ac{nlos} segment. 

\subsubsection{Search on the BS-user Axis}

Let the UAV start from the BS. It first moves towards the user until
it finds two critical positions (if they exist) $\mathbf{x}_{k}^{0}=\mathbf{x}(\rho_{k}^{0},0)$,
$k=1,2$, which correspond to the points achieving the minimum cost
over the BS-user axis in the LOS region and NLOS region, respectively.
Specifically, the parameters $\rho_{k}^{0}$ are the solutions that
minimize the fictitious cost $F_{k}(\rho,0)$ along the BS-user axis
$\mathbf{x}(\rho,0)\in\mathcal{D}_{k}$, for $k=1,2$, in the two
segment case. 

For example, when the UAV is initially in the NLOS region, it can
move up to the LOS-NLOS boundary (see Section \ref{subsec:Propagation-Segment-Detection}
for a discussion on the detection method); with that, it can solve
for $\rho_{1}^{0}$ and $\rho_{2}^{0}$ to obtain the critical points
$\mathbf{x}_{1}^{0}$ and $\mathbf{x}_{2}^{0}$. On the other hand,
when the UAV is initially in the LOS region, it can compute the critical
position $\mathbf{x}_{1}^{0}$ in the LOS region, while $\mathbf{x}_{2}^{0}$
does not exist. 

\subsubsection{Search on the Right Branch}

\label{subsec:search-right-branch}

Starting from the critical position $\mathbf{x}_{1}^{0}$ which minimizes
the cost function on the LOS portion of the BS-user axis, the UAV
first moves to $\mathbf{x}(\rho_{1}^{0},\delta/\rho_{1}^{0})$, \emph{i.e.},
a position just on the right of $\mathbf{x}_{1}^{0}$ in Fig. \ref{fig:UAV-search-path}
(a), where $\delta$ is a chosen step size. It then proceeds according
to the following two phases; at the same time, it keeps the track
record of the minimum cost value $F_{\min}$ discovered and the corresponding
position $\hat{\mathbf{x}}(\hat{\rho},\hat{\theta})$ that achieves
$F_{\min}=F(\hat{\rho},\hat{\theta})$:
\begin{itemize}
\item If the UAV is in the LOS region, it moves away from the user. Specifically,
it moves from $\mathbf{x}(\rho,\theta)$ to $\mathbf{x}(\rho+\delta,\theta)$.
\item If the UAV is in the NLOS region, it moves in the direction that maintains
the same fictitious cost $F_{1}(\rho,\theta)$ as it were in the LOS
region, \emph{i.e.}, contour of $F_{1}(\rho,\theta)=C$ specified
by\footnote{For mathematical completeness, the partial derivative is defined as
$\frac{\partial f(x_{0},y_{0})}{\partial x}=\lim_{t\uparrow0}\frac{1}{t}[f(x_{0}+t,y_{0})-f(x_{0},y_{0})]$
throughout this paper.} 
\[
\frac{\partial F_{1}(\rho,\theta)}{\partial\rho}d\rho+\frac{\partial F_{1}(\rho,\theta)}{\partial\theta}d\theta=0.
\]
It can be shown that (Lemma \ref{lem:partial-optimality} in Appendix
\ref{sec:app-pf-thm-convergence}), $\partial F_{1}(\rho,\theta)/\partial\theta\neq0$
for $\theta\neq0$. As a result from (\ref{eq:polar-representation-xy-plane}),
we have 
\[
d\mathbf{x}=\mathbf{M}(\theta)\mathbf{u}d\rho+\rho\frac{d}{d\theta}\mathbf{M}(\theta)\mathbf{u}\Big(-\frac{\partial F_{1}}{\partial\theta}\Big)^{-1}\frac{\partial F_{1}}{\partial\rho}d\rho.
\]
Thus, the UAV updates its position from $\mathbf{x}$ to $\mathbf{x}+\Delta\mathbf{x}$,
where 
\[
\Delta\mathbf{x}=\gamma\Big[\mathbf{M}(\theta)\mathbf{u}+\rho\frac{d}{d\theta}\mathbf{M}(\theta)\mathbf{u}\Big(-\frac{\partial F_{1}}{\partial\theta}\Big)^{-1}\frac{\partial F_{1}}{\partial\rho}\Big]
\]
in which, $\gamma>0$ is chosen such that the step size $\|\Delta\mathbf{x}\|=\delta$
and the UAV moves in the direction away from the user (see Fig. \ref{fig:UAV-search-path}).
\end{itemize}
The search at this branch is completed whenever the UAV reaches a
point $\mathbf{x}(\rho,\theta)$ such that either $\partial F_{1}(\rho,\theta)/\partial\rho\geq0$
or $\rho\geq L\cos\theta$, where $L\triangleq\|\mathbf{x}_{\text{b}}-\mathbf{x}_{\text{u}}\|$.
The justification of the two stopping criteria will become clear in
Section \ref{subsec:global-optimality}.

\begin{figure}
\begin{centering}
\subfigure[A search path (solid black) shown in the polar domain]{\includegraphics[width=0.8\columnwidth]{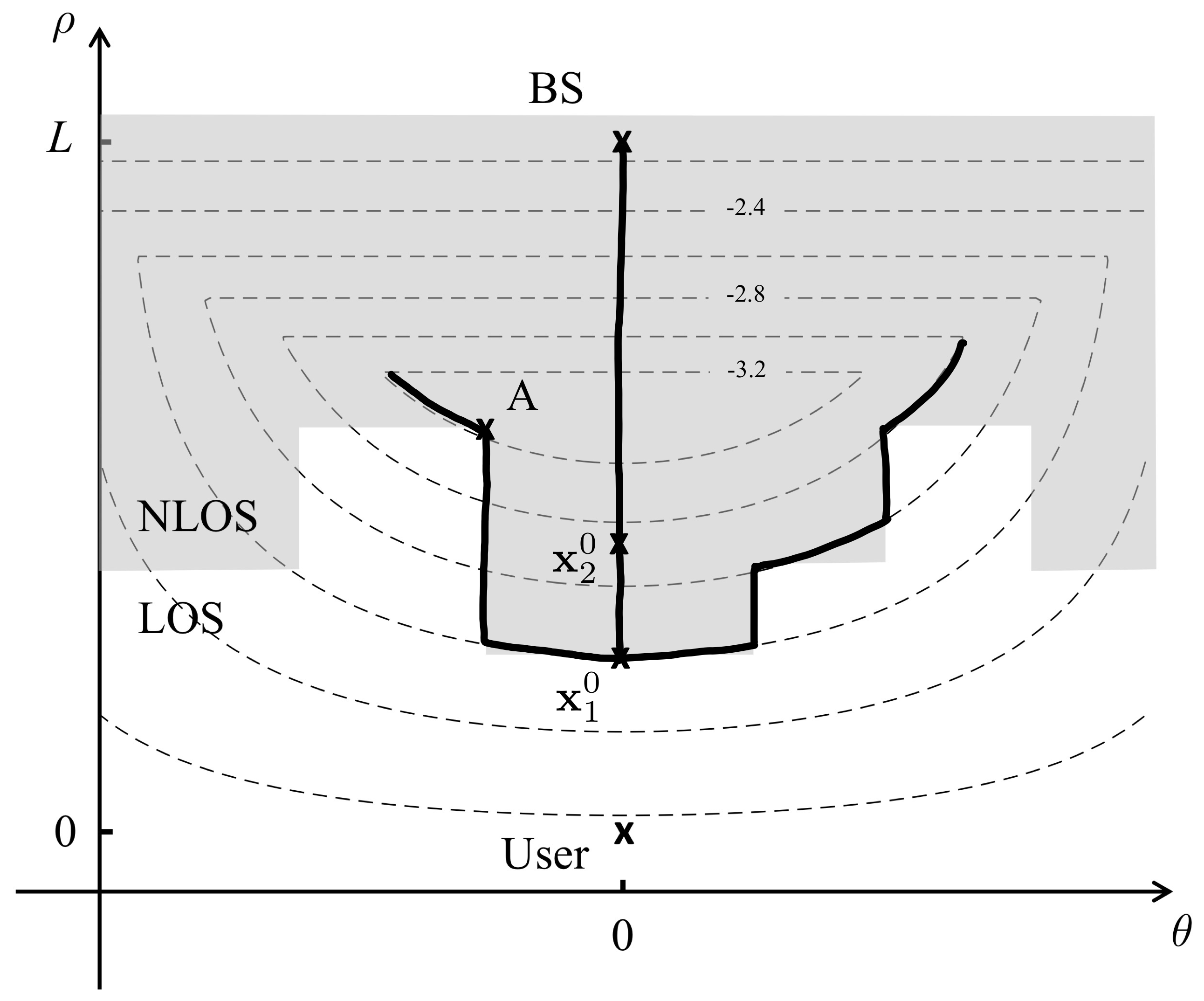}}
\subfigure[A search path (green) shown in the Euclidean domain]{
\includegraphics[width=1\columnwidth]{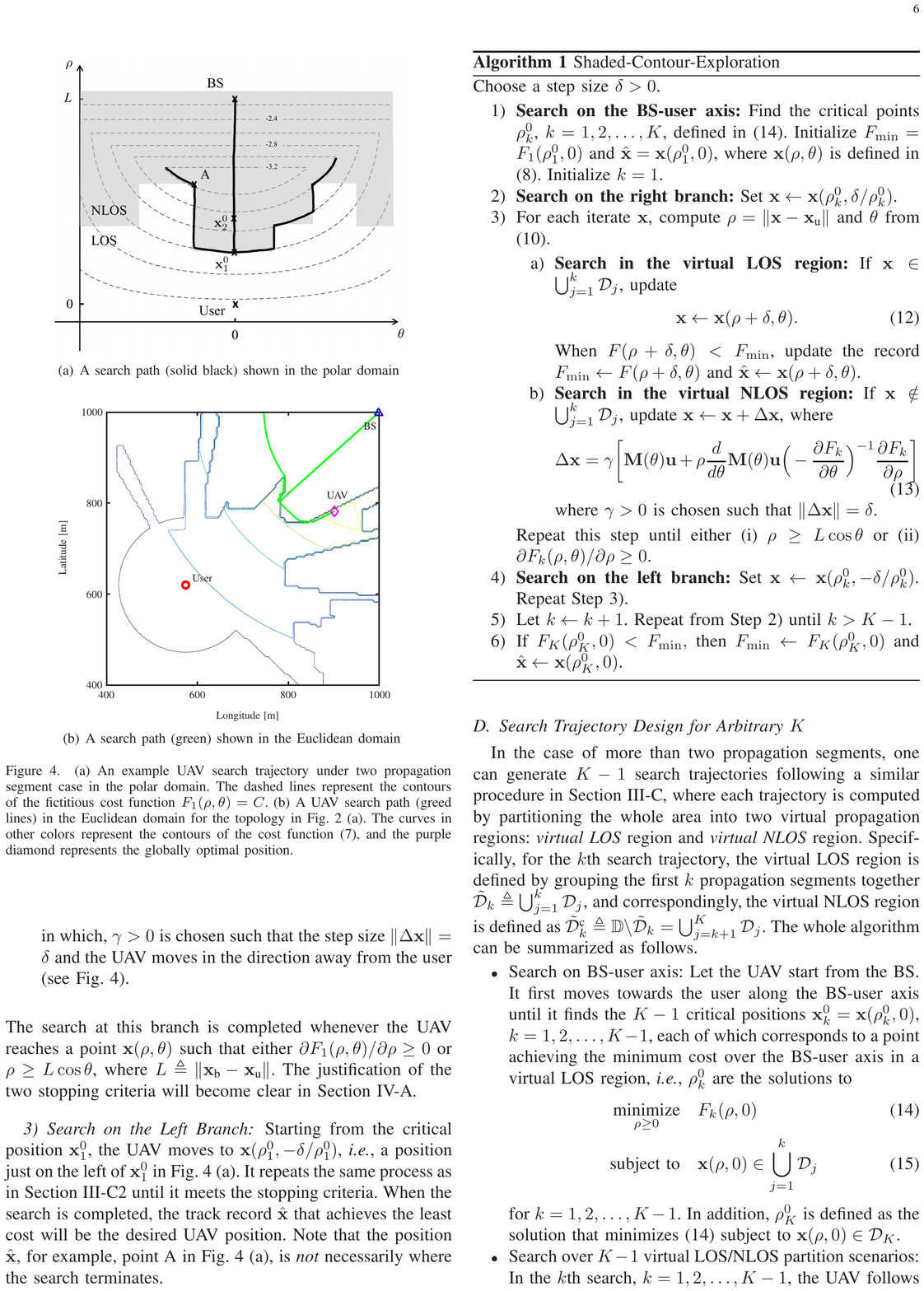}}
\par\end{centering}
\caption{\label{fig:UAV-search-path} (a) An example UAV search trajectory
under two propagation segment case in the polar domain. The dashed
lines represent the contours of the fictitious cost function $F_{1}(\rho,\theta)=C$.
(b) A UAV search path (greed lines) in the Euclidean domain for the
topology in Fig. \ref{fig:urban-city-map} (a). The curves in other
colors represent the contours of the cost function (\ref{eq:drone-rate-maximization}),
and the purple diamond represents the globally optimal position.}
\end{figure}
\begin{algorithm}
Choose a step size $\delta>0$. 
\begin{enumerate}
\item \label{enu:alg-initialization} \textbf{Search on the BS-user axis:}
Find the critical points $\rho_{k}^{0}$, $k=1,2,\dots,K$, defined
in (\ref{eq:problem-critical-point-K}). Initialize $F_{\min}=F_{1}(\rho_{1}^{0},0)$
and $\hat{\mathbf{x}}=\mathbf{x}(\rho_{1}^{0},0)$, where $\mathbf{x}(\rho,\theta)$
is defined in (\ref{eq:polar-representation-xy-plane}). Initialize
$k=1$. 
\item \label{enu:alg-two-phase-update} \textbf{Search on the right branch:}
Set $\mathbf{x}\leftarrow\mathbf{x}(\rho_{k}^{0},\delta/\rho_{k}^{0})$. 
\item \label{enu:alg-two-phase-update-loop}For each iterate $\mathbf{x}$,
compute $\rho=\|\mathbf{x}-\mathbf{x}_{\text{u}}\|$ and $\theta$
from (\ref{eq:theta}).
\begin{enumerate}
\item \textbf{Search in the virtual LOS region:} If $\mathbf{x}\in\bigcup_{j=1}^{k}\mathcal{D}_{j}$,
update 
\begin{equation}
\mathbf{x}\leftarrow\mathbf{x}(\rho+\delta,\theta).\label{eq:alg-eq1}
\end{equation}

When $F(\rho+\delta,\theta)<F_{\min}$, update the record $F_{\min}\leftarrow F(\rho+\delta,\theta)$
and $\hat{\mathbf{x}}\leftarrow\mathbf{x}(\rho+\delta,\theta)$.
\item \textbf{Search in the virtual NLOS region:} If $\mathbf{x}\notin\bigcup_{j=1}^{k}\mathcal{D}_{j}$,
update $\mathbf{x}\leftarrow\mathbf{x}+\Delta\mathbf{x}$, where 
\begin{equation}
\Delta\mathbf{x}=\gamma\bigg[\mathbf{M}(\theta)\mathbf{u}+\rho\frac{d}{d\theta}\mathbf{M}(\theta)\mathbf{u}\Big(-\frac{\partial F_{k}}{\partial\theta}\Big)^{-1}\frac{\partial F_{k}}{\partial\rho}\bigg]\label{eq:alg-eq2}
\end{equation}
where $\gamma>0$ is chosen such that $\|\Delta\mathbf{x}\|=\delta$. 
\end{enumerate}
Repeat this step until either (i) $\rho\geq L\cos\theta$ or (ii)
$\partial F_{k}(\rho,\theta)/\partial\rho\geq0$. 
\item \textbf{\label{enu:alg-two-phase-upadte-loop-left}Search on the left
branch:} Set $\mathbf{x}\leftarrow\mathbf{x}(\rho_{k}^{0},-\delta/\rho_{k}^{0})$.
Repeat Step \ref{enu:alg-two-phase-update-loop}).
\item Let $k\leftarrow k+1$. Repeat from Step \ref{enu:alg-two-phase-update})
until $k>K-1$. 
\item \label{enu:alg-final-step}If $F_{K}(\rho_{K}^{0},0)<F_{\min},$ then
$F_{\min}\leftarrow F_{K}(\rho_{K}^{0},0)$ and $\hat{\mathbf{x}}\leftarrow\mathbf{x}(\rho_{K}^{0},0)$. 
\end{enumerate}
\caption{\label{alg:uav-positioning} Shaded-Contour-Exploration}
\end{algorithm}

\subsubsection{Search on the Left Branch}

\label{subsec:search-left-branch}

Starting from the critical position $\mathbf{x}_{1}^{0}$, the UAV
moves to $\mathbf{x}(\rho_{1}^{0},-\delta/\rho_{1}^{0})$, \emph{i.e.},
a position just on the left of $\mathbf{x}_{1}^{0}$ in Fig. \ref{fig:UAV-search-path}
(a). It repeats the same process as in Section \ref{subsec:search-right-branch}
until it meets the stopping criteria. When the search is completed,
the track record $\hat{\mathbf{x}}$ that achieves the least cost
will be the desired UAV position. Note that the position $\hat{\mathbf{x}}$,
for example, point A in Fig. \ref{fig:UAV-search-path} (a), is \emph{not}
necessarily where the search terminates.

\subsection{Search Trajectory Design for Arbitrary $K$}

\label{subsec:algorithm-search-K}

In the case of more than two propagation segments, one can generate
$K-1$ search trajectories following a similar procedure in Section
\ref{subsec:algorithm-K2-case}, where each trajectory is computed
by partitioning the whole area into two virtual propagation regions:
\emph{virtual LOS} region and \emph{virtual NLOS} region. Specifically,
for the $k$th search trajectory, the virtual LOS region is defined
by grouping the first $k$ propagation segments together $\tilde{\mathcal{D}}_{k}\triangleq\bigcup_{j=1}^{k}\mathcal{D}_{j}$,
and correspondingly, the virtual NLOS region is defined as $\tilde{\mathcal{D}}_{k}^{\text{c}}\triangleq\mathbb{D}\backslash\tilde{\mathcal{D}}_{k}=\bigcup_{j=k+1}^{K}\mathcal{D}_{j}$.
The whole algorithm can be summarized as follows.
\begin{itemize}
\item Search on BS-user axis: Let the UAV start from the BS. It first moves
towards the user along the BS-user axis until it finds the $K-1$
critical positions $\mathbf{x}_{k}^{0}=\mathbf{x}(\rho_{k}^{0},0)$,
$k=1,2,\dots,K-1$, each of which corresponds to a point achieving
the minimum cost over the BS-user axis in a virtual LOS region, \emph{i.e.},
$\rho_{k}^{0}$ are the solutions to 
\begin{align}
\underset{\rho\geq0}{\text{minimize}} & \quad F_{k}(\rho,0)\label{eq:problem-critical-point-K}\\
\text{subject to} & \quad\mathbf{x}(\rho,0)\in\bigcup_{j=1}^{k}\mathcal{D}_{j}\label{eq:problem-critical-point-K-constraint}
\end{align}
for $k=1,2,\dots,K-1$. In addition, $\rho_{K}^{0}$ is defined as
the solution that minimizes (\ref{eq:problem-critical-point-K}) subject
to $\mathbf{x}(\rho,0)\in\mathcal{D}_{K}$. 
\item Search over $K-1$ virtual LOS/NLOS partition scenarios: In the $k$th
search, $k=1,2,\dots,K-1$, the UAV follows a similar procedure as
that in Sections \ref{subsec:search-right-branch} and \ref{subsec:search-left-branch}
for virtual LOS region $\tilde{\mathcal{D}}_{k}$ and virtual NLOS
region $\tilde{\mathcal{D}}_{k}^{\text{c}}$.
\item Integration: During the whole search, the UAV keeps track of the minimum
achievable cost $F_{\min}$ and the corresponding position $\hat{\mathbf{x}}$.
When the algorithm terminates, $\hat{\mathbf{x}}$ gives the desired
UAV position. 
\end{itemize}

The entire search algorithm is summarized in Algorithm \ref{alg:uav-positioning}.
An example search trajectory under the $K=2$ model in the polar coordinate
system is visualized in Fig.~\ref{fig:UAV-search-path} (a), where
the black curve represents the search trajectory and the dashed gray
curves represent the contours of the cost function $F_{1}(\rho,\theta)$.

\subsection{Propagation Segment Detection}

\label{subsec:Propagation-Segment-Detection}

As the parameters $a_{k}$ and $b_{k}$ and the statistics of $\xi_{k}$
in (\ref{eq:uav-user-channle-model-with-noise}) are assumed known
as discussed in Section \ref{subsec:nested-segmented-model}, the
propagation segment can be determined using maximum likelihood detection.
Let $h_{k}$ be the probability distribution function of the random
variable $\xi_{k}$ in propagation segment $k$. Then, the maximum
likelihood estimator of the propagation segment is given by 
\[
\hat{k}=\underset{\scriptsize k=1,2,\dots,K}{\text{argmax}}\;h_{k}(y-b_{k}+a_{k}\log_{10}d(\mathbf{x}))
\]
where $y$ is the channel gain $G_{\text{dB}}$ measured at the UAV
location $\mathbf{x}$. In addition, if $\xi_{k}$ are zero mean Gaussian
distributed with variance $\sigma_{k}^{2}$, then the detection rule
can be simplified as 
\[
\hat{k}=\underset{\scriptsize k=1,2,\dots,K}{\text{argmin}}\;\frac{1}{\sigma_{k}}\left|y-b_{k}+a_{k}\log_{10}d(\mathbf{x})\right|.
\]

Note that, in practice, the parameters $a_{k}$ and $b_{k}$ and the
statistics of $\xi_{k}$ can be estimated in a separate training phase.
Alternatively, the parameter estimation can be integrated in Step
1 in Algorithm \ref{alg:uav-positioning}. Specifically, the UAV moves
along the BS-user axis (and, optionally, to a few random locations)
to collect enough channel measurements. Then, a maximum likelihood
estimation method can be performed to learn these parameters \cite{CheYanGes:C16,CheEsrGesMit:C17}.

\section{Global Optimality and Linear Search Length}

\label{sec:global-convergence}

\textcolor{red}{}

It turns out that Algorithm \ref{alg:uav-positioning} can find the
globally optimal solution to $\mathscr{P}$ for at least two types
of cost functions $f$. 
\begin{condition}
Assume that the cost function $f(x,y)$ in $\mathscr{P}$ satisfies
$\frac{\partial^{2}f(x,y)}{\partial x\partial y}=0$, 
\begin{align}
x\frac{\partial^{2}f(x,y)}{\partial x^{2}}+2\frac{\partial f(x,y)}{\partial x} & \geq0,\quad y\frac{\partial^{2}f(x,y)}{\partial y^{2}}+2\frac{\partial f(x,y)}{\partial y}\geq0\label{eq:cost-function-condition-1}
\end{align}
for every $x,y>0$. 
\end{condition}

It can be easily verified that the cost function (\ref{eq:outage-minimization})
in the outage probability minimization example in Section \ref{subsec:example-AF}
satisfies Condition 1. 
\begin{condition}
Assume that the cost function $f(x,y)$ in $\mathscr{P}$ can be written
as $\max\{f_{1}(x),f_{2}(y)\}$, where $f_{1}(x)$ and $f_{2}(y)$
are decreasing functions. 
\end{condition}

It is also clear that the cost function (\ref{eq:drone-rate-maximization})
in the rate maximization example in Section \ref{subsec:example-DF}
satisfies Condition 2.

In addition, we discuss optimality for continuous-time algorithm trajectory
$\mathbf{x}(t)$, which can be obtained from Algorithm \ref{alg:uav-positioning}
using infinitesimal step size $\delta=\mathcal{O}(dt)$ at each infinitesimal
time slot $dt$. Specifically, the search trajectory $\mathbf{x}(t)$
in Algorithm \ref{alg:uav-positioning} can be described by piece-wise
continuous dynamic systems, where one replaces $\delta$ by $\kappa dt$
in (\ref{eq:alg-eq1}) and $\gamma$ by $\kappa\bar{\gamma}dt$ in
(\ref{eq:alg-eq2}), in which $\kappa$ is a parameter that specifies
the moving speed of the UAV. Accordingly, the continuous-time processes
of the minimum cost $F_{\min}(t)$ and the position track record $\hat{\mathbf{x}}(t)$
are given by $F_{\min}(t)=\text{minimize}_{0\leq\tau\leq t}\;f(g_{\text{u}}(\mathbf{x}(\tau),g_{\text{b}}(\mathbf{x}(\tau))$
and $\hat{\mathbf{x}}(t)=\mathbf{x}(\hat{\tau})$, respectively, where
$\hat{\tau}=\arg\min_{0\leq\tau\leq t}\;f(g_{\text{u}}(\mathbf{x}(\tau),g_{\text{b}}(\mathbf{x}(\tau))$. 

\subsection{Global Optimality}

\label{subsec:global-optimality}

We first present the main optimality result as follows. \textcolor{red}{}
\begin{thm}
[Global Optimality]\label{thm:convergence-two-segment} Suppose
that the cost function $f$ in $\mathscr{P}$ satisfies either Condition
1 or Condition 2. Then, $\hat{\mathbf{x}}(t)$ in Algorithm \ref{alg:uav-positioning}
converges to the globally optimal solution to $\mathscr{P}$ and $F_{\min}(t)$
converges to the minimum cost value in finite time. 
\end{thm}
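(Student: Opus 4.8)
The plan is to reduce, via Proposition~\ref{prop:Solution-Property}, to two exhaustive cases for the global optimum $\mathbf{x}^{*}=\mathbf{x}(\rho^{*},\theta^{*})$ --- either $\mathbf{x}^{*}$ lies on the BS--user axis, or it lies on the boundary of some propagation segment --- and to show in each case that the trajectory $\mathbf{x}(t)$ visits, in finite time, a point whose cost does not exceed $f(g_{\text{u}}(\mathbf{x}^{*}),g_{\text{b}}(\mathbf{x}^{*}))$; since $F_{\min}(t)$ is nonincreasing, lower bounded by the optimal value, and always equal to an attained cost, this suffices. The groundwork is a set of structural facts about the fictitious costs, which follow from Conditions~1/2 together with the elementary identities $d_{\text{u}}(\mathbf{x}(\rho,\theta))^{2}=\rho^{2}+(H_{\text{d}}-H_{\text{u}})^{2}$ and $d_{\text{b}}(\mathbf{x}(\rho,\theta))^{2}=\rho^{2}+L^{2}-2\rho L\cos\theta+(H_{\text{d}}-H_{\text{b}})^{2}$, and are essentially the content of Lemma~\ref{lem:partial-optimality}: (i) $F_{k}(\rho,\theta)$ is strictly increasing in $|\theta|$ for fixed $\rho$ (hence $\partial F_{k}/\partial\theta$ has the sign of $\theta$, as used in Section~\ref{subsec:search-right-branch}); (ii) for fixed $\theta$, $F_{k}(\cdot,\theta)$ is unimodal with minimizer $\rho_{k}^{\dagger}(\theta)\le L\cos\theta$, and $\partial F_{k}/\partial\rho\ge 0$ whenever $\rho\ge L\cos\theta$; (iii) $F_{i}\le F_{k}$ pointwise for $i\le k$ by~(\ref{eq:nested-segmented-model-order-condition}), while by the nesting condition~(\ref{eq:nested-segmented-model-nested-condition}) the segment index is nondecreasing along any ray from the user, so $\{\rho:\mathbf{x}(\rho,\theta)\in\tilde{\mathcal{D}}_{j}\}$ is an interval $[0,R_{j}(\theta)]$. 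A first consequence is that a boundary optimum must satisfy $\rho^{*}\le L\cos\theta^{*}$ and $|\theta^{*}|<\pi/2$: otherwise $\partial F_{j}/\partial\rho>0$ at $\mathbf{x}^{*}$ (with $\mathcal{D}_{j}$ the better adjoining segment), so nudging $\mathbf{x}^{*}$ toward the user along its ray stays in $\mathcal{D}_{j}$ and strictly lowers the cost.

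For the axis case, the true cost at axis-radius $\rho$ is $F_{m(\rho)}(\rho,0)$ with $m(\cdot)$ nondecreasing, so by~(iii) the axis optimum restricted to $\mathcal{D}_{k}$ is bounded below by $F_{k}(\rho_{k}^{0},0)$, where $\rho_{k}^{0}$ is the critical point from~(\ref{eq:problem-critical-point-K})--(\ref{eq:problem-critical-point-K-constraint}); moreover the true cost at $\mathbf{x}(\rho_{k}^{0},0)$ is at most $F_{k}(\rho_{k}^{0},0)$ (that point may only lie in a lower-index segment), and $\mathbf{x}(\rho_{k}^{0},0)$ is visited --- through the initialization for $k=1$, through the near-axis start $\mathbf{x}(\rho_{k}^{0},\delta/\rho_{k}^{0})$ of the $k$th search in the infinitesimal-step limit for $1<k<K$, and through Step~\ref{enu:alg-final-step} for $k=K$. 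Minimizing over $k$ yields $F_{\min}\le\min_{\rho}F(\rho,0)$.

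The crux is the boundary case. Let $\mathcal{D}_{j}$ be the lowest-index (best) segment adjoining $\mathbf{x}^{*}$, so $j\le K-1$ and the cost at $\mathbf{x}^{*}$ equals $C^{*}:=F_{j}(\rho^{*},\theta^{*})$; assume $\theta^{*}>0$, the left branch being symmetric. If $\rho_{j}^{0}$ is an interior minimizer of $F_{j}(\cdot,0)$, then by unimodality it is the global minimizer of $F_{j}(\cdot,0)$ on $[0,\infty)$, so $C^{*}\ge F_{j}(\rho^{*},0)\ge F_{j}(\rho_{j}^{0},0)\ge F_{\min}$ and we are done. Otherwise $\rho_{j}^{0}=R_{j}(0)$, the $j$th right-branch trajectory $\Gamma$ is nontrivial, and I would use three properties of $\Gamma$: it is nonincreasing in $F_{j}$ and every point of it has $F_{j}$-value at least some cost already recorded (on virtual-LOS pieces the true cost is $\le F_{j}$ there, and each virtual-NLOS contour value equals $F_{j}$ at the point where the contour was entered); its deviation angle is nondecreasing, strictly so on the contour pieces, which by~(ii) lie on the branch $\rho\le\rho_{j}^{\dagger}(\theta)$ where $\partial F_{j}/\partial\rho<0$, so there $\rho$ grows with $\theta$; and it stops only at $\rho\ge L\cos\theta$ or at the tip of its current contour, where $\partial F_{j}/\partial\rho$ turns nonnegative. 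Then either $C^{*}$ exceeds the smallest contour value $\Gamma$ ever follows (which is recorded) and we are done, or $C^{*}$ is at most that value, which by~(i)--(ii) together with $\rho^{*}\le L\cos\theta^{*}$ forces $\theta^{*}$ into the angular range swept by $\Gamma$; inspecting the point of $\Gamma$ at angle $\theta^{*}$: if it is a virtual-LOS point, it sits at radius $\le\rho^{*}=R_{j}(\theta^{*})$, and the fixed-angle outward motion carries $\Gamma$ either all the way to $\mathbf{x}^{*}$ (visited) or to $\rho_{j}^{\dagger}(\theta^{*})$, where $C^{*}\ge F_{j}(\rho_{j}^{\dagger}(\theta^{*}),\theta^{*})\ge F_{\min}$; if it is a virtual-NLOS point, then $\rho^{*}$ lies strictly radially inside $\Gamma$ on the decreasing branch, so $C^{*}$ is at least the value of $F_{j}$ at that point of $\Gamma$, hence at least $F_{\min}$. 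In every sub-case $F_{\min}\le C^{*}$, so $\hat{\mathbf{x}}(t)$ reaches the global optimum.

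Finite-time convergence follows because every search trajectory is confined, by stopping rule~(i), to the bounded lens $\{\,0\le\rho\le L\cos\theta,\ |\theta|<\pi/2\,\}$, its deviation angle is monotone, and $F_{j}$ strictly decreases on each radial piece; hence it is non-self-intersecting and has length $O(L)$ --- the claimed linear search length --- so all $K-1$ searches and the axis search terminate in finite time, after which $F_{\min}(t)$ and $\hat{\mathbf{x}}(t)$ are constant at the optimum. The main obstacle is the closing step of the boundary case: proving that no point of the virtual-NLOS region lying beyond the traced contour can undercut a recorded cost, which requires simultaneous use of the unimodality in~(ii) (hence of Conditions~1/2), the precise contour-following direction, and both stopping rules, together with the bookkeeping that ties every contour value to an attained cost; a secondary nuisance is the continuous-time/discrete-step correspondence and the verification that mode switches occur only finitely often, which I would settle from the monotonicity of $\theta$ and $F_{j}$ along $\Gamma$.
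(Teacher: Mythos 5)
Your proposal is correct and follows essentially the same route as the paper's own proof: your structural facts (i)--(iii) are Lemma~\ref{lem:partial-optimality} and Propositions~\ref{prop:Search-region}--\ref{prop:Partial-Optimality}, your contour-value-to-recorded-cost bookkeeping is Lemma~\ref{lem:achievability}, your monotonicity/one-sidedness of the trajectory and stopping-rule dichotomy are Lemmas~\ref{lem:property-one-side}--\ref{lem:P1-region-stopping} and \ref{lem:monotonicity}, and your reduction of general $K$ to virtual two-segment partitions together with the $O(L)$ length bound mirrors Lemma~\ref{lem:partial-opt-K} and Theorem~\ref{thm:Maximum-Trajectory-Length}. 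The only differences are cosmetic --- you route the endgame through Proposition~\ref{prop:Solution-Property}'s axis/boundary split rather than the paper's uniform per-segment bound $F_{\min}(T_{k})\leq\min_{(\rho,\theta)\in\mathcal{P}_{k}}F_{k}(\rho,\theta)$ --- and your one slip, the unjustified identification $\rho^{*}=R_{j}(\theta^{*})$, is harmless: if the virtual-LOS point of $\Gamma$ at angle $\theta^{*}$ lies beyond $\rho^{*}$, your own unimodality-plus-achievability reasoning (exactly the paper's Lemma~\ref{lem:P1-region-stopping}) already yields a recorded cost at most $C^{*}$.
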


Theorem \ref{thm:convergence-two-segment} confirms that the globally
optimal UAV position is attainable, even though the terrain topology
could be arbitrarily complex. 

The optimality result can be better understood from the polar coordinate
system. From the definition of the fictitious segment cost functions
$F_{k}(\rho,\theta)$ in (\ref{eq:Fk}), problem $\mathscr{P}$ can
be equivalently written as 
\[
\mathscr{P}':\quad\underset{\rho\geq0,-\pi\leq\theta\leq\pi}{\text{minimize}}\quad F(\rho,\theta)\triangleq\sum_{k=1}^{K}F_{k}(\rho,\theta)\mathbb{I}\{(\rho,\theta)\in\mathcal{P}_{k}\}
\]
where $\mathcal{P}_{k}\triangleq\big\{(\rho,\theta):\mathbf{x}(\rho,\theta)\in\mathcal{D}_{k}\big\}$
is the $k$th propagation segment in the polar coordinate system.
The optimal solution $\mathbf{x}^{\star}$ to $\mathscr{P}$ can be
obtained as $\mathbf{x}^{\star}=\mathbf{x}(\rho^{\star},\theta^{\star})$,
in which $(\rho^{\star},\theta^{\star})$ is the optimal solution
to $\mathscr{P}'$. 

The following intermediate results provide some intuitions to understand
Algorithm \ref{alg:uav-positioning} and Theorem \ref{thm:convergence-two-segment}. 
\begin{prop}
[Bounded Search Region]\label{prop:Search-region} The optimal solution
$\mathbf{x}^{\star}$ to $\mathscr{P}$ can be obtained as $\mathbf{x}(\rho^{\star},\theta^{\star})$,
where $(\rho^{\star},\theta^{\star})\in\mathcal{P}$ and 
\begin{equation}
\mathcal{P}=\Big\{(\rho,\theta):0\leq\rho\leq L\cos\theta,-\frac{\pi}{2}\leq\theta\leq\frac{\pi}{2}\Big\}\label{eq:search-region}
\end{equation}
in which, $L\triangleq\|\mathbf{x}_{\text{b}}-\mathbf{x}_{\text{u}}\|$
is the horizontal distance from the BS to the user. 
\end{prop}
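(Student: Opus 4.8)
The plan is to argue that whenever $\mathscr{P}$ admits an optimal position $\mathbf{x}^{\star}=\mathbf{x}(\rho^{\star},\theta^{\star})$, one can ``retract'' it along the ray emanating from the user and land on an equally good position inside $\mathcal{P}$. Place the user at the origin and write $\mathbf{x}_{\text{b}}-\mathbf{x}_{\text{u}}=L\mathbf{u}$. Substituting (\ref{eq:polar-representation-xy-plane}) into the distance definitions gives $d_{\text{u}}(\rho,\theta)^{2}=\rho^{2}+(H_{\text{d}}-H_{\text{u}})^{2}$ and $d_{\text{b}}(\rho,\theta)^{2}=\rho^{2}-2L\rho\cos\theta+L^{2}+(H_{\text{d}}-H_{\text{b}})^{2}$. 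Hence, for a \emph{fixed} deviation angle $\theta$, the map $\rho\mapsto d_{\text{u}}(\rho,\theta)$ is strictly increasing, while $\rho\mapsto d_{\text{b}}(\rho,\theta)^{2}$ is a convex parabola whose minimizer over $\rho\geq0$ is $\rho_{\theta}\triangleq\max\{0,L\cos\theta\}$; and for $|\theta|\leq\pi/2$ the condition $0\leq\rho\leq\rho_{\theta}$ is exactly the defining inequality $\rho\leq L\cos\theta$ of $\mathcal{P}$ in (\ref{eq:search-region}).

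The engine of the proof is a ``retraction does not increase the cost'' claim: if $\mathbf{x}'=\mathbf{x}_{\text{u}}+c(\mathbf{x}-\mathbf{x}_{\text{u}})$ for some $c\in[0,1]$ (equivalently $\mathbf{x}'=\mathbf{x}(c\rho,\theta)$ when $\mathbf{x}=\mathbf{x}(\rho,\theta)$) and moreover $d_{\text{b}}(\mathbf{x}')\leq d_{\text{b}}(\mathbf{x})$, then $f(g_{\text{u}}(\mathbf{x}'),g_{\text{b}}(\mathbf{x}'))\leq f(g_{\text{u}}(\mathbf{x}),g_{\text{b}}(\mathbf{x}))$. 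Indeed $d_{\text{u}}(\mathbf{x}')\leq d_{\text{u}}(\mathbf{x})$, and, by the nested condition (\ref{eq:nested-segmented-model-nested-condition}), $\mathbf{x}'\in\mathcal{D}_{j}$ for some $j\leq k$ where $\mathbf{x}\in\mathcal{D}_{k}$. Chaining the ordering condition (\ref{eq:nested-segmented-model-order-condition}) applied at the \emph{retracted} distance with the fact that $t\mapsto t^{-\alpha_{k}}$ is decreasing ($\alpha_{k}>0$) yields
\[
g_{\text{u}}(\mathbf{x}')=\beta_{j}d_{\text{u}}(\mathbf{x}')^{-\alpha_{j}}\geq\beta_{k}d_{\text{u}}(\mathbf{x}')^{-\alpha_{k}}\geq\beta_{k}d_{\text{u}}(\mathbf{x})^{-\alpha_{k}}=g_{\text{u}}(\mathbf{x}),
\]
so the UAV--user gain does not decrease; likewise $d_{\text{b}}(\mathbf{x}')\leq d_{\text{b}}(\mathbf{x})$ gives $g_{\text{b}}(\mathbf{x}')\geq g_{\text{b}}(\mathbf{x})$ via (\ref{eq:UAV-BS-channel-model}). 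Since $f$ is decreasing in each argument, the cost does not increase.

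It remains to pick the retraction. Given an optimal $\mathbf{x}^{\star}=\mathbf{x}(\rho^{\star},\theta^{\star})$ — if $\rho^{\star}=0$ then $\mathbf{x}^{\star}=\mathbf{x}_{\text{u}}=\mathbf{x}(0,0)\in\mathcal{P}$ already, so assume $\rho^{\star}>0$ — set $\rho'\triangleq\min\{\rho^{\star},\rho_{\theta^{\star}}\}$ and $\mathbf{x}'\triangleq\mathbf{x}(\rho',\theta^{\star})$, the retraction of $\mathbf{x}^{\star}$ with factor $c=\rho'/\rho^{\star}\in[0,1]$. Because $\rho'$ equals either $\rho^{\star}$ (no move) or the global minimizer $\rho_{\theta^{\star}}$ of $d_{\text{b}}(\cdot,\theta^{\star})$ over $\rho\geq0$ with $\rho_{\theta^{\star}}\leq\rho^{\star}$, we always have $d_{\text{b}}(\mathbf{x}')\leq d_{\text{b}}(\mathbf{x}^{\star})$; the retraction claim then gives $f(\mathbf{x}')\leq f(\mathbf{x}^{\star})$, so $\mathbf{x}'$ is optimal too. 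Finally $\mathbf{x}'\in\mathcal{P}$: if $\cos\theta^{\star}\geq0$ then $|\theta^{\star}|\leq\pi/2$ and $\rho'\leq\rho_{\theta^{\star}}=L\cos\theta^{\star}$; if $\cos\theta^{\star}<0$ then $\rho_{\theta^{\star}}=0$, hence $\rho'=0$ and $\mathbf{x}'=\mathbf{x}_{\text{u}}=\mathbf{x}(0,0)\in\mathcal{P}$.

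The coordinate identities and the inequality chain are routine; the step needing care is the retraction claim, where $g_{\text{u}}$ is only piecewise-defined, so one must invoke the ordering condition (\ref{eq:nested-segmented-model-order-condition}) precisely at the retracted distance $d_{\text{u}}(\mathbf{x}')$ and simultaneously track the possible drop $k\mapsto j$ of the segment index furnished by (\ref{eq:nested-segmented-model-nested-condition}); the degenerate endpoint $\rho'=0$ also requires noting that $\mathbf{x}_{\text{u}}$ lies in the least-obstructed occupied segment, again by (\ref{eq:nested-segmented-model-nested-condition}) with $c=0$. I would also remark that existence of an optimizer is implicit in the statement — if one prefers self-containment, it follows since $d_{\text{u}},d_{\text{b}}\to\infty$ as $\|\mathbf{x}\|\to\infty$ forces $g_{\text{u}},g_{\text{b}}\to0$, confining the search to a bounded set — and that no appeal to Condition 1 or Condition 2 is needed anywhere in this argument.
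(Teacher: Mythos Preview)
Your proof is correct and rests on the same engine as the paper's: move along the user--UAV ray to a point with no larger $d_{\text{u}}$ and no larger $d_{\text{b}}$, then invoke the nested-segment conditions (\ref{eq:nested-segmented-model-order-condition})--(\ref{eq:nested-segmented-model-nested-condition}) together with the monotonicity of $f$ to conclude the cost does not increase. The organizational choices, however, differ. The paper argues by contradiction in two cases: for $|\theta'|>\pi/2$ it compares with the point $(0,0)$ directly above the user, and for $\rho'>L\cos\theta'$ it \emph{reflects} through $L\cos\theta'$, taking $\rho''=2L\cos\theta'-\rho'$ so that $d_{\text{b}}$ is exactly preserved. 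You instead give a single constructive move, \emph{retracting} to the minimizer $\rho_{\theta}=\max\{0,L\cos\theta\}$ of $d_{\text{b}}(\cdot,\theta)$, which handles both cases uniformly (the large-angle case collapses to $\rho'=0$). Your version is a touch cleaner---it sidesteps the case split and the implicit requirement in the paper's reflection that $\rho''\geq 0$---while the paper's reflection has the minor aesthetic advantage of showing one can match $d_{\text{b}}$ exactly rather than merely not increase it. Either way, the substance is the same.
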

\begin{proof}
Please refer to Appendix \ref{app:pf-prop-search-region}. 
\end{proof}

Proposition \ref{prop:Search-region} justifies the first stopping
criterion $\rho\geq L\cos\theta$ in Step \ref{enu:alg-two-phase-update-loop})
of Algorithm \ref{alg:uav-positioning}. An intuitive explanation
is that when the UAV moves outside the region $\mathcal{P}$ in (\ref{eq:search-region}),
one can always find a position in $\mathcal{P}$ that has an equal
(or smaller) distances, respectively, to the BS and to the user in
the same (or less obstructed) propagation segment $\mathcal{D}_{k}$,
leading to an equal (or lower) cost to achieve. Therefore, the optimal
UAV position is contained in $\mathcal{P}$. 

The following proposition justifies the second stopping criterion
$\partial F_{k}(\rho,\theta)/\partial\rho\geq0$. 
\begin{prop}
[Partial Optimality]\label{prop:Partial-Optimality} Suppose that
the cost function $f$ in $\mathscr{P}$ satisfies either Condition
1 or Condition 2. Then, $F_{k}(\rho,\theta)$ admits a unique local
minimizer $\rho_{k}^{*}(\theta)$ over $\rho\geq0$ for every fixed
$\theta$, where $|\theta|<\pi/2$. 
\end{prop}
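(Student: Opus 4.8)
The plan is to turn the minimization of $F_k(\cdot,\theta)$ for fixed $\theta$, $|\theta|<\pi/2$, into a one–dimensional problem with a controllable shape by pushing the polar representation (\ref{eq:polar-representation-xy-plane}) through the distances. First I would note that, since $\mathbf{M}(\theta)$ is a rotation and $\mathbf{u}$ a unit vector, $d_{\text{u}}(\mathbf{x}(\rho,\theta))^{2}=\rho^{2}+(H_{\text{d}}-H_{\text{u}})^{2}$, which does not depend on $\theta$ and is strictly increasing and convex in $\rho$; and, using $\mathbf{x}_{\text{b}}-\mathbf{x}_{\text{u}}=L\mathbf{u}$ together with $\mathbf{u}^{\text{T}}\mathbf{M}(\theta)\mathbf{u}=\cos\theta$, $d_{\text{b}}(\mathbf{x}(\rho,\theta))^{2}=(\rho-L\cos\theta)^{2}+L^{2}\sin^{2}\theta+(H_{\text{d}}-H_{\text{b}})^{2}$, a strictly convex quadratic in $\rho$ whose minimizer $\rho^{\circ}=L\cos\theta$ lies in $(0,\infty)$ precisely because $|\theta|<\pi/2$. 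Hence $g_{\text{u}}^{(k)}(\rho)=\beta_k d_{\text{u}}^{-\alpha_k}$ is strictly decreasing in $\rho$, while $g_{\text{b}}(\rho)$ is strictly increasing on $[0,\rho^{\circ}]$ and strictly decreasing on $[\rho^{\circ},\infty)$. The target is to show that, under either condition, $F_k(\cdot,\theta)$ is strictly decreasing on an initial interval $[0,\rho_k^{*}(\theta)]$ and strictly increasing on $[\rho_k^{*}(\theta),\infty)$; that immediately gives the claimed unique local (hence global) minimizer.

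For the separable case (Condition~1), $\partial^{2}f/\partial x\partial y=0$ yields $f(x,y)=\varphi(x)+\psi(y)$, so $F_k(\rho,\theta)=\varphi(g_{\text{u}}^{(k)}(\rho))+\psi(g_{\text{b}}(\rho))$. The pivotal observation is that the inequalities (\ref{eq:cost-function-condition-1}) are exactly the statement that $t\mapsto\varphi(1/t)$ and $t\mapsto\psi(1/t)$ are convex on $(0,\infty)$ (one checks $\tfrac{d^{2}}{dt^{2}}\varphi(1/t)=t^{-3}\bigl(x\varphi''(x)+2\varphi'(x)\bigr)$ with $x=1/t$), and since $f$ is decreasing these reciprocal-composed maps are also nondecreasing. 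Writing $1/g_{\text{u}}^{(k)}(\rho)=\beta_k^{-1}(d_{\text{u}}^{2}(\rho))^{\alpha_k/2}$ and $1/g_{\text{b}}(\rho)=\beta_0^{-1}(d_{\text{b}}^{2}(\rho))^{\alpha_0/2}$, I would then verify by a direct second-derivative computation that $\rho\mapsto(d_{\text{u}}^{2}(\rho))^{\alpha_k/2}$ and $\rho\mapsto(d_{\text{b}}^{2}(\rho))^{\alpha_0/2}$ are convex on $[0,\infty)$; here the sign of the second derivative reduces to a manifestly positive expression of the form $2(\alpha-1)(\,\cdot\,)^{2}+(\text{a strictly positive constant})$, which is where $\alpha_k,\alpha_0>1$ and the quadratic form of $d^{2}$ enter (this is needed because for $1<\alpha<2$ the map $s\mapsto s^{\alpha/2}$ is concave, so no naive composition rule applies). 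Composing convex nondecreasing functions with these (strictly) convex maps makes $F_k(\cdot,\theta)$ strictly convex, hence it has at most one local minimizer; differentiating shows the $g_{\text{u}}$-contribution to $\partial F_k/\partial\rho$ is positive for all $\rho>0$ and the $g_{\text{b}}$-contribution is positive for $\rho>\rho^{\circ}$, so $\partial F_k/\partial\rho>0$ for $\rho>\rho^{\circ}$ while $\partial F_k/\partial\rho<0$ at $\rho=0$ (using $\cos\theta>0$); by strict convexity $\partial F_k/\partial\rho$ vanishes at a single interior point $\rho_k^{*}(\theta)\in(0,\rho^{\circ})$.

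For the max-structure case (Condition~2), $F_k(\rho,\theta)=\max\{\phi_1(\rho),\phi_2(\rho)\}$ with $\phi_1(\rho)=f_1(g_{\text{u}}^{(k)}(\rho))$ and $\phi_2(\rho)=f_2(g_{\text{b}}(\rho))$. From the monotonicity of $g_{\text{u}}^{(k)}$ and $g_{\text{b}}$ above, $\phi_1$ is strictly increasing on $[0,\infty)$ and $\phi_2$ is strictly decreasing on $[0,\rho^{\circ}]$ and strictly increasing on $[\rho^{\circ},\infty)$. A short case analysis on whether and where the graphs of $\phi_1$ and $\phi_2$ cross then shows that in every case $F_k(\cdot,\theta)$ is strictly decreasing on $[0,\rho_k^{*}(\theta)]$ (possibly degenerate to $\{0\}$) and strictly increasing on $[\rho_k^{*}(\theta),\infty)$, because before the minimizer $F_k$ coincides with the strictly decreasing branch of $\phi_2$ and after it coincides with a strictly increasing branch of $\phi_1$ or $\phi_2$; no coercivity hypothesis is needed. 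Combining the two cases gives the proposition.

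I expect the main obstacle to be the Condition~1 step, and inside it the convexity of $\rho\mapsto(d^{2}(\rho))^{\alpha/2}$: for $1<\alpha<2$ this is not a composition of monotone convex pieces, so one genuinely has to exploit that $d^{2}$ is a specific convex quadratic and grind out the sign of the second derivative. Two minor but necessary points of care are whether ``decreasing'' is meant strictly (used for strict convexity and hence for uniqueness, and satisfied by the cost functions of interest such as (\ref{eq:outage-minimization}) and (\ref{eq:drone-rate-maximization})) and the differentiability of $f$ required by Condition~1; the Condition~2 step is conceptually easier but calls for careful bookkeeping of the non-smooth $\max$ and an exhaustive enumeration of the crossing patterns of $\phi_1$ and $\phi_2$.
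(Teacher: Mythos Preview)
Your proposal is correct, and for Condition~2 it is essentially the paper's argument (one branch strictly increasing, the other strictly decreasing on $[0,L\cos\theta]$, hence a unique minimizer of the $\max$), with the small bonus that you track what happens on $[\rho^{\circ},\infty)$ as well.

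For Condition~1 your route differs from the paper's. The paper does not pass through the separability $f=\varphi+\psi$ or through the reciprocal reinterpretation of~(\ref{eq:cost-function-condition-1}); instead it proves a short lemma showing directly, via two applications of the chain rule, that $\partial^{2}(f\circ g_i)/\partial d_i^{2}>0$ whenever $g_i(d)=\beta_i d^{-\alpha_i}$ with $\alpha_i>1$ and $f$ satisfies Condition~1, and then invokes convexity of $d_{\text{u}}(\rho)$ and $d_{\text{b}}(\rho)$ (not of $d^{\alpha}$) in the Hessian formula $\nabla^{2}F_k=\sum_i\bigl(\tfrac{\partial^{2}f}{\partial d_i^{2}}\nabla d_i\nabla d_i^{\text{T}}+\tfrac{\partial f}{\partial d_i}\nabla^{2}d_i\bigr)$. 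Your decomposition $(\varphi\circ\text{recip})\circ(d^{\alpha}/\beta)$ is a different slicing of the same composite: it trades the paper's two-line second-derivative bound for a cleaner conceptual statement (Condition~1 $\Leftrightarrow$ convexity of $\varphi(1/t),\psi(1/t)$) at the cost of having to verify convexity of $\rho\mapsto(\rho^{2}+c)^{\alpha/2}$, which, as you note, is the place where $\alpha>1$ and the specific quadratic form of $d^{2}$ are genuinely needed. The paper's lemma is a bit more portable (it works for any convex $d_i(\mathbf{z})$, not just the square-root-of-quadratic form), while your version makes the role of Condition~1 more transparent; both land on strict convexity of $F_k(\cdot,\theta)$ and hence the unique minimizer.
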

\begin{proof}
Please refer to Appendix \ref{app:pf-prop-partial-optimality}.
\end{proof}
Due to the fact that $F_{k}(\rho,\theta)$ has a unique local minimum
for every $\theta$, condition $\partial F_{k}(\rho,\theta)/\partial\rho\geq0$
implies that $\rho\geq\rho_{k}^{*}(\theta)$. On the other hand, Step
\ref{enu:alg-two-phase-update-loop}) of Algorithm \ref{alg:uav-positioning}
always increases $\rho$ (to be formally justified in Lemma \ref{lem:monotonicity}
in Appendix \ref{app:pf-prop-maximum-traj-length}). Therefore, it
suffices to stop the search when the condition $\partial F_{k}(\rho,\theta)/\partial\rho\geq0$
is met.

Based on Propositions \ref{prop:Search-region} and \ref{prop:Partial-Optimality},
the proof of Theorem \ref{thm:convergence-two-segment} is derived
in Appendix \ref{sec:app-pf-thm-convergence}.

\subsection{Maximum Length of the Algorithm Trajectory }

Here, we derive the worst-case trajectory length of Algorithm \ref{alg:uav-positioning}. 
\begin{thm}
[Maximum Trajectory Length]\label{thm:Maximum-Trajectory-Length}
The length of the search trajectory from Algorithm \ref{alg:uav-positioning}
is upper bounded by $(2.4K-1.4)L$, where $L\triangleq\|\mathbf{x}_{\text{b}}-\mathbf{x}_{\text{u}}\|$
is the horizontal distance from the BS to the user. 
\end{thm}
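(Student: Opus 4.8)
The plan is to account separately for the three kinds of motion that the UAV performs in Algorithm \ref{alg:uav-positioning}, and add up worst-case contributions. First, the initial ``search on the BS-user axis'' (Step \ref{enu:alg-initialization}) is a single monotone sweep from the BS towards the user along a segment of length $L$, so it contributes at most $L$. Second, there are $2(K-1)$ branch searches (a left and a right branch for each of the $k=1,\dots,K-1$ virtual LOS/NLOS partitions), each starting from a critical point $\mathbf{x}_k^0$ on the axis. For each such branch I would bound the arc length by exploiting the two stopping criteria together with the monotonicity of $\rho$ along the trajectory (the forthcoming Lemma \ref{lem:monotonicity}): since $\rho$ is nondecreasing along a branch and the search stops once $\rho\ge L\cos\theta$ (with $|\theta|<\pi/2$ guaranteed by Proposition \ref{prop:Search-region}), the branch is confined to the bounded region $\mathcal{P}$ of \eqref{eq:search-region}, whose diameter is $O(L)$.

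The core estimate is to show that a single branch trajectory has length at most $1.2L$. I would do this in the polar domain. In the virtual LOS phase the UAV moves purely radially, $\rho\mapsto\rho+\delta$, at fixed $\theta$; in the virtual NLOS phase it follows a contour of $F_k(\rho,\theta)$, and the step \eqref{eq:alg-eq2} is normalized to Euclidean length $\delta$. Because $\rho$ is monotone increasing along the whole branch and is bounded above by $L\cos\theta\le L$ while $\theta$ stays in $(-\pi/2,\pi/2)$, the trajectory lies inside the half-disk-like region $\mathcal{P}$; I would parametrize the branch by $\rho$ and bound $\int \|d\mathbf{x}\| = \int \sqrt{1+\rho^2(d\theta/d\rho)^2}\,d\rho$ using the fact that on the contour-following portion $d\theta/d\rho = (\partial F_k/\partial\rho)/(-\partial F_k/\partial\theta)$ and that the UAV is always moving ``away from the user'' (so $\theta$ is monotone too, by the geometry enforced in the step-size selection). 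Monotonicity of both $\rho$ and $|\theta|$ reduces the arc-length integral to something controlled by the total variation of $\rho$ (at most $L$) plus the total variation of $\rho\theta$ over $\mathcal{P}$; a direct geometric computation on $\mathcal{P}$, where $\rho\le L\cos\theta$, gives the constant $1.2$ (the extremal curve is the boundary $\rho = L\cos\theta$, whose length from $\theta=0$ to $\theta=\pi/2$ one bounds explicitly).

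Summing: the axis sweep gives $L$; the $2(K-1)$ branches give at most $2(K-1)\cdot 1.2L = 2.4(K-1)L$; together $L + 2.4(K-1)L = (2.4K-1.4)L$, as claimed. The final bookkeeping step \ref{enu:alg-final-step} involves no motion. The main obstacle I anticipate is the rigorous proof that along a contour-following (virtual NLOS) segment the deviation angle $|\theta|$ is monotone and the curve cannot ``wind'' — i.e.\ that the normalization in \eqref{eq:alg-eq2} really does keep the UAV moving away from the user with $d\rho>0$; this is exactly where Lemma \ref{lem:monotonicity} and Lemma \ref{lem:partial-optimality} (the nonvanishing of $\partial F_k/\partial\theta$ for $\theta\neq 0$) are needed, and extracting the sharp constant $1.2$ rather than a crude $O(1)$ bound requires carefully identifying that the worst case is the boundary arc $\rho=L\cos\theta$ and computing its length.
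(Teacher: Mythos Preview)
Your overall decomposition is exactly what the paper does: $L$ for the axis sweep in Step~\ref{enu:alg-initialization}, plus $2(K-1)$ branches each bounded by $1.2L$, summing to $(2.4K-1.4)L$. The key tool is indeed the monotonicity of $\rho(t)$ and $|\theta(t)|$ along a branch, which is precisely Lemma~\ref{lem:monotonicity}. Where you diverge from the paper, and where there is a real gap, is in how you extract the constant $1.2$.

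Your claim that the extremal is the boundary arc $\rho=L\cos\theta$ does not yield $1.2$. The arc length of $\rho=L\cos\theta$ from $\theta=0$ to $\theta=\pi/2$ is
\[
\int_0^{\pi/2}\sqrt{\rho^2+(d\rho/d\theta)^2}\,d\theta=\int_0^{\pi/2}\sqrt{L^2\cos^2\theta+L^2\sin^2\theta}\,d\theta=\frac{\pi L}{2}\approx 1.57L,
\]
not $1.2L$. More fundamentally, along that arc $\rho$ is \emph{decreasing} in $\theta$, so by the very monotonicity you invoke, no branch trajectory can trace it: a branch meets the semicircle at a single point and then stops. The boundary arc is therefore neither the extremal nor even an admissible trajectory. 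Your integral split $\int\sqrt{1+\rho^2(d\theta/d\rho)^2}\,d\rho\le\int d\rho+\int\rho\,d\theta$ with $\rho\le L\cos\theta$ gives at best $L+L=2L$ per branch, not $1.2L$.

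The paper's argument is different and purely geometric. It first notes that $\rho=L\cos\theta$ is the Thales semicircle on the user--BS segment. A branch, being monotone in both $\rho$ and $\theta$, is a simple ``convex'' curve from the axis to a single terminal point $\mathbf{d}$ on the semicircle at some angle $\theta$, with Cartesian coordinates $(L\cos^2\theta,\,L\cos\theta\sin\theta)$. Its length is then bounded by the two-segment path $\mathbf{a}\to\mathbf{b}\to\mathbf{d}$, where $\mathbf{a}$ is the user and $\mathbf{b}=(L\cos^2\theta,0)$ is the foot of $\mathbf{d}$ on the axis. This gives
\[
\overline{\mathbf{ab}}+\overline{\mathbf{bd}}=L\cos^2\theta+L\cos\theta\sin\theta=\tfrac{L}{2}\bigl(1+\cos2\theta+\sin2\theta\bigr),
\]
which is maximized at $\theta=\pi/8$ with value $(1+\sqrt{2})L/2\approx 1.207L$. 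So the sharp constant comes from maximizing an L-shaped bounding path over the \emph{terminal angle}, not from the length of the semicircular boundary itself.
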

\begin{proof}
Please refer to Appendix \ref{app:pf-prop-maximum-traj-length}.
\end{proof}
%


Theorem \ref{thm:Maximum-Trajectory-Length} suggests that the algorithm
must terminate in a finite number of steps given a positive step size
$\delta>0$. The total number of steps scales as $\mathcal{O}(L/\delta)$.
\emph{Surprisingly}, the bound is linear in $L$ and does \emph{not}
depend on the actual terrain, \emph{i.e.}, the shapes of the propagation
segments $\mathcal{D}_{k}$. As a benchmark, if one searches the optimal
UAV position following the segment boundaries (a property from Proposition
\ref{prop:Solution-Property}), the worst-case search length is not
guaranteed to be linear in $L$, depending on the actual shapes of
the boundaries (see, for example, Fig. \ref{fig:urban-city-map}(c)). 

\textcolor{red}{}

\section{Numerical Results}

\label{sec:numerical}

Consider a dense urban area with buildings ranging from $5$--$45$
meter height following a uniform distribution as illustrated in Fig.
\ref{fig:urban-city-map} (a). The user is represented by a red circle
and the BS locates at the top right corner denoted by a blue triangle.
The height of the BS is $45$ meters, and the UAV moves at $50$ meter
above the ground. As a result, there is always LOS propagation between
the UAV and the BS. Consider two propagation scenarios, LOS and NLOS,
for the UAV-to-user link; this corresponds to many existing models
in the literature for a fair comparison. Correspondingly, the parameters
of the UAV-BS channel in (\ref{eq:UAV-BS-channel-model}) are chosen
as $(\alpha_{0},\log_{10}\beta_{0})=(2.08,\,-3.85)$; Rician fading
with 20 dB K-factor is assumed according to the Rural Macro BS to
UAV scenario in \cite{TR36777}. The parameters of the UAV-user channel
in (\ref{eq:uav-user-channel-model}) are chosen as $(\alpha_{1},\log_{10}\beta_{1},\alpha_{2},\log_{10}\beta_{2})=(2.14,\,-3.69,\,3.03,\,-3.84)$;
Rician fading with 9 dB K-factor is assumed for the LOS case and Rayleigh
fading is assumed for the NLOS case according to the Urban Micro BS
to UAV scenario in \cite{TR36777}. 

We consider the example capacity maximization problem in Section \ref{subsec:example-DF}
for optimal UAV positioning. The transmission powers are chosen as
$P_{\text{b}}=30$ dBm from the BS and $P_{\text{d}}=36$ dBm from
the UAV, and the noise power is $-80$ dBm. The corresponding power
map and end-to-end capacity map for every UAV position are illustrated
in Fig. \ref{fig:urban-city-map} (b) and (c). 

In Fig. \ref{fig:UAV-search-path} (b), the green curves show the
search path from Algorithm \ref{alg:uav-positioning}. The two curve
branches correspond to the UAV searches in Step \ref{enu:alg-two-phase-update})
and Step \ref{enu:alg-two-phase-upadte-loop-left}) in Algorithm \ref{alg:uav-positioning},
respectively. The other curves in Fig. \ref{fig:UAV-search-path}
(b) represent the contour of the capacity map as in Fig. \ref{fig:urban-city-map}
(c). The optimal UAV position is found at the purple diamond. 

\subsection{Throughput and Outage Probability for Single User Case}

\begin{figure}
\begin{centering}
\includegraphics[width=1\columnwidth]{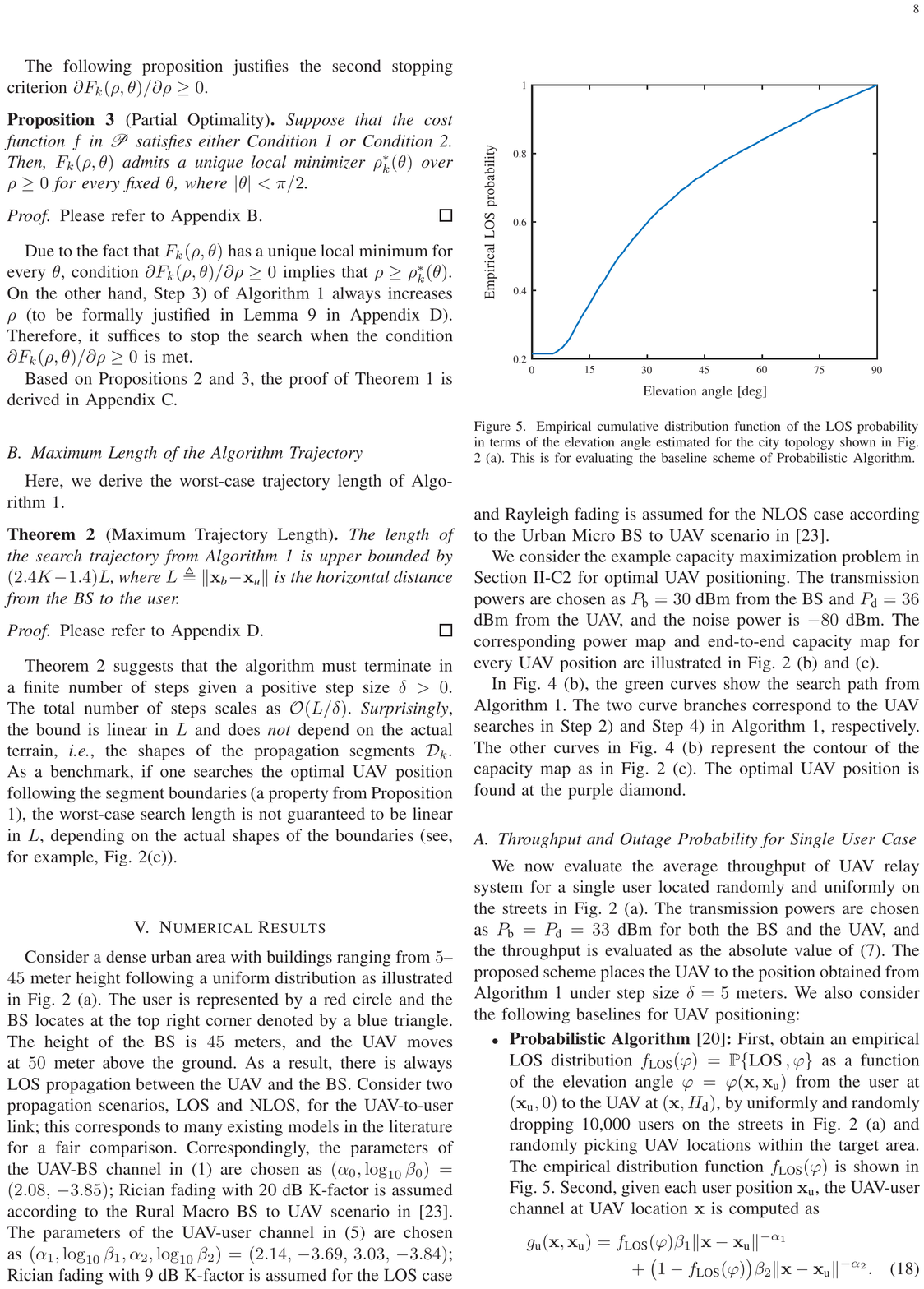}
\par\end{centering}
\caption{\label{fig:plos} Empirical cumulative distribution function of the
LOS probability in terms of the elevation angle estimated for the
city topology shown in Fig. \ref{fig:urban-city-map} (a). This is
for evaluating the baseline scheme of Probabilistic Algorithm. }
\end{figure}
We now evaluate the average throughput of UAV relay system for a single
user located randomly and uniformly on the streets in Fig. \ref{fig:urban-city-map}
(a). The transmission powers are chosen as $P_{\text{b}}=P_{\text{d}}=33$
dBm for both the BS and the UAV, and the throughput is evaluated as
the absolute value of (\ref{eq:drone-rate-maximization}). The proposed
scheme places the UAV to the position obtained from Algorithm \ref{alg:uav-positioning}
under step size $\delta=5$ meters. We also consider the following
baselines for UAV positioning: 
\begin{itemize}
\item \textbf{Probabilistic Algorithm }\cite{MozSaaBenDeb:C16}\textbf{:}
First, obtain an empirical LOS distribution $f_{\text{LOS}}(\varphi)=\mathbb{P}\{\text{LOS}\,,\varphi\}$
as a function of the elevation angle $\varphi=\varphi(\mathbf{x},\mathbf{x}_{\text{u}})$
from the user at $(\mathbf{\mathbf{x}_{\text{u}}},0)$ to the UAV
at $(\mathbf{x},H_{\text{d}})$, by uniformly and randomly dropping
10,000 users on the streets in Fig. \ref{fig:urban-city-map} (a)
and randomly picking UAV locations within the target area. The empirical
distribution function $f_{\text{LOS}}(\varphi)$ is shown in Fig.
\ref{fig:plos}. Second, given each user position $\mathbf{x}_{\text{u}}$,
the UAV-user channel at UAV location $\mathbf{x}$ is computed as
\begin{align}
g_{\text{u}}(\mathbf{x},\mathbf{x}_{\text{u}}) & =f_{\text{LOS}}(\varphi)\beta_{1}\|\mathbf{x}-\mathbf{x}_{\text{u}}\|^{-\alpha_{1}}\nonumber \\
 & \qquad+\big(1-f_{\text{LOS}}(\varphi)\big)\beta_{2}\|\mathbf{x}-\mathbf{x}_{\text{u}}\|^{-\alpha_{2}}.\label{eq:offline-gu}
\end{align}
The optimal UAV position is then obtained by solving $\mathscr{P}$
with $g_{\text{u}}(\mathbf{x)=}g_{\text{u}}(\mathbf{x},\mathbf{x}_{\text{u}})$
in (\ref{eq:offline-gu}). 
\item \textbf{Simple Search:} Obtain the optimal UAV position by searching
only on the BS-user axis (\emph{i.e.}, to implement only Step 1 of
Algorithm \ref{alg:uav-positioning}).
\item \textbf{Exhaustive Search:} We perform exhaustive search over the
entire search region on equally-spaced grids with $\delta=5$ meter
spacing. The grid point that maximizes the cost in (\ref{eq:drone-rate-maximization})
is chosen as the UAV position. Note that such a scheme is prohibited
in practice and hence it is for benchmarking only. 
\end{itemize}
The performance on direct BS-user transmission (without UAV relaying)
is evaluated using the segmented channel model (\ref{eq:uav-user-channel-model})
by replacing $d_{\text{u}}(\mathbf{x})$ by the BS-user distance and
replacing $\mathbb{I}\{\mathbf{x}\in\mathcal{D}_{k}\}$ by the indicator
of the BS-user link propagation condition. 

\begin{figure}
\begin{centering}
\subfigure[Throughput of the three user categories]{
\includegraphics[width=1\columnwidth]{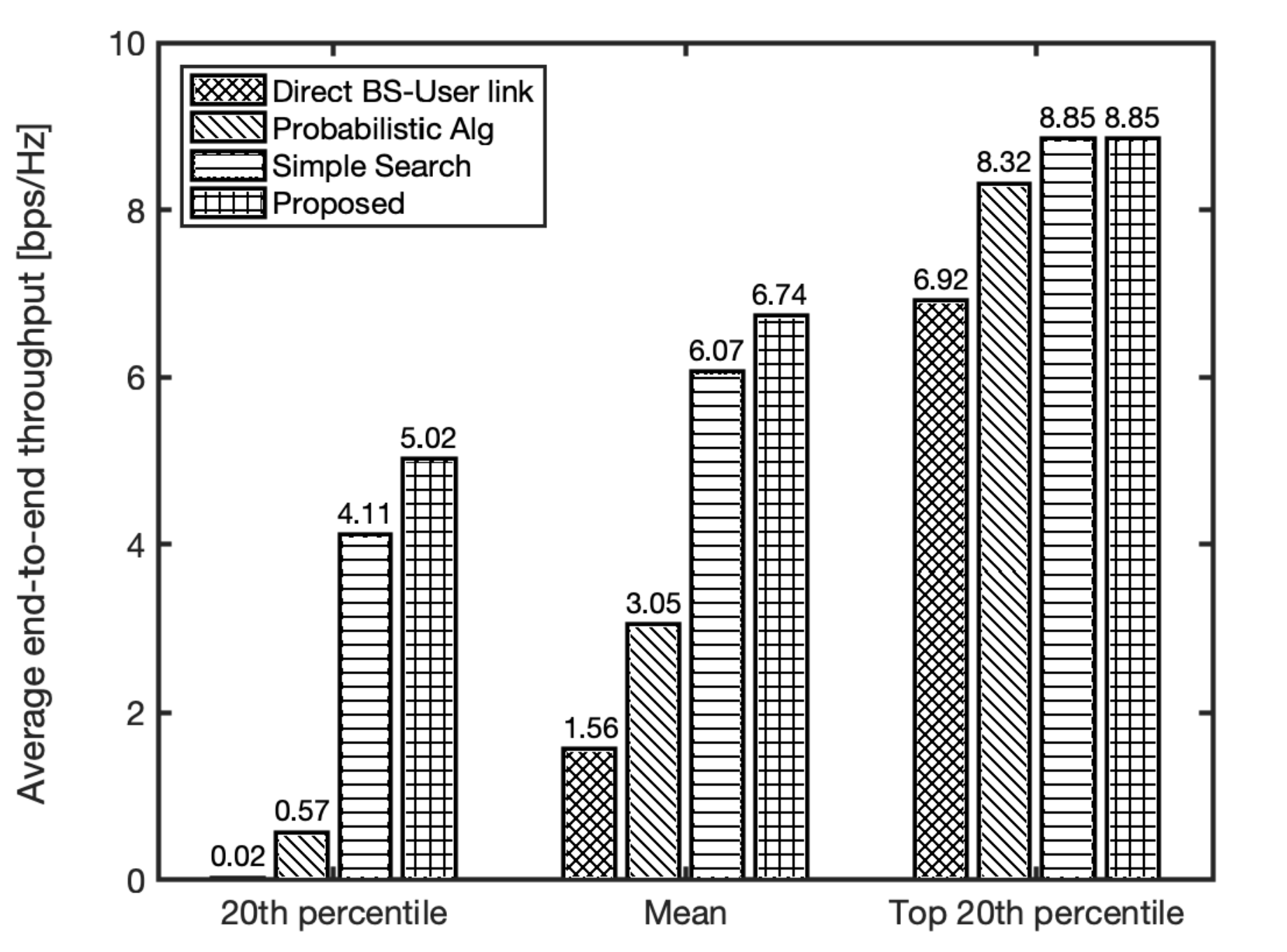}}

\subfigure[CDF of the end-to-end throughput]{\includegraphics[width=1\columnwidth]{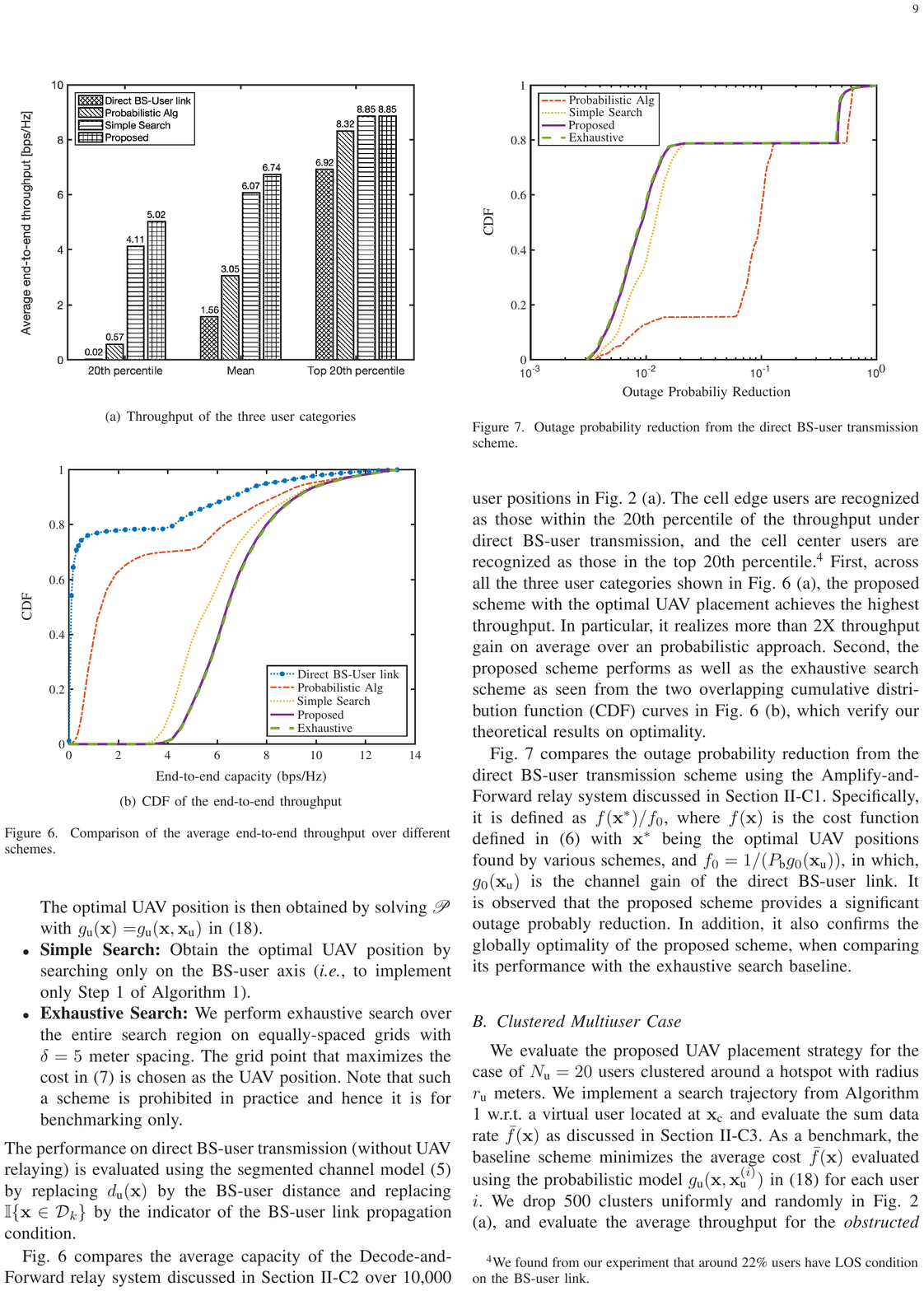}}
\par\end{centering}
\caption{\label{fig:rate-bar} Comparison of the average end-to-end throughput
over different schemes. }
\end{figure}
Fig. \ref{fig:rate-bar} compares the average capacity of the Decode-and-Forward
relay system discussed in Section \ref{subsec:example-DF} over 10,000
user positions in Fig. \ref{fig:urban-city-map} (a). The cell edge
users are recognized as those within the 20th percentile of the throughput
under direct BS-user transmission, and the cell center users are recognized
as those in the top 20th percentile.\footnote{We found from our experiment that around 22\% users have LOS condition
on the BS-user link.} First, across all the three user categories shown in Fig. \ref{fig:rate-bar}
(a), the proposed scheme with the optimal UAV placement achieves the
highest throughput. In particular, it realizes more than 2X throughput
gain on average over an probabilistic approach. Second, the proposed
scheme performs as well as the exhaustive search scheme as seen from
the two overlapping \ac{cdf} curves in Fig. \ref{fig:rate-bar} (b),
which verify our theoretical results on optimality. 

Fig. \ref{fig:outage} compares the outage probability reduction from
the direct BS-user transmission scheme using the Amplify-and-Forward
relay system discussed in Section \ref{subsec:example-AF}. Specifically,
it is defined as $f(\mathbf{x}^{*})/f_{0}$, where $f(\mathbf{x})$
is the cost function defined in (\ref{eq:outage-minimization}) with
$\mathbf{x}^{*}$ being the optimal UAV positions found by various
schemes, and $f_{0}=1/(P_{\text{b}}g_{0}(\mathbf{x}_{\text{u}}))$,
in which, $g_{0}(\mathbf{x}_{\text{u}})$ is the channel gain of the
direct BS-user link. It is observed that the proposed scheme provides
a significant outage probably reduction. In addition, it also confirms
the globally optimality of the proposed scheme, when comparing its
performance with the exhaustive search baseline. 

\begin{figure}
\begin{centering}
\includegraphics[width=1\columnwidth]{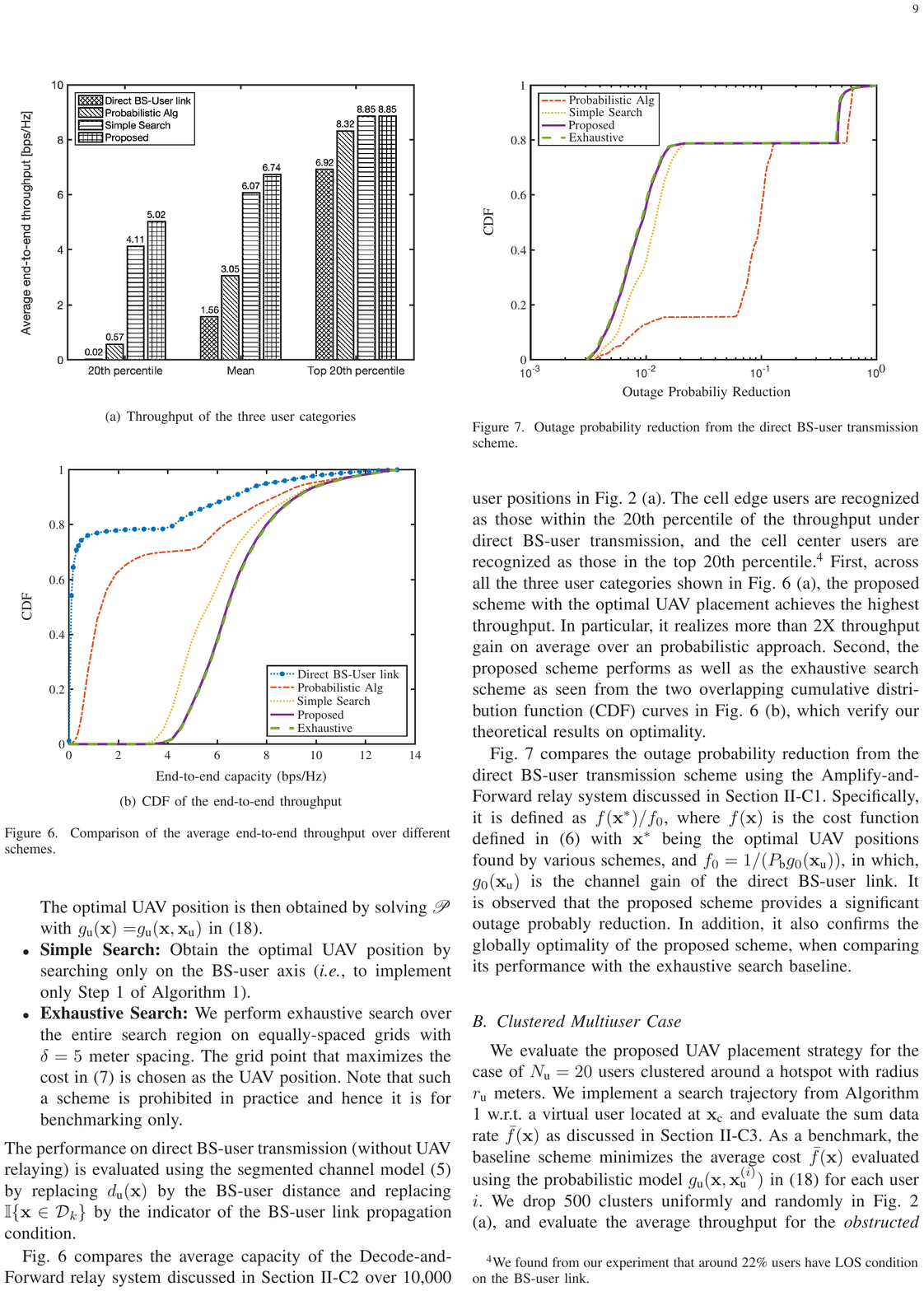}
\par\end{centering}
\caption{\label{fig:outage} Outage probability reduction from the direct BS-user
transmission scheme.}
\end{figure}
\begin{figure}
\begin{centering}
\includegraphics[width=1\columnwidth]{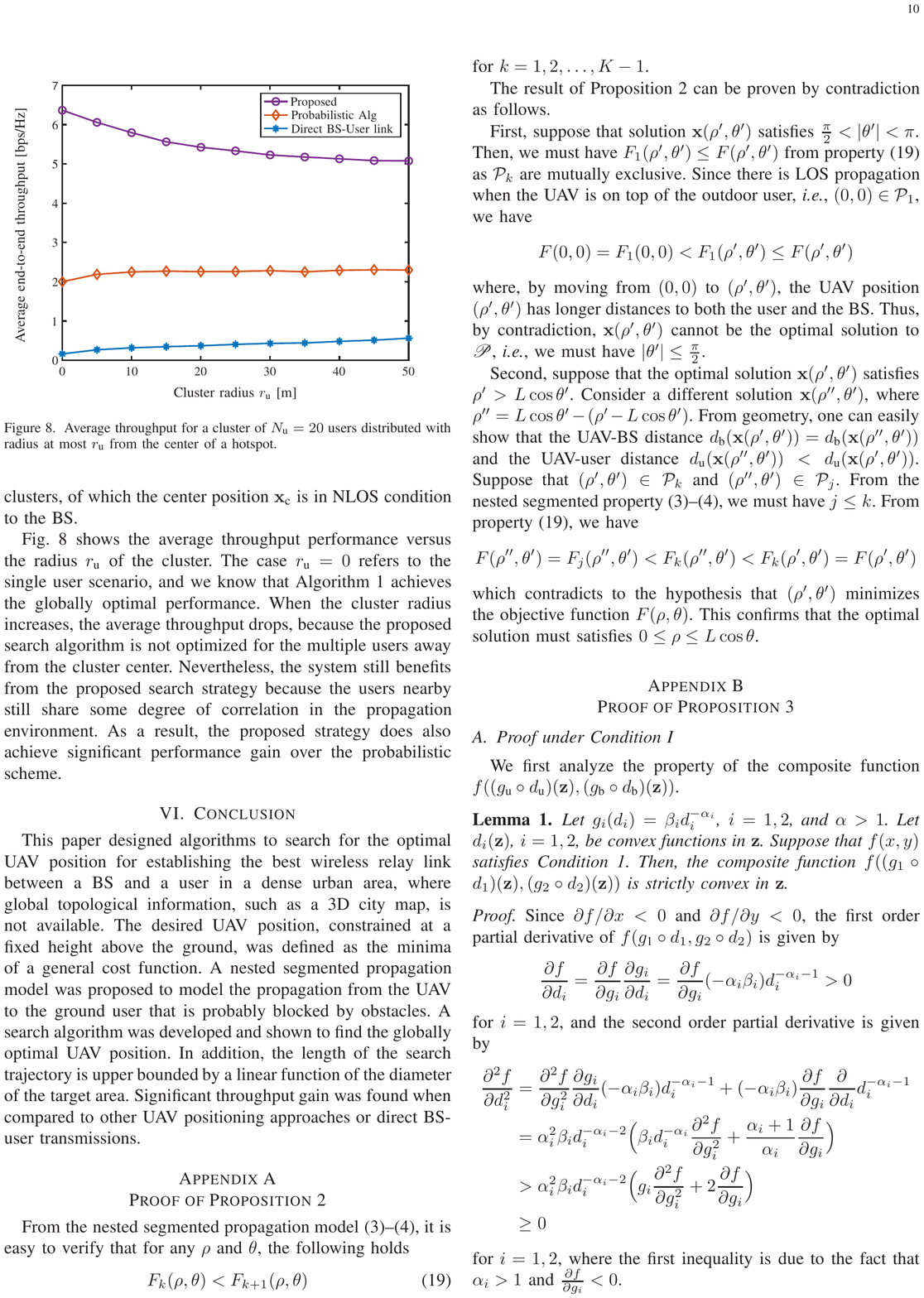}
\par\end{centering}
\caption{\label{fig:rate-radius-1} Average throughput for a cluster of $N_{\text{u}}=20$
users distributed with radius at most $r_{\text{u}}$ from the center
of a hotspot.}
\end{figure}

\subsection{Clustered Multiuser Case}

\label{subsec:clustered-user-case}

We evaluate the proposed UAV placement strategy for the case of $N_{\text{u}}=20$
users clustered around a hotspot with radius $r_{\text{u}}$ meters.
We implement a search trajectory from Algorithm \ref{alg:uav-positioning}
\ac{wrt} a virtual user located at $\mathbf{x}_{\text{c}}$ and evaluate
the sum data rate $\bar{f}(\mathbf{x})$ as discussed in Section \ref{subsec:multiuser}.
As a benchmark, the baseline scheme minimizes the average cost $\bar{f}(\mathbf{x})$
evaluated using the probabilistic model $g_{\text{u}}(\mathbf{x},\mathbf{x}_{\text{u}}^{(i)})$
in (\ref{eq:offline-gu}) for each user $i$. We drop 500 clusters
uniformly and randomly in Fig. \ref{fig:urban-city-map} (a), and
evaluate the average throughput for the \emph{obstructed} clusters,
of which the center position $\mathbf{x}_{\text{c}}$ is in \ac{nlos}
condition to the BS.

Fig. \ref{fig:rate-radius-1} shows the average throughput performance
versus the radius $r_{\text{u}}$ of the cluster. The case $r_{\text{u}}=0$
refers to the single user scenario, and we know that Algorithm \ref{alg:uav-positioning}
achieves the globally optimal performance. When the cluster radius
increases, the average throughput drops, because the proposed search
algorithm is not optimized for the multiple users away from the cluster
center. Nevertheless, the system still benefits from the proposed
search strategy because the users nearby still share some degree of
correlation in the propagation environment. As a result, the proposed
strategy does also achieve significant performance gain over the probabilistic
scheme. 

\section{Conclusion}

\label{sec:conclusion} 

This paper designed algorithms to search for the optimal UAV position
for establishing the best wireless relay link between a BS and a user
in a dense urban area, where global topological information, such
as a 3D city map, is not available. The desired UAV position, constrained
at a fixed height above the ground, was defined as the minima of a
general cost function. A nested segmented propagation model was proposed
to model the propagation from the UAV to the ground user that is probably
blocked by obstacles. A search algorithm was developed and shown to
find the globally optimal UAV position. In addition, the length of
the search trajectory is upper bounded by a linear function of the
diameter of the target area. Significant throughput gain was found
when compared to other UAV positioning approaches or direct BS-user
transmissions.

\appendices

\section{Proof of Proposition \ref{prop:Search-region}}

\label{app:pf-prop-search-region}

From the nested segmented propagation model (\ref{eq:nested-segmented-model-order-condition})\textendash (\ref{eq:nested-segmented-model-nested-condition}),
it is easy to verify that for any $\rho$ and $\theta$, the following
holds 
\begin{equation}
F_{k}(\rho,\theta)<F_{k+1}(\rho,\theta)\label{eq:inequality-less-obstructed-segment-preferred}
\end{equation}
for $k=1,2,\dots,K-1$.

The result of Proposition \ref{prop:Search-region} can be proven
by contradiction as follows. 

First, suppose that solution $\mathbf{x}(\rho',\theta')$ satisfies
$\frac{\pi}{2}<|\theta'|<\pi$. Then, we must have $F_{1}(\rho',\theta')\leq F(\rho',\theta')$
from property (\ref{eq:inequality-less-obstructed-segment-preferred})
as $\mathcal{P}_{k}$ are mutually exclusive. Since there is LOS propagation
when the UAV is on top of the outdoor user, \emph{i.e.}, $(0,0)\in\mathcal{P}_{1}$,
we have 
\[
F(0,0)=F_{1}(0,0)<F_{1}(\rho',\theta')\leq F(\rho',\theta')
\]
where, by moving from $(0,0)$ to $(\rho',\theta')$, the UAV position
$(\rho',\theta')$ has longer distances to both the user and the BS.
Thus, by contradiction, $\mathbf{x}(\rho',\theta')$ cannot be the
optimal solution to $\mathscr{P}$, \emph{i.e.}, we must have $|\theta'|\leq\frac{\pi}{2}$.

Second, suppose that the optimal solution $\mathbf{x}(\rho',\theta')$
satisfies $\rho'>L\cos\theta'$. Consider a different solution $\mathbf{x}(\rho'',\theta')$,
where $\rho''=L\cos\theta'-(\rho'-L\cos\theta')$. From geometry,
one can easily show that the UAV-BS distance $d_{\text{b}}(\mathbf{x}(\rho',\theta'))=d_{\text{b}}(\mathbf{x}(\rho'',\theta'))$
and the UAV-user distance $d_{\text{u}}(\mathbf{x}(\rho'',\theta'))<d_{\text{u}}(\mathbf{x}(\rho',\theta'))$.
Suppose that $(\rho',\theta')\in\mathcal{P}_{k}$ and $(\rho'',\theta')\in\mathcal{P}_{j}$.
From the nested segmented property (\ref{eq:nested-segmented-model-order-condition})\textendash (\ref{eq:nested-segmented-model-nested-condition}),
we must have $j\leq k$. From property (\ref{eq:inequality-less-obstructed-segment-preferred}),
we have
\[
F(\rho'',\theta')=F_{j}(\rho'',\theta')<F_{k}(\rho'',\theta')<F_{k}(\rho',\theta')=F(\rho',\theta')
\]
which contradicts to the hypothesis that $(\rho',\theta')$ minimizes
the objective function $F(\rho,\theta)$. This confirms that the optimal
solution must satisfies $0\leq\rho\leq L\cos\theta$.

\section{Proof of Proposition \ref{prop:Partial-Optimality}}

\label{app:pf-prop-partial-optimality}

\subsection{Proof under Condition I}

We first analyze the property of the composite function $f((g_{\text{u}}\circ d_{\text{u}})(\mathbf{z}),(g_{\text{b}}\circ d_{\text{b}})(\mathbf{z}))$. 
\begin{lem}
\label{lem:for-condition-1}Let $g_{i}(d_{i})=\beta_{i}d_{i}^{-\alpha_{i}}$,
$i=1,2$, and $\alpha>1$. Let $d_{i}(\mathbf{z})$, $i=1,2$, be
convex functions in $\mathbf{z}$. Suppose that $f(x,y)$ satisfies
Condition 1. Then, the composite function $f((g_{1}\circ d_{1})(\mathbf{z}),(g_{2}\circ d_{2})(\mathbf{z}))$
is strictly convex in $\mathbf{z}$. 
\end{lem}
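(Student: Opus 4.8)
The plan is to use the decoupling forced by $\partial^{2}f/\partial x\partial y=0$ to split the composite cost into two independent terms, and then to establish (strict) convexity of each term by a one-dimensional computation pushed through the distance map. Since $\partial f/\partial x$ is continuous and has vanishing $y$-derivative on the connected set $\{x>0,\,y>0\}$, it depends on $x$ alone, so $f(x,y)=f_{1}(x)+f_{2}(y)$ for smooth one-variable functions with $f_{1}'\le 0$ and $f_{2}'\le 0$ (because $f$ is decreasing). Hence $f\big((g_{1}\circ d_{1})(\mathbf z),(g_{2}\circ d_{2})(\mathbf z)\big)=\psi_{1}(d_{1}(\mathbf z))+\psi_{2}(d_{2}(\mathbf z))$ where $\psi_{i}(d)\triangleq f_{i}(\beta_{i}d^{-\alpha_{i}})$, and it suffices to show each $\psi_{i}\circ d_{i}$ is convex, and in fact strictly convex.

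I would then study $\psi_{i}$ on $\{d>0\}$. With $x\triangleq\beta_{i}d^{-\alpha_{i}}$, direct differentiation gives $\psi_{i}'(d)=-(\alpha_{i}x/d)\,f_{i}'(x)\ge 0$, so $\psi_{i}$ is nondecreasing (and strictly increasing whenever $f_{i}'<0$), and
\[
\psi_{i}''(d)=\frac{\alpha_{i}x}{d^{2}}\Big(\alpha_{i}x\,f_{i}''(x)+(\alpha_{i}+1)f_{i}'(x)\Big).
\]
Condition 1 gives $x f_{i}''(x)\ge -2 f_{i}'(x)$; multiplying by $\alpha_{i}>0$ and adding $(\alpha_{i}+1)f_{i}'(x)$ yields $\alpha_{i}x f_{i}''(x)+(\alpha_{i}+1)f_{i}'(x)\ge(1-\alpha_{i})f_{i}'(x)\ge 0$, where the last step uses $\alpha_{i}>1$ together with $f_{i}'\le 0$. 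Thus $\psi_{i}''\ge 0$, i.e.\ $\psi_{i}$ is convex and nondecreasing.

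It remains to compose. The maps of interest are the Euclidean distances $d_{i}(\mathbf z)=\sqrt{\|\mathbf z-\mathbf z_{i}\|^{2}+c_{i}^{2}}$ with $c_{i}>0$ the height offset (here $d_{1}=d_{\text u}$, $d_{2}=d_{\text b}$); the Hessian $\nabla^{2}d_{i}=\tfrac{1}{d_{i}}\big(\mathbf I-(\mathbf z-\mathbf z_{i})(\mathbf z-\mathbf z_{i})^{\text T}/d_{i}^{2}\big)$ has eigenvalues $1/d_{i}>0$ and $c_{i}^{2}/d_{i}^{3}>0$, so each $d_{i}$ is strictly convex. Composing a convex nondecreasing $\psi_{i}$ with the convex $d_{i}$ yields a convex function; and because $d_{i}$ is strictly convex and $\psi_{i}$ is strictly increasing — which holds for the cost functions of interest such as (\ref{eq:outage-minimization}), as these are strictly decreasing and hence $f_{i}'<0$ — the composition $\psi_{i}\circ d_{i}$ is strictly convex. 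Summing the two strictly convex terms finishes the argument.

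The main obstacle is the middle step: the second-derivative bookkeeping for $\psi_{i}$ and, crucially, recognizing that Condition 1 is tailored so that the hypothesis $\alpha_{i}>1$ exactly absorbs the surplus $-\alpha_{i}f_{i}'(x)$ term, leaving the nonnegative residue $(1-\alpha_{i})f_{i}'(x)$. The separability and composition steps are routine, but I would be careful about where strictness enters: it comes from the strict convexity of the distance maps (together with strict monotonicity of $f$), not from $\psi_{i}''\ge 0$ alone, which sharpens to $\psi_{i}''>0$ only under the extra assumption $f_{i}'<0$.
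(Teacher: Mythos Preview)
Your proof is correct and follows essentially the same route as the paper: the identical chain-rule computation showing $\partial f/\partial d_{i}>0$ and $\partial^{2}f/\partial d_{i}^{2}\ge 0$ via Condition~1 together with $\alpha_{i}>1$, followed by the composition/Hessian argument using (strict) convexity of the distance maps. The only cosmetic difference is that you invoke the separability $f=f_{1}+f_{2}$ explicitly up front, whereas the paper uses $\partial^{2}f/\partial x\partial y=0$ implicitly to drop the cross terms in the Hessian; you are also slightly more careful than the paper about where strictness enters.
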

\begin{proof}
Since $\partial f/\partial x<0$ and $\partial f/\partial y<0$, the
first order partial derivative of $f(g_{\text{1}}\circ d_{1},g_{2}\circ d_{2})$
is given by 
\begin{align*}
\frac{\partial f}{\partial d_{i}} & =\frac{\partial f}{\partial g_{i}}\frac{\partial g_{i}}{\partial d_{i}}=\frac{\partial f}{\partial g_{i}}(-\alpha_{i}\beta_{i})d_{i}^{-\alpha_{i}-1}>0
\end{align*}
for $i=1,2$, and the second order partial derivative is given by
\begin{align*}
\frac{\partial^{2}f}{\partial d_{i}^{2}} & =\frac{\partial^{2}f}{\partial g_{i}^{2}}\frac{\partial g_{i}}{\partial d_{i}}(-\alpha_{i}\beta_{i})d_{i}^{-\alpha_{i}-1}+(-\alpha_{i}\beta_{i})\frac{\partial f}{\partial g_{i}}\frac{\partial}{\partial d_{i}}d_{i}^{-\alpha_{i}-1}\\
 & =\alpha_{i}^{2}\beta_{i}d_{i}^{-\alpha_{i}-2}\Big(\beta_{i}d_{i}^{-\alpha_{i}}\frac{\partial^{2}f}{\partial g_{i}^{2}}+\frac{\alpha_{i}+1}{\alpha_{i}}\frac{\partial f}{\partial g_{i}}\Big)\\
 & >\alpha_{i}^{2}\beta_{i}d_{i}^{-\alpha_{i}-2}\Big(g_{i}\frac{\partial^{2}f}{\partial g_{i}^{2}}+2\frac{\partial f}{\partial g_{i}}\Big)\\
 & \geq0
\end{align*}
for $i=1,2$, where the first inequality is due to the fact that $\alpha_{i}>1$
and $\frac{\partial f}{\partial g_{i}}<0$. 

Define an operator $\nabla\triangleq[\frac{\partial}{\partial z_{1}}\,\frac{\partial}{\partial z_{2}}\,\dots\,\frac{\partial}{\partial z_{m}}]^{\text{T}}$,
where $z_{i}$ is the $i$th entry of a vector variable $\mathbf{z}$.
From $\nabla f=\frac{\partial f}{\partial d_{1}}\nabla d_{1}+\frac{\partial f}{\partial d_{2}}\nabla d_{2}$,
the Hessian matrix of $f$ is given by 
\begin{align*}
\nabla^{2}f & =\frac{\partial^{2}f}{\partial d_{1}^{2}}\nabla d_{1}\nabla d_{1}^{\text{T}}+\frac{\partial f}{\partial d_{1}}\nabla^{2}d_{1}\\
 & \qquad+\frac{\partial^{2}f}{\partial d_{2}^{2}}\nabla d_{2}\nabla d_{2}^{\text{T}}+\frac{\partial f}{\partial d_{2}}\nabla^{2}d_{2}\succeq\mathbf{0}
\end{align*}
since $\nabla^{2}d_{i}(\mathbf{z})\succeq\mathbf{0}$ due to the convexity
of $d_{i}(\mathbf{z})$. Therefore, $f$ is strictly convex in $\mathbf{z}$
(as $d_{i}(\mathbf{z})$ are strictly convex). 
\end{proof}

From the polar representation of the UAV-user and UAV-BS distances
\begin{align}
d_{\text{b}}(\mathbf{x}(\rho,\theta)) & =\sqrt{\rho^{2}+L^{2}-2\rho L\cos\theta+(H_{\text{d}}-H_{\text{b}})^{2}}\label{eq:dist_b}\\
d_{\text{u}}(\mathbf{x}(\rho,\theta)) & =\sqrt{\rho^{2}+(H_{\text{d}}-H_{\text{u}})^{2}}\label{eq:dist_u}
\end{align}
one can show that $d_{\text{b}}(\mathbf{x}(\rho,\theta))$ and $d_{\text{u}}(\mathbf{x}(\rho,\theta))$
are strictly convex in $\rho$.

Then, using Lemma \ref{lem:for-condition-1} and the definition of
$F_{k}(\rho,\theta)$ in (\ref{eq:Fk}), we can conclude that $F_{k}(\rho,\theta)$
is strictly convex in $\rho$, and therefore, $F_{k}(\rho,\theta)$
admits a unique local minima $\rho_{k}^{*}(\theta)$ in the bounded
interval $\rho\in[0,L\cos\theta]$.

\subsection{Proof under Condition II}

Consider $F_{k}(\rho,\theta)=\max\{f_{1}(g_{\text{u}}^{(k)}(\mathbf{x}(\rho,\theta)),f_{2}(g_{\text{b}}(\mathbf{x}(\rho,\theta))\}$.
From (\ref{eq:dist_b})\textendash (\ref{eq:dist_u}), to increase
$\rho$, we must have $d_{\text{u}}$ increase and $d_{\text{b}}$
decrease, and as a result, $f_{1}$ decreases and $f_{2}$ increases
monotonically in $\rho\in[0,L\cos\theta]$. Therefore, these exists
a unique local minimizer $\rho_{k}^{*}(\theta)$ in the closed interval
$[0,L\cos\theta]$. 

%


%

%


\section{Proof of Theorem \ref{thm:convergence-two-segment} }

\label{sec:app-pf-thm-convergence}

We first state the following lemma for the property of the fictitious
segment cost functions $F_{k}(\rho,\theta)$ in (\ref{eq:Fk}) derived
in the polar domain from the original objective $f$. 
\begin{lem}
\label{lem:partial-optimality} It holds that 
\begin{equation}
\partial F_{k}(\rho,\theta)/\partial|\theta|>0\label{eq:local-optimality-condition-0}
\end{equation}
for all $k$ and $\theta\neq0$. In addition, for any $\theta'\geq0$,
the following property holds 
\begin{equation}
\min_{\rho\geq0}F_{k}(\rho,\theta')\quad\leq\quad\min_{j\geq k}\;\min_{\rho\geq0,\theta'<\theta\leq\frac{\pi}{2}}F_{j}(\rho,\theta)\label{eq:local-optimality-condition-1}
\end{equation}
for $1\leq k\leq K$. Similarly, for any $\theta'<0$, 
\begin{equation}
\min_{\rho\geq0}F_{k}(\rho,\theta')\quad\leq\quad\min_{j\geq k}\;\min_{\rho\geq0,-\frac{\pi}{2}\leq\theta<\theta'}F_{j}(\rho,\theta)\label{eq:local-optimality-condition-2}
\end{equation}
\end{lem}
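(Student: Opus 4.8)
The plan is to prove Lemma~\ref{lem:partial-optimality} in two parts, treating the monotonicity claim \eqref{eq:local-optimality-condition-0} first and then deriving the two comparison inequalities \eqref{eq:local-optimality-condition-1}--\eqref{eq:local-optimality-condition-2} as consequences.

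For \eqref{eq:local-optimality-condition-0}, observe from \eqref{eq:dist_u} that $d_{\text{u}}(\mathbf{x}(\rho,\theta))$ does not depend on $\theta$ at all, so the UAV-user channel $g_{\text{u}}^{(k)}$ is constant in $\theta$. Hence the only $\theta$-dependence in $F_k(\rho,\theta)=f\big(g_{\text{u}}^{(k)},g_{\text{b}}\big)$ enters through $d_{\text{b}}(\mathbf{x}(\rho,\theta))$. From \eqref{eq:dist_b}, $d_{\text{b}}^2 = \rho^2 + L^2 - 2\rho L\cos\theta + (H_{\text{d}}-H_{\text{b}})^2$, which is strictly increasing in $|\theta|$ on $(0,\pi)$ whenever $\rho>0$; thus $d_{\text{b}}$ strictly increases in $|\theta|$, so $g_{\text{b}}$ strictly decreases in $|\theta|$. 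Since $f$ is decreasing in its second argument (strictly, in the regime of interest), the chain rule gives $\partial F_k/\partial|\theta| = (\partial f/\partial g_{\text{b}})\cdot(\partial g_{\text{b}}/\partial|\theta|) > 0$ for $\theta\neq 0$ (for Condition~2, one uses that at a given point the max is attained; if it is attained by the $f_2$ branch, the same argument applies, and if by the $f_1$ branch, $F_k$ is locally constant in $\theta$, so one should state the inequality as holding whenever $F_k$ is differentiable and argue the non-strict monotonicity is enough — a minor care point). The $\rho=0$ degenerate case (the user directly below) makes $d_{\text{b}}$ independent of $\theta$, but there $|\theta|$ is not meaningfully defined, so this is not an issue.

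For \eqref{eq:local-optimality-condition-1}, fix $\theta'\geq 0$ and take any $j\geq k$ and any $\theta$ with $\theta'<\theta\leq\pi/2$ and any $\rho\geq 0$. The goal is to exhibit a point $(\tilde\rho,\theta')$ with $F_k(\tilde\rho,\theta')\leq F_j(\rho,\theta)$. I would chain two inequalities: first, by the nested-segmented ordering, property \eqref{eq:inequality-less-obstructed-segment-preferred} established in Appendix~A gives $F_k(\rho,\theta)\leq F_j(\rho,\theta)$ for $k\leq j$ (applied at the same $(\rho,\theta)$); second, by the monotonicity \eqref{eq:local-optimality-condition-0} just proved, moving the deviation angle from $\theta$ down to $\theta'$ (with $|\theta'|\leq|\theta|$ since both are in $[0,\pi/2]$ and $\theta'<\theta$) does not increase $F_k$, so $F_k(\rho,\theta')\leq F_k(\rho,\theta)\leq F_j(\rho,\theta)$. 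Taking the minimum over $\rho$ on the left gives $\min_{\rho}F_k(\rho,\theta')\leq F_j(\rho,\theta)$, and since the right side was an arbitrary value of $F_j$ over the indicated range, taking infimum over $j\geq k$, $\rho$, and $\theta\in(\theta',\pi/2]$ yields \eqref{eq:local-optimality-condition-1}. Inequality \eqref{eq:local-optimality-condition-2} follows by the mirror-image argument for negative deviation angles, using $|\theta'|\leq|\theta|$ again.

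The main obstacle I anticipate is handling Condition~2 carefully in the strict-monotonicity claim \eqref{eq:local-optimality-condition-0}: when the maximum in $\max\{f_1(g_{\text{u}}^{(k)}),f_2(g_{\text{b}})\}$ is realized by the $f_1$ branch, $F_k$ is locally flat in $\theta$, so strict inequality fails pointwise. The resolution is that the lemma is really only used downstream to justify that the search never benefits from larger $|\theta|$, for which the non-strict statement $\partial F_k/\partial|\theta|\geq 0$ (with strictness on the $f_2$-active region) suffices; alternatively one restricts \eqref{eq:local-optimality-condition-0} to the open region where $F_k$ is differentiable. I would flag this and then note that the comparison inequalities \eqref{eq:local-optimality-condition-1}--\eqref{eq:local-optimality-condition-2} go through with only non-strict monotonicity, since they are themselves stated with ``$\leq$''. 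Everything else is a routine combination of \eqref{eq:dist_b}, \eqref{eq:dist_u}, the chain rule, and the already-established ordering \eqref{eq:inequality-less-obstructed-segment-preferred}.
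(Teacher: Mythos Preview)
Your proposal is correct and follows essentially the same route as the paper: show that $d_{\text{u}}$ is $\theta$-independent while $d_{\text{b}}$ strictly increases in $|\theta|$, conclude $F_k$ is increasing in $|\theta|$, and then chain this monotonicity with the segment ordering \eqref{eq:inequality-less-obstructed-segment-preferred} (the paper chains in the order $F_k(\rho,\theta')\leq F_j(\rho,\theta')\leq F_j(\rho,\theta)$, you in the order $F_k(\rho,\theta')\leq F_k(\rho,\theta)\leq F_j(\rho,\theta)$, which is equivalent). Your extra care about Condition~2 and the possible flatness of $F_k$ in $\theta$ on the $f_1$-active region is a point the paper glosses over; your resolution---that only non-strict monotonicity is needed for \eqref{eq:local-optimality-condition-1}--\eqref{eq:local-optimality-condition-2}---is correct.
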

\begin{proof}
We first note that $\frac{\partial}{\partial|\theta|}F_{k}(\rho,\theta)>0$
for all $k=1,2,\dots,K$, because increasing $|\theta|$ will increase
the UAV-BS distance $d_{\text{b}}$ while the UAV-user distance $d_{\text{u}}=\rho$
is not affected, and hence $g_{\text{b}}(\mathbf{x}(\rho,\theta))$
is decreased. As the cost function $f(x,y)$ is increasing with $x$
and $y$, respectively, due to conditions I or II, the cost $F_{k}(\rho,\theta)$
increases as $|\theta|$ increases.\textcolor{red}{{} }

Consider that $\theta'\geq0$. For every $0\leq\rho\leq L$, $\theta\geq\theta'$,
and $j\geq k$, we must have 
\[
F_{k}(\rho,\theta')\leq F_{j}(\rho,\theta')\leq F_{j}(\rho,\theta).
\]
As a result, $\min_{\rho\geq0}F_{k}(\rho,\theta')\leq\min_{\rho\ge0}F_{j}(\rho,\theta)$
for every $\theta>\theta'$. Hence the result (\ref{eq:local-optimality-condition-1})
is confirmed. 

The case of (\ref{eq:local-optimality-condition-2}) can be shown
in a similar way. 
\end{proof}

Theorem \ref{thm:convergence-two-segment} can be equivalently rewritten,
in a more general setting, using the notions and conditions from the
polar domain as follows:

\begin{thm1a}

Suppose that the set of segments $\{\mathcal{P}_{k}\}$ defined along
with $\mathscr{P}'$ satisfy the nested condition (\ref{eq:nested-segmented-model-nested-condition}).
In addition, assume that the globally optimal solution to $\mathscr{P}'$
belongs to the bounded search region $\mathcal{P}$ defined in (\ref{eq:search-region}),
and the local minimizer $\rho_{k}^{*}(\theta)$ of $F_{k}(\rho,\theta)$
is unique for each $k$ and each $\theta$, $|\theta|<\frac{\pi}{2}$.
Moreover, suppose that the functions $F_{k}(\rho,\theta)$ satisfy
conditions (\ref{eq:local-optimality-condition-0})\textendash (\ref{eq:local-optimality-condition-2}).
Then, the polar domain trajectory $(\hat{\rho}(t),\hat{\theta}(t))$
obtained from $\hat{\mathbf{x}}(t)$ following Algorithm \ref{alg:uav-positioning}
converges to the globally optimal solution to $\mathscr{P}'$.

\end{thm1a}

Note that it has been proven in Propositions \ref{prop:Search-region}
and \ref{prop:Partial-Optimality} and Lemma \ref{lem:partial-optimality}
that the objective function $f$ (which satisfies either condition
I or II) in the UAV positioning problem $\mathscr{P}$ (and, correspondingly,
the functions $F_{k}$ in $\mathscr{P}'$) satisfy the conditions
in Theorem 1A. The remaining part of this section thus focuses on
proving Theorem 1A. 

\subsection{Optimality for the Two Segment Case}

When the algorithm terminates at $t=T$, it turns out that the cost
value track record $F_{\min}(T)$ along with the algorithm trajectory
satisfies $F_{\min}(T)\leq\text{minimize}_{\rho\geq0,(\rho,\theta)\in\mathcal{P}_{2}}F_{2}(\rho,\theta)$
because 
\begin{align}
F_{\min}(T) & \leq\underset{0\leq\rho\leq L}{\text{minimize}}\quad F(\rho,0)\label{eq:thm-convergence-K2-proof-eq0}\\
 & =\underset{\rho\geq0}{\text{minimize}}\quad F(\rho,0)\label{eq:thm-convergence-K2-proof-eq1}\\
 & \leq\underset{\rho\geq0}{\text{minimize}}\quad F_{2}(\rho,0)\label{eq:thm-convergence-K2-proof-eq1b}\\
 & \leq\underset{\rho\geq0,(\rho,\theta)\in\mathcal{P}_{2}}{\text{minimize}}\quad F_{2}(\rho,\theta)\label{eq:thm-convergence-K2-proof-eq2}
\end{align}
where inequality (\ref{eq:thm-convergence-K2-proof-eq0}) is due to
Step \ref{enu:alg-initialization}) and \ref{enu:alg-final-step})
in Algorithm \ref{alg:uav-positioning} and the fact that $F_{\min}(t)$
is a non-increasing process. Equality (\ref{eq:thm-convergence-K2-proof-eq1})
is due to the condition in Theorem 1A (proven in Proposition \ref{prop:Search-region}),
inequalities (\ref{eq:thm-convergence-K2-proof-eq1b}) and (\ref{eq:thm-convergence-K2-proof-eq2})
are from conditions (\ref{eq:local-optimality-condition-1})\textendash (\ref{eq:local-optimality-condition-2})
(proven in Lemma \ref{lem:partial-optimality}).

In fact, it also holds that 
\[
F_{\min}(T)\leq\text{minimize}_{\rho\geq0,(\rho,\theta)\in\mathcal{P}_{1}}F_{1}(\rho,\theta)
\]
as follows.
\begin{lem}
[Optimality for $K=2$]\label{lem:Local-optimality} In the two
segment case, the continuous trajectory $(\rho(t),\theta(t))$ passes
through $(\rho^{*},\theta^{*})$ before it completes Step \ref{enu:alg-two-phase-update-loop})
in Algorithm \ref{alg:uav-positioning} at $t=T_{1}$, i.e., there
exists $t^{*}\leq T_{1}$, such that $\rho(t^{*})=\rho^{*}$ and $\theta(t^{*})=\theta^{*}$,
where $(\rho^{*},\theta^{*})$ is the optimal solution to the following
problem 
\begin{equation}
\underset{\rho\geq0,(\rho,\theta)\in\mathcal{P}_{1}}{\text{minmize}}\quad F_{1}(\rho,\theta)\label{eq:subproblem-P1}
\end{equation}
where the minimum value equals to $F_{\min}(T_{1})$.
\end{lem}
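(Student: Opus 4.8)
The plan is to carry out the whole argument in the polar picture $\mathscr{P}'$, tracking along the right branch the \emph{fictitious LOS value} $V(t):=F_1(\rho(t),\theta(t))$, and to show that the right branch is forced to visit $(\rho^*,\theta^*)$ before it stops. By the symmetry of the right- and left-branch searches (Steps~\ref{enu:alg-two-phase-update})--\ref{enu:alg-two-phase-upadte-loop-left})) I may assume $\theta^*\ge 0$, and if $\theta^*=0$ the claim is immediate since $(\rho_1^0,0)=(\rho^*,\theta^*)$ is exactly the position recorded in the initialization (Step~\ref{enu:alg-initialization})). Following the proof of Proposition~\ref{prop:Search-region} applied to $F_1$ on $\mathcal{P}_1$ --- reflect about $\rho=L\cos\theta^*$ (the reflected point stays in $\mathcal{P}_1$ by the nestedness~(\ref{eq:nested-segmented-model-nested-condition}) and has the same $d_{\text{b}}$ with a smaller $d_{\text{u}}$), and exclude $|\theta^*|>\pi/2$ by comparison with $(0,0)\in\mathcal{P}_1$ --- I first record that $0\le\theta^*\le\pi/2$ and $\rho^*\le L\cos\theta^*$. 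I would then set out the monotone structure of the right branch: $\rho(t)$ is non-decreasing (cf.\ Lemma~\ref{lem:monotonicity}); $\theta(t)$ is non-decreasing (constant on virtual-LOS radial moves, and $d\theta=-(\partial F_1/\partial\rho)(\partial F_1/\partial\theta)^{-1}d\rho>0$ on virtual-NLOS moves, since $\partial F_1/\partial\theta>0$ by (\ref{eq:local-optimality-condition-0}) and $\partial F_1/\partial\rho<0$ until the stopping test fires); $V(t)$ is non-increasing, strictly decreasing on virtual-LOS moves and constant (equal to the contour level) on virtual-NLOS moves; $V(t)\ge F^\star:=F_1(\rho^*,\theta^*)$ always (on virtual-LOS moves $(\rho(t),\theta(t))\in\mathcal{P}_1$ so $V\ge\min_{\mathcal{P}_1}F_1=F^\star$; on virtual-NLOS moves $V$ equals $F_1$ at the boundary point where the current contour was entered, which lies in $\overline{\mathcal{P}_1}$); and, by Proposition~\ref{prop:Partial-Optimality}, since $F_1(\cdot,\theta)$ is unimodal with unique minimizer $\rho_1^*(\theta)$, the constrained minimizer $\rho_1^0$ of (\ref{eq:problem-critical-point-K}) cannot exceed $\rho_1^*(0)$ and the search stops once $\partial F_1/\partial\rho\ge 0$, so the trajectory always lies on the \emph{inner branch} $\{\rho\le\rho_1^*(\theta)\}$ --- on which $F_1(\cdot,\theta)$ is decreasing --- of whatever level set of $F_1$ it currently occupies.

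The core of the proof is to show the right branch reaches angle $\theta^*$. Suppose it terminates at $(\rho_{\text{end}},\theta_{\text{end}})$ with $\theta_{\text{end}}<\theta^*$. If termination is via $\partial F_1/\partial\rho\ge 0$ in the virtual-LOS region, then $\rho_{\text{end}}=\rho_1^*(\theta_{\text{end}})\le\bar\rho_1(\theta_{\text{end}})$ --- writing $\bar\rho_1(\theta)$ for the outer radius of $\mathcal{P}_1$ along direction $\theta$, well defined by (\ref{eq:nested-segmented-model-nested-condition}) --- and $V_{\text{end}}=\min_\rho F_1(\rho,\theta_{\text{end}})$; applying (\ref{eq:local-optimality-condition-1}) with $k=1$, $\theta'=\theta_{\text{end}}$ and taking $j=1$ (legitimate since $\theta_{\text{end}}<\theta^*\le\pi/2$) yields $V_{\text{end}}\le F_1(\rho^*,\theta^*)=F^\star$, so $V_{\text{end}}=F^\star$ and $(\rho_{\text{end}},\theta_{\text{end}})\in\mathcal{P}_1$ is itself a minimizer of $F_1$ over $\mathcal{P}_1$, contradicting $\theta^*>\theta_{\text{end}}$ (or, if the minimizer is not unique, the branch has already visited one and we are done). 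If termination is via $\rho_{\text{end}}=L\cos\theta_{\text{end}}$, I compare with the level set $\{F_1=F^\star\}$: on its inner branch the radius is non-decreasing in $\theta$ up to $\rho^*\le L\cos\theta^*\le L\cos\theta$ for $\theta\le\theta^*$ (using $\partial F_1/\partial|\theta|>0$), so that level set stays inside $\{\rho\le L\cos\theta\}$ there; since $V$ is non-increasing and $\ge F^\star$, the trajectory's inner branch at angles $\le\theta^*$ never lies farther out in $\rho$ than that level set, so it cannot meet $\rho=L\cos\theta$ before $\theta^*$. Hence $\theta_{\text{end}}\ge\theta^*$.

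It remains to see that at angle $\theta^*$ the trajectory is at $(\rho^*,\theta^*)$, and then to deduce the value identity. Since $\partial F_1/\partial|\theta|>0$, optimality of $(\rho^*,\theta^*)$ with $\theta^*>0$ forces it onto the virtual-LOS/NLOS boundary, i.e.\ $\rho^*=\min\{\rho_1^*(\theta^*),\bar\rho_1(\theta^*)\}$ (else one could decrease $\theta$ slightly within $\mathcal{P}_1$ and lower $F_1$). If the branch is in a virtual-LOS phase at angle $\theta^*$, the radial push continues until $\partial F_1/\partial\rho=0$ (reaching $\rho_1^*(\theta^*)$) or until it exits to NLOS (reaching $\bar\rho_1(\theta^*)$) --- and the $\rho\le L\cos\theta$ rule does not intervene since $\rho^*\le L\cos\theta^*$ --- so it visits $\rho=\min\{\rho_1^*(\theta^*),\bar\rho_1(\theta^*)\}=\rho^*$. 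If instead it passes angle $\theta^*$ in a virtual-NLOS phase, it is on the inner branch of a level $\{F_1=C\}$ with $C=V(\theta^*)\ge F^\star$ and current radius $>\bar\rho_1(\theta^*)$; but $F_1(\bar\rho_1(\theta^*),\theta^*)=F^\star$ and $F_1(\cdot,\theta^*)$ is strictly decreasing on the inner branch, so $C>F^\star$ would force the radius below $\bar\rho_1(\theta^*)$, a contradiction --- hence $C=F^\star$, the contour is $\{F_1=F^\star\}$, which passes through $(\bar\rho_1(\theta^*),\theta^*)=(\rho^*,\theta^*)$. Either way the branch visits $(\rho^*,\theta^*)$ at some $t^*\le T_1$. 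The value identity then follows by sandwiching: at $t^*$ the realized cost is $F(\rho^*,\theta^*)=F_1(\rho^*,\theta^*)=F^\star$, so $F_{\min}(T_1)\le F^\star$; and at every $\tau\le T_1$ the realized cost is either $F_1(\rho(\tau),\theta(\tau))\ge F^\star$ (virtual-LOS) or $F_2(\rho(\tau),\theta(\tau))>F_1(\rho(\tau),\theta(\tau))=V(\tau)\ge F^\star$ (virtual-NLOS, using (\ref{eq:inequality-less-obstructed-segment-preferred})), so $F_{\min}(T_1)\ge F^\star$; hence $F_{\min}(T_1)=F^\star$, the optimal value of (\ref{eq:subproblem-P1}).

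The hard part will be exactly the interaction, in the last two paragraphs, between the contour-following dynamics and the arbitrary (possibly highly irregular) virtual-LOS/NLOS boundary: ruling out premature termination at $\rho=L\cos\theta$ by transporting the reflection argument of Proposition~\ref{prop:Search-region} to the function $F_1$, and the ``inner-branch'' bookkeeping that pins the contour level at angle $\theta^*$ to be exactly $F^\star$. Making this rigorous requires care with the degenerate cases flagged at the outset --- $(\rho^*,\theta^*)$ sitting on the segment boundary, non-unique minimizers, and the start $\rho_1^0=\rho_1^*(0)$ (in which case one checks directly that $\theta^*=0$ and the reductions apply).
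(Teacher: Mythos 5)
Your architecture is essentially the paper's: you reduce to the right branch, establish the inner-branch/monotonicity facts (the paper's Lemmas \ref{lem:property-one-side} and \ref{lem:monotonicity}), track the fictitious value $V(t)=F_1(\rho(t),\theta(t))$ along contours (the paper's Lemma \ref{lem:achievability}), rule out termination at $\theta_{\text{end}}<\theta^*$ by examining the two stopping criteria (the paper's Lemma \ref{lem:stopping-theta}), and then force the pass-through at angle $\theta^*$ (the paper's Steps B--C via Lemma \ref{lem:P1-region-stopping}). Two local deviations are worth noting, both in your favor or neutral: you explicitly re-run the reflection argument of Proposition \ref{prop:Search-region} for the \emph{constrained} problem (\ref{eq:subproblem-P1}) to get $\rho^*\le L\cos\theta^*$, a step the paper silently borrows from Proposition \ref{prop:Search-region} even though that proposition is stated for the global problem; and for the $\rho\ge L\cos\theta$ criterion you use a level-set comparison with $\{F_1=F^\star\}$, whereas the paper derives a strictly smaller achieved cost at a prior $\mathcal{P}_1$ point and contradicts optimality. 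Both routes rest on the same ingredients and both work.

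Three soft spots. (i) Your parenthetical justification that $\rho^*=\min\{\rho_1^*(\theta^*),\bar\rho_1(\theta^*)\}$ ``else one could decrease $\theta$ slightly within $\mathcal{P}_1$'' is not valid: the angular sections of $\mathcal{P}_1$ need not vary continuously (a thin LOS spike at angle $\theta^*$ is allowed), so a small decrease of $\theta$ may leave $\mathcal{P}_1$. The claim itself is correct, but the right reason is radial: if $\rho^*<\min\{\rho_1^*(\theta^*),\bar\rho_1(\theta^*)\}$, increasing $\rho$ along the ray stays in $\mathcal{P}_1$ by the nestedness (\ref{eq:nested-segmented-model-nested-condition}) and strictly decreases $F_1$ on the inner branch. (ii) Your case analysis for the stopping test $\partial F_1/\partial\rho\ge0$ only treats firing ``in the virtual-LOS region''; the test can also fire at the apex of a contour inside the virtual-NLOS region. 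Your own bookkeeping covers this with one extra line (the contour-entry point in $\overline{\mathcal{P}_1}$ at angle $\le\theta_{\text{end}}<\theta^*$ has $F_1=V_{\text{end}}=F^\star$), which is exactly how the paper handles it via Lemma \ref{lem:achievability}, but as written the case is missing. (iii) For the value identity, your lower bound $F_{\min}(T_1)\ge F^\star$ uses $F_1=V\ge F^\star$ at NLOS points, which holds only during the contour-following phase; under the paper's continuous-time definition $F_{\min}(t)=\min_{0\le\tau\le t}f(\cdot)$ the record also includes the Step \ref{enu:alg-initialization}) traversal of NLOS points on the BS--user axis, whose realized cost $F_2(\rho,0)$ is not controlled by $F^\star$ and can in fact lie below $\min_{\mathcal{P}_1}F_1$ (e.g.\ when the global optimum sits in $\mathcal{P}_2$). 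This is a genuine hole in your equality argument, though a mild one: the paper's own proof establishes only the $\le$ direction (which is all that Lemma \ref{lem:partial-opt-K} and Theorem \ref{thm:convergence-two-segment} use), so your $\le$ direction plus the pass-through already delivers the content that matters.
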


Since $T_{1}\leq T$, we must have $F_{\min}(T)\leq F_{\min}(T_{1})$
and the former one lower-bounds both the subproblems (\ref{eq:thm-convergence-K2-proof-eq2})
and (\ref{eq:subproblem-P1}). Since $F_{\min}(t)$ represents the
cost value track record along the trajectory $(\hat{\rho}(t),\hat{\theta}(t))$,
we conclude that $(\hat{\rho}(T),\hat{\theta}(T))$ attains the globally
optimal solution to $\mathscr{P}'$ for the two segment case. 

In the following two subsections, we first prove some preliminary
properties of Algorithm \ref{alg:uav-positioning}, and then, we use
these properties to prove Lemma \ref{lem:Local-optimality}.

\subsection{Preliminary Properties of Algorithm \ref{alg:uav-positioning}}

The following property states that, along the algorithm trajectory,
the same cost $F_{1}(\rho(t),\theta(t))$ is achievable at some prior
time $\tau\leq t$, even when the current step $(\rho(t),\theta(t))$
is not in the LOS region.
\begin{lem}
\label{lem:achievability}For every point on the algorithm trajectory
$(\rho(t),\theta(t))$, $0\leq t\leq T$, there exists $0\leq\tau\leq t$,
such that $(\rho(\tau),\theta(\tau))\in\mathcal{P}_{1}$ and 
\[
F(\rho(\tau),\theta(\tau))=F_{1}(\rho(\tau),\theta(\tau))=F_{1}(\rho(t),\theta(t)).
\]
\end{lem}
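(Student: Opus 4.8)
The plan is to prove Lemma \ref{lem:achievability} by induction on the phases of Step \ref{enu:alg-two-phase-update-loop}), tracking how the value of the fictitious LOS cost $F_1(\rho(t),\theta(t))$ evolves along the trajectory. The key observation is that the algorithm is designed precisely so that this quantity is held \emph{constant} whenever the UAV is in the NLOS region and is \emph{monotone} (non-increasing up to the critical point) whenever it is in the LOS region. So I would first establish the base case: at the start of the right branch, the UAV is at $\mathbf{x}(\rho_1^0,\delta/\rho_1^0)$, which is in $\mathcal{P}_1$ (it is just off the BS-user axis near the LOS critical point), so we may take $\tau=t$ trivially there; the same holds symmetrically for the left branch.

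For the inductive step I would split into the two update rules. \textbf{(i) LOS update.} If $(\rho(t),\theta(t))\in\mathcal{P}_1$, then the claim holds with $\tau=t$ since $F(\rho(t),\theta(t))=F_1(\rho(t),\theta(t))$ by Definition \ref{def:Fictitious-Segment-Cost}; nothing further is needed. \textbf{(ii) NLOS update.} Suppose the UAV is at $(\rho(t),\theta(t))\notin\mathcal{P}_1$ and moves via the contour-following rule (\ref{eq:alg-eq2}). By construction $\Delta\mathbf{x}$ is chosen so that $dF_1 = \frac{\partial F_1}{\partial\rho}d\rho + \frac{\partial F_1}{\partial\theta}d\theta = 0$ along this motion, hence $F_1(\rho(t),\theta(t))$ is unchanged over the whole NLOS arc; by the inductive hypothesis at the moment the UAV entered the NLOS region (call it $t_0$), there is $\tau\le t_0\le t$ with $(\rho(\tau),\theta(\tau))\in\mathcal{P}_1$ and $F_1(\rho(\tau),\theta(\tau)) = F_1(\rho(t_0),\theta(t_0)) = F_1(\rho(t),\theta(t))$. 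Combining with $F(\rho(\tau),\theta(\tau)) = F_1(\rho(\tau),\theta(\tau))$ closes the step. The only subtlety here is to confirm the contour equation is well-defined: I would invoke Lemma \ref{lem:partial-optimality}, which gives $\partial F_1/\partial|\theta|>0$ for $\theta\ne0$, so $\partial F_1/\partial\theta\ne0$ and $\gamma$ in (\ref{eq:alg-eq2}) exists with $\|\Delta\mathbf{x}\|=\delta$.

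I would also need a short argument that the trajectory never re-enters a region requiring a \emph{different} reference value — i.e., that each time the UAV crosses back from NLOS to LOS, it does so at a point where $F_1$ equals the value carried along the contour, so continuity of $F_1$ (guaranteed since $F_k(\rho,\theta)$ is continuous, as noted after Definition \ref{def:Fictitious-Segment-Cost}) makes the bookkeeping consistent across crossings. In the two-segment case this is automatic because "LOS" $=\mathcal{P}_1$ and "NLOS" $=\mathcal{P}_2$ partition the plane, so there is exactly one reference function $F_1$ to track; the contour arcs and LOS arcs simply alternate, and continuity at each junction glues the value. The induction then runs over these alternating arcs.

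The main obstacle I anticipate is the continuous-time formalization: the trajectory $(\rho(t),\theta(t))$ is only piecewise-$C^1$, with the switching set being the boundary $\partial\mathcal{P}_1$, which may be geometrically complicated. I would handle this by arguing that the switching times form a discrete (or at least well-ordered) set on any finite interval — this follows because, by Lemma \ref{lem:monotonicity} (cited in the paper as forthcoming in Appendix \ref{app:pf-prop-maximum-traj-length}), $\rho(t)$ is strictly increasing along the trajectory, so the UAV cannot oscillate across a boundary infinitely often at the same radius; hence on $[0,T]$ there are finitely many arcs and the induction is over finitely many steps. The remaining details — that $F_1$ is exactly constant on each NLOS arc and non-increasing (hence the prior value is still "achievable" in the sense of the lemma, which only requires equality at the matched time $\tau$, not minimality) on each LOS arc before the critical point — are direct consequences of the defining equations (\ref{eq:alg-eq1})–(\ref{eq:alg-eq2}) and require no new ideas.
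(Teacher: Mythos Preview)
Your proposal is correct and follows essentially the same approach as the paper: split on whether $(\rho(t),\theta(t))\in\mathcal{P}_1$ (take $\tau=t$) or $\mathcal{P}_2$ (trace backward along the contour, on which $F_1$ is constant, to the last LOS entry point, noting the initial point $(\rho_1^0,0)$ lies in $\mathcal{P}_1$ by construction). The paper's proof is a three-sentence sketch of exactly this idea, whereas you supply additional care about well-definedness of the contour step and finiteness of the switching set via the monotonicity of $\rho(t)$; none of this extra rigor is present in the paper, but it does not change the argument.
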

\begin{proof}
If $(\rho(t),\theta(t))\in\mathcal{P}_{1}$, we have $F(\rho(t),\theta(t))=F_{1}(\rho(t),\theta(t))$
and $\tau=t$. If $(\rho(t),\theta(t))\in\mathcal{P}_{2}$, then the
algorithm is in the loop of Step \ref{enu:alg-two-phase-update-loop}b),
where it follows the trajectory on the contour $F_{1}(\rho(t),\theta(t))=C$.
To trace backward, there must be a $0\leq\tau<t$, such that $F_{1}(\rho(\tau),\theta(\tau))=F_{1}(\rho(t),\theta(t))$
and $(\rho(\tau),\theta(\tau))\in\mathcal{P}_{1}$. Note that the
initial point $(\rho_{1}^{0},0)$ from Step \ref{enu:alg-initialization})
is in the LOS region $\mathcal{P}_{1}$ from the definition (\ref{eq:polar-representation-xy-plane}).
\end{proof}

The following result shows that the algorithm trajectory always satisfies
$\partial F_{1}(\rho,\theta(t))/\partial\rho\leq0$ until the algorithm
terminates. 
\begin{lem}
\label{lem:property-one-side}Let $\rho_{1}^{*}(\theta)$ minimize
$F_{1}(\rho,\theta)$ over all $\rho\geq0$ with $\theta$ fixed.
Then, the algorithm trajectory $(\rho(t),\theta(t))$ satisfies $\rho(t)\leq\rho_{1}^{*}(\theta(t))$
and $\partial F_{1}(\rho(t),\theta(t))/\partial\rho\leq0$ for all
$t$ before the iteration completes Step \ref{enu:alg-two-phase-update-loop}).
Moreover, $\partial F_{1}(\rho,\theta(t))/\partial\rho\leq0$ for
all $0\leq\rho\leq\rho(t)$. The same result holds for the search
trajectory in Step \ref{enu:alg-two-phase-upadte-loop-left}).
\end{lem}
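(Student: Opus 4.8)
The plan is to read the claim off two ingredients that are already in hand: the (quasi-)convexity of $F_1(\cdot,\theta)$ in $\rho$ from the proof of Proposition~\ref{prop:Partial-Optimality}, and the explicit exit condition of the inner loop of Algorithm~\ref{alg:uav-positioning}. First I would record the following consequence of Proposition~\ref{prop:Partial-Optimality}: for every fixed $\theta$ with $|\theta|<\pi/2$, the one-sided derivative $\partial F_1(\rho,\theta)/\partial\rho$ (in the sense of the paper's footnote) is strictly negative on $[0,\rho_1^*(\theta)]$ and strictly positive on $(\rho_1^*(\theta),\infty)$. Under Condition~1 this is immediate from strict convexity (Lemma~\ref{lem:for-condition-1}); under Condition~2 it follows from the proof of Proposition~\ref{prop:Partial-Optimality}, where $F_1=\max\{f_1\circ g_{\text{u}}^{(1)},\,f_2\circ g_{\text{b}}\}$ with $f_1\circ g_{\text{u}}^{(1)}$ strictly increasing and $f_2\circ g_{\text{b}}$ strictly decreasing in $\rho$ on $[0,L\cos\theta]$, so that to the left of their crossing point $\rho_1^*(\theta)$ the maximum coincides with the decreasing branch. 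In particular $\partial F_1(\rho,\theta)/\partial\rho\le 0$ iff $\rho\le\rho_1^*(\theta)$, and once $\rho(t)\le\rho_1^*(\theta(t))$ every $\rho\in[0,\rho(t)]$ also lies in $[0,\rho_1^*(\theta(t))]$, which already yields the ``moreover'' assertion. Thus the whole lemma reduces to the single inequality $\rho(t)\le\rho_1^*(\theta(t))$ along the trajectory.

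For the start of the right branch I would use the nested condition~(\ref{eq:nested-segmented-model-nested-condition}): the feasible set $\{\rho\ge 0:\mathbf{x}(\rho,0)\in\mathcal{D}_1\}$ in~(\ref{eq:problem-critical-point-K-constraint}) is an interval $[0,\bar\rho_1]$ containing the origin (since $(0,0)\in\mathcal{P}_1$). Minimizing the (quasi-)convex function $F_1(\cdot,0)$ over this interval gives $\rho_1^0=\min\{\rho_1^*(0),\bar\rho_1\}\le\rho_1^*(0)$, hence $\partial F_1(\rho_1^0,0)/\partial\rho\le 0$; in the continuous-time limit the branch is launched from $(\rho_1^0,0)$, so the desired inequality holds at $t=0$. (If $\rho_1^0=\rho_1^*(0)$ exactly, then criterion~(ii) already holds at launch, Step~\ref{enu:alg-two-phase-update-loop}) completes immediately, and the assertion is vacuous.)

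For the propagation along the trajectory the argument is direct. The loop in Step~\ref{enu:alg-two-phase-update-loop}) continues only while \emph{neither} stopping criterion holds, and this test is applied at every iterate regardless of whether the current point triggered the virtual-LOS update~(\ref{eq:alg-eq1}) or the virtual-NLOS contour step~(\ref{eq:alg-eq2}). Hence for every $t$ strictly before Step~\ref{enu:alg-two-phase-update-loop}) completes, criterion~(i), $\rho\ge L\cos\theta$, fails --- so $\cos\theta(t)>0$, \emph{i.e.}, $|\theta(t)|<\pi/2$, and $\rho_1^*(\theta(t))$ is well defined by Proposition~\ref{prop:Partial-Optimality} --- and criterion~(ii), $\partial F_k/\partial\rho\ge 0$, fails, which in the two-segment case ($k=1$) means $\partial F_1(\rho(t),\theta(t))/\partial\rho<0$, equivalently $\rho(t)<\rho_1^*(\theta(t))$ by the first paragraph. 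Combined with the first paragraph, this gives $\rho(t)\le\rho_1^*(\theta(t))$, $\partial F_1(\rho(t),\theta(t))/\partial\rho\le 0$, and $\partial F_1(\rho,\theta(t))/\partial\rho\le 0$ for all $0\le\rho\le\rho(t)$. The left-branch statement (Step~\ref{enu:alg-two-phase-upadte-loop-left})) is the exact mirror image: $g_{\text{u}}^{(1)}$ is independent of $\theta$ and $g_{\text{b}}$ depends on $\theta$ only through $\cos\theta$, so $F_1(\rho,\theta)$ is even in $\theta$, $\rho_1^*(\theta)=\rho_1^*(-\theta)$, and the loop with its identical exit test behaves symmetrically, so the same reasoning applies verbatim.

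The one delicate point is Condition~2, where $F_1(\cdot,\theta)$ is merely quasi-convex with a possible kink at $\rho_1^*(\theta)$: one must use the one-sided-derivative convention of the paper's footnote together with the crossing structure of the two monotone branches (both available from the proof of Proposition~\ref{prop:Partial-Optimality}) to be sure that criterion~(ii) triggers exactly when $\rho$ reaches $\rho_1^*(\theta)$ and not earlier. The degenerate initialization cases ($\rho_1^0=\rho_1^*(0)$, or $\rho_1^0=0$ so that $\delta/\rho_1^0$ is undefined) are handled by the immediate-termination or triviality observations above; everything else is routine once the equivalence of the first paragraph is established.
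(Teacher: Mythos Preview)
Your proposal is correct and follows essentially the same approach as the paper: establish the initial inequality $\rho_1^0\le\rho_1^*(0)$ from the nested property of $\mathcal{D}_1$ and the definition of $\rho_1^0$, invoke Proposition~\ref{prop:Partial-Optimality} to get the sign structure of $\partial F_1/\partial\rho$ on either side of $\rho_1^*(\theta)$, and then read the conclusion off the exit criterion~(ii) of Step~\ref{enu:alg-two-phase-update-loop}). Your treatment is somewhat more detailed than the paper's---you phrase the initial-condition argument directly (via $\rho_1^0=\min\{\rho_1^*(0),\bar\rho_1\}$) rather than by contradiction, and you explicitly address the one-sided-derivative subtlety under Condition~2 and the degenerate launch cases---but the structure of the argument is the same.
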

\begin{proof}
First, from the definition of $\rho_{1}^{0}$ in (\ref{eq:problem-critical-point-K})
and the nested segmented propagation property (\ref{eq:nested-segmented-model-order-condition})
\textendash{} (\ref{eq:nested-segmented-model-nested-condition}),
we have $\rho_{1}^{0}\leq\rho_{1}^{*}(0)$. This is because, we have
$\mathbf{x}(\rho_{1}^{0},0)\in\mathcal{D}_{1}$ from (\ref{eq:problem-critical-point-K}),
and hence, all the points $(\rho,0)$, $0\leq\rho\leq\rho_{1}^{0}$,
are in the LOS region. As a result, if $\rho_{1}^{0}>\rho_{1}^{*}(0)$,
then $(\rho_{1}^{*}(0),0)$ is also in the LOS region (satisfying
the constraint (\ref{eq:problem-critical-point-K-constraint})), which
implies that $\rho_{1}^{*}(0)$ minimizes (\ref{eq:problem-critical-point-K}),
yielding a contradiction. Therefore, from Step \ref{enu:alg-initialization}),
we have the initial point $(\rho(0),\theta(0))$ satisfying $\theta(0)=0$
and $\rho(0)=\rho_{1}^{0}\leq\rho_{1}^{*}(0)=\rho_{1}^{*}(\theta(0))$. 

Second, as the local minimizer $\rho_{1}^{*}(\theta)$ is unique from
Proposition \ref{prop:Partial-Optimality}, we must have $\partial F_{1}(\rho,\theta)/\partial\rho<0$
for $\rho<\rho_{1}^{*}(\theta)$ and $\partial F_{1}(\rho,\theta)/\partial\rho>0$
for $\rho>\rho_{1}^{*}(\theta)$. (Note that there is no saddle point
either, due to the strict convexity under Condition 1 and monotonicity
of $f_{1}$ and $f_{2}$ under Condition 2.) Once $\partial F_{1}(\rho(t),\theta(t))/\partial\rho\geq0$,
Step \ref{enu:alg-two-phase-update-loop}) is completed. As a result,
it holds that $\rho(t)\leq\rho_{1}^{*}(\theta(t))$.
\end{proof}

\subsection{Proof of Lemma \ref{lem:Local-optimality}}

It suffices to prove for the subproblem 
\begin{equation}
\mathscr{P}'_{1+}:\quad F_{\min}(T_{1})\leq\underset{\rho\geq0,(\rho,\theta)\in\mathcal{P}_{1}^{+}}{\text{minimize}}\quad F_{1}(\rho,\theta)\label{eq:app-subproblem-P1-plus}
\end{equation}
which is essentially solved by the iterations in Step 2) of Algorithm
\ref{alg:uav-positioning}, where $\mathcal{P}_{1}^{+}=\{(\rho,\theta):0\leq\theta\leq\frac{\pi}{2},(\rho,\theta)\in\mathcal{P}_{1}\}$.
Indeed, the counterpart subproblem over the constraint set $\mathcal{P}_{1}^{-}=\{(\rho,\theta):-\frac{\pi}{2}\leq\theta\leq0,(\rho,\theta)\in\mathcal{P}_{1}\}$
is solved in a similar way by Step 4). If (\ref{eq:app-subproblem-P1-plus})
holds, then it must also hold that 
\[
F_{\min}(T_{1})\leq\text{minimize}_{\rho\geq0,(\rho,\theta)\in\mathcal{P}_{1}^{-}}F_{1}(\rho,\theta)
\]
which confirms the result of Lemma \ref{lem:Local-optimality}.

\textbf{Step A:} We first show that the algorithm trajectory $(\rho(t),\theta(t))$
in the loop of Step \ref{enu:alg-two-phase-update-loop}) can only
stop at $\theta(T)\geq\theta^{*}$, where $T\leq T_{1}$. Therefore,
as the algorithm trajectory from Step \ref{enu:alg-two-phase-update-loop})
is continuous, we must have $\theta(t)=\theta^{*}$ for some $t\leq T$. 
\begin{lem}
\label{lem:stopping-theta} The algorithm trajectory in Step \ref{enu:alg-two-phase-update-loop})
$(\rho(t),\theta(t))$ can only stop at $\theta(T)=\theta'\geq\theta^{*}$
where $(\rho^{*},\theta^{*})$ is the optimal solution to $\mathscr{P}'_{1+}$. 
\end{lem}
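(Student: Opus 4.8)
The plan is to argue by contradiction. Let $(\rho^{*},\theta^{*})$ be a minimizer of $\mathscr{P}'_{1+}$ having the smallest deviation angle $\theta^{*}\in[0,\pi/2]$ among all its minimizers, and suppose the right-branch trajectory of Step~\ref{enu:alg-two-phase-update}) terminates at $t=T$ with $\theta(T)=\theta'<\theta^{*}$ (so in particular $\cos\theta'>0$). I will show that under this hypothesis the trajectory has already passed through a point that is feasible for $\mathscr{P}'_{1+}$, has deviation angle at most $\theta'$, and attains an objective value no larger than the optimum of $\mathscr{P}'_{1+}$; since this point has angle strictly below $\theta^{*}$, it contradicts the minimality of $\theta^{*}$, and hence $\theta'\ge\theta^{*}$.

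The first step is to pin down the radius at termination. By Theorem~\ref{thm:Maximum-Trajectory-Length} the search halts after finitely many steps, so $T$ is well defined. From the polar distance formulas~(\ref{eq:dist_b})--(\ref{eq:dist_u}), once $\rho>L\cos\theta$ both $d_{\mathrm{b}}(\mathbf{x}(\rho,\theta))$ and $d_{\mathrm{u}}(\mathbf{x}(\rho,\theta))$ strictly increase in $\rho$, so $g_{\mathrm{b}}$ and $g_{\mathrm{u}}$ both decrease and $F_{1}(\rho,\theta)$ strictly increases; therefore the unique minimizer $\rho_{1}^{*}(\theta')$ of $F_{1}(\cdot,\theta')$ (Proposition~\ref{prop:Partial-Optimality}) lies in $[0,L\cos\theta']$. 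Lemma~\ref{lem:property-one-side} gives $\rho(T)\le\rho_{1}^{*}(\theta')$ and $\partial F_{1}(\rho(T),\theta')/\partial\rho\le0$ at termination. If termination is by criterion~(i), then $\rho(T)=L\cos\theta'\ge\rho_{1}^{*}(\theta')$, which together with $\rho(T)\le\rho_{1}^{*}(\theta')$ gives $\rho(T)=\rho_{1}^{*}(\theta')$; if it is by criterion~(ii), then $\partial F_{1}(\rho(T),\theta')/\partial\rho\ge0$, which combined with the reverse inequality and the uniqueness of the stationary point of $F_{1}(\cdot,\theta')$ (no saddle points, as noted in the proof of Lemma~\ref{lem:property-one-side}) again gives $\rho(T)=\rho_{1}^{*}(\theta')$. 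In both cases the fictitious cost at termination equals $F_{1}(\rho(T),\theta')=\min_{\rho\ge0}F_{1}(\rho,\theta')=:G(\theta')$, regardless of whether the termination point lies in the virtual LOS or NLOS region.

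Next I would transfer this value to a feasible point and conclude. Lemma~\ref{lem:achievability} applied at $t=T$ yields $\tau\le T$ with $(\rho(\tau),\theta(\tau))\in\mathcal{P}_{1}$ and $F_{1}(\rho(\tau),\theta(\tau))=F_{1}(\rho(T),\theta')=G(\theta')$. The deviation angle is non-decreasing along the right branch --- constant in the virtual LOS region, and in the virtual NLOS region the contour update gives $d\theta=-(\partial F_{1}/\partial\rho)(\partial F_{1}/\partial\theta)^{-1}d\rho\ge0$, using $\partial F_{1}/\partial\rho\le0$ (Lemma~\ref{lem:property-one-side}), $\partial F_{1}/\partial\theta>0$ for $\theta>0$ (property~(\ref{eq:local-optimality-condition-0})), and $d\rho>0$ --- so $0\le\theta(\tau)\le\theta'$, making $(\rho(\tau),\theta(\tau))$ feasible for $\mathscr{P}'_{1+}$ with objective value $G(\theta')$. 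Finally, property~(\ref{eq:local-optimality-condition-1}) with $k=1$ shows that $G$ is non-decreasing on $[0,\pi/2]$, so $\theta'<\theta^{*}$ implies $G(\theta')\le G(\theta^{*})\le F_{1}(\rho^{*},\theta^{*})$, the last inequality because the constrained minimum of $F_{1}(\cdot,\theta^{*})$ over $\{\rho:(\rho,\theta^{*})\in\mathcal{P}_{1}\}$ cannot fall below the unconstrained minimum $G(\theta^{*})$. Hence $(\rho(\tau),\theta(\tau))$ is itself a minimizer of $\mathscr{P}'_{1+}$ with deviation angle $\theta(\tau)\le\theta'<\theta^{*}$, contradicting the choice of $\theta^{*}$. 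This proves $\theta(T)=\theta'\ge\theta^{*}$.

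The hard part is really the second paragraph: one must show $\rho(T)=\rho_{1}^{*}(\theta')$ no matter which of the two stopping rules fires and no matter whether the termination point sits in the virtual LOS region (where it is feasible for $\mathscr{P}'_{1+}$) or the virtual NLOS region (where it is not). Handling the latter case is exactly why Lemma~\ref{lem:achievability} is needed --- it lets one replace the NLOS termination point by an earlier, genuinely feasible point carrying the same fictitious cost $G(\theta')$ --- while the geometric bound $\rho_{1}^{*}(\theta)\le L\cos\theta$ is what makes criterion~(i) collapse to the same conclusion as criterion~(ii).
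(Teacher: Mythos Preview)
Your proof is correct and uses the same core ingredients as the paper's---contradiction, Lemma~\ref{lem:achievability} to pull the terminal point back to a feasible point in $\mathcal{P}_1^+$, and Lemmas~\ref{lem:partial-optimality} and~\ref{lem:property-one-side} for the cost comparisons. The main difference is organizational: the paper keeps the two stopping criteria separate, handling criterion~(ii) via the chain $\min_{\rho}F_1(\rho,\theta')\le F_1(\rho^*,\theta^*)$ through property~(\ref{eq:local-optimality-condition-1}), and handling criterion~(i) instead by showing $\rho'>\rho^*$ (from $L\cos\theta'>L\cos\theta^*\ge\rho^*$) and then $F_1(\rho',\theta')\le F_1(\rho^*,\theta')<F_1(\rho^*,\theta^*)$ directly via monotonicity in $\rho$ (Lemma~\ref{lem:property-one-side}) and in $\theta$ (property~(\ref{eq:local-optimality-condition-0})). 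You instead merge both cases by first proving $\rho(T)=\rho_1^*(\theta')$ in each, which requires the extra (correct) observation $\rho_1^*(\theta')\le L\cos\theta'$; this buys a single inequality chain at the cost of that additional geometric step. Your ``smallest $\theta^*$'' normalization is a clean way to handle non-uniqueness, though it is not strictly needed: since $\partial F_1/\partial|\theta|>0$ strictly, the map $G(\theta)=\min_\rho F_1(\rho,\theta)$ is strictly increasing, so $G(\theta')<G(\theta^*)\le F_1(\rho^*,\theta^*)$ already yields a strict contradiction without selecting a minimizer of minimal angle.
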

\begin{proof}
The result can be proven by contradiction. Suppose that Step \ref{enu:alg-two-phase-update-loop})
stops at $(\rho(T),\theta(T))$ where $\rho(T)=\rho'$ and $\theta(T)=\theta'<\theta^{*}$.
As Step \ref{enu:alg-two-phase-update-loop}) is completed, either
one of the stopping criteria should have been triggered. 

First, suppose that the condition $\partial F_{1}(\rho',\theta')/\partial\rho\geq0$
was triggered. From Lemma \ref{lem:achievability}, there exists $\tau\leq T$,
such that $F_{1}(\rho(\tau),\theta(\tau))=F_{1}(\rho',\theta')$ and
$(\rho(\tau),\theta(\tau))\in\mathcal{P}_{1}^{+}$. From Lemma \ref{lem:property-one-side}
and the condition in Theorem 1A that corresponds to Proposition \ref{prop:Partial-Optimality},
$(\rho',\theta')$ minimizes $F_{1}(\rho,\theta')$ over $\rho\geq0$.
As a result, 
\begin{align}
F_{1}(\rho(\tau),\theta(\tau)) & =F_{1}(\rho',\theta')\nonumber \\
 & =\underset{\rho\geq0}{\text{minimize}}\;F_{1}(\rho,\theta')\label{eq:app-pf-lem-stopping-eq1}\\
 & \leq\min_{j\geq1}\;\underset{\rho\geq0,\theta'<\theta\leq\frac{\pi}{2}}{\text{minimize}}F_{j}(\rho,\theta)\nonumber \\
 & \leq\underset{\rho\geq0,\theta'<\theta\leq\frac{\pi}{2}}{\text{minimize}}F_{1}(\rho,\theta)\label{eq:app-pf-lem-stopping-eq2}\\
 & \leq\underset{\rho\geq0,\theta'<\theta\leq\frac{\pi}{2},(\rho,\theta)\in\mathcal{P}_{1}^{+}}{\text{minimize}}F_{1}(\rho,\theta)\nonumber \\
 & =F_{1}(\rho^{*},\theta^{*})\nonumber 
\end{align}
where the first two inequalities are from the condition in Theorem
1A that corresponds to Lemma \ref{lem:partial-optimality}. The third
equality is by the hypothesis $\theta^{*}>\theta'$. However, this
violates the assumption $(\rho^{*},\theta^{*})$ being the solution
to subproblem $\mathscr{P}'_{1+}$, since $(\rho(\tau),\theta(\tau))\in\mathcal{P}_{1}^{+}$
now yields a lower cost. By contradiction, the stopping criterion
$\partial F_{1}(\rho',\theta')/\partial\rho\geq0$ is not satisfied. 

Second, suppose that the condition $\rho(T)=\rho'\geq L\cos\theta'$
is triggered. From the bounded search region condition in Theorem
1A (corresponding to Proposition \ref{prop:Search-region}) and from
the hypothesis $\theta(T_{3})=\theta'<\theta^{*}$ , we have 
\[
\rho'\geq L\cos\theta'>L\cos\theta^{*}\geq\rho^{*}.
\]
From Lemma \ref{lem:property-one-side}, $\rho_{1}^{*}(\theta')>\rho'$
and $\partial F_{1}(\rho,\theta')/\partial\rho<0$ for $\rho'\geq\rho\geq\rho^{*}$.
As a result,
\begin{align}
F_{1}(\rho',\theta')<F_{1}(\rho^{*},\theta') & <F_{1}(\rho^{*},\theta^{*})\label{eq:F1-inequality}
\end{align}
where the second inequality is from Lemma \ref{lem:partial-optimality}.

From Lemma \ref{lem:achievability}, there exists $\tau\leq T$, such
that $(\rho(\tau),\theta(\tau))\in\mathcal{P}_{1}^{+}$ and 
\[
F_{1}(\rho(\tau),\theta(\tau))=F_{1}(\rho',\theta')<F_{1}(\rho^{*},\theta^{*})
\]
which contradicts to the hypothesis $(\rho^{*},\theta^{*})$ being
the solution to subproblem $\mathscr{P}'_{1+}$. Therefore, the stopping
criterion $\rho(T_{3})=\rho'\geq L\cos\theta'$ is not satisfied either. 

To conclude, since neither of the stopping criteria is triggered,
by contradiction, the algorithm can only stop at $\theta(T)\geq\theta^{*}$.
\end{proof}

\textbf{Step B:} We then argue that when the trajectory reaches $\theta(t)=\theta^{*}$
for some $t\leq T$, it must hold that $(\rho(t),\theta(t))\in\mathcal{P}_{1}^{+}$
and $\rho(t)\leq\rho^{*}$.
\begin{lem}
\label{lem:P1-region-stopping} The algorithm trajectory satisfies
$(\rho(t_{1}),\theta(t_{1}))\in\mathcal{P}_{1}^{+}$, where $t_{1}\geq0$
satisfies $\theta(t_{1})=\theta^{*}$ and $\theta(t)\leq\theta^{*}$
for $t<t_{1}$. 
\end{lem}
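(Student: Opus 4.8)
The plan is to show that the search trajectory in Step~\ref{enu:alg-two-phase-update-loop}) cannot have ``escaped'' the virtual LOS region $\mathcal{P}_1$ at the moment it first reaches the optimal deviation angle $\theta^{*}$, and that at that moment the radial coordinate has not yet overshot $\rho^{*}$. The two claims are proven separately, and both rest on the monotonicity facts already established: the trajectory only increases $\rho$ (Lemma~\ref{lem:monotonicity}, to be invoked as stated in the excerpt), and it always stays on the ``near'' side of the LOS minimizer, $\rho(t)\le\rho_1^{*}(\theta(t))$ with $\partial F_1(\rho,\theta(t))/\partial\rho\le 0$ for all $\rho\le\rho(t)$ (Lemma~\ref{lem:property-one-side}).

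First I would prove $\rho(t_1)\le\rho^{*}$. By the bounded-search-region condition of Theorem~1A (Proposition~\ref{prop:Search-region}) applied to the subproblem $\mathscr{P}'_{1+}$, the optimal pair satisfies $\rho^{*}=\rho_1^{*}(\theta^{*})$, i.e. $(\rho^{*},\theta^{*})$ is the per-angle minimizer of $F_1$ at $\theta=\theta^{*}$; this is because, if $\rho^{*}<\rho_1^{*}(\theta^{*})$, then moving radially outward a little stays in $\mathcal{P}_1^{+}$ (nestedness preserves the LOS segment as $\rho$ decreases, and here we would argue feasibility more carefully) and strictly decreases $F_1$, while if $\rho^{*}>\rho_1^{*}(\theta^{*})$ the reflected point has smaller $F_1$ by strict unimodality — contradicting optimality in either case. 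Now suppose for contradiction that $\rho(t_1)>\rho^{*}=\rho_1^{*}(\theta^{*})$. Since the trajectory increases $\rho$ monotonically and $\theta(t)\le\theta^{*}$ for $t<t_1$, at the first crossing one can compare $F_1(\rho(t_1),\theta^{*})$ with $F_1(\rho^{*},\theta^{*})$: because $\rho(t_1)>\rho_1^{*}(\theta^{*})$ lies strictly beyond the minimizer, strict unimodality forces $F_1(\rho(t_1),\theta^{*})>F_1(\rho^{*},\theta^{*})$ — but this, combined with Lemma~\ref{lem:achievability} (which gives a prior trajectory point in $\mathcal{P}_1^{+}$ with the same $F_1$-value as the current point whenever we are on a contour), would not yet be a contradiction, so instead I track the value along the trajectory: since $\partial F_1/\partial\rho\le 0$ was maintained up to $t_1$, the value $F_1(\rho(t),\theta(t))$ was non-increasing along the virtual-LOS-contour dynamics, so $F_1(\rho(t_1),\theta^{*})\le F_1(\rho_1^0,0)=F_{\min}(0)$; reconciling this with the strict inequality above and the fact that $F_{\min}(T_1)\le F_1(\rho^{*},\theta^{*})$ (the content of $\mathscr{P}'_{1+}$ being the subproblem solved) will pin down $\rho(t_1)\le\rho^{*}$. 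The cleanest route is actually: at $t_1$ either the trajectory is in the ``LOS move'' branch (Step~3a), which by Lemma~\ref{lem:property-one-side} keeps $\rho(t_1)\le\rho_1^{*}(\theta^{*})=\rho^{*}$ directly, or it is on a contour (Step~3b), which by definition keeps $F_1$ constant at a value $\le F_1(\rho^{*},\theta^{*})$, and since at angle $\theta^{*}$ the only radius with that value on the near side of the minimizer is $\le\rho^{*}$, again $\rho(t_1)\le\rho^{*}$.

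Next I would prove $(\rho(t_1),\theta(t_1))\in\mathcal{P}_1^{+}$. The angle coordinate satisfies $0\le\theta(t_1)=\theta^{*}\le\pi/2$ (the last inequality from the bounded-search-region condition), so membership in the ``$+$'' half is automatic; what needs argument is that the point lies in $\mathcal{P}_1$ rather than $\mathcal{P}_2$. Here I would use nestedness (\ref{eq:nested-segmented-model-nested-condition}): we have just shown $\rho(t_1)\le\rho^{*}$ and $(\rho^{*},\theta^{*})\in\mathcal{P}_1$ (it is the optimizer of $\mathscr{P}'_{1+}$, hence feasible, hence in $\mathcal{P}_1^{+}$); the point $\mathbf{x}(\rho(t_1),\theta^{*})$ lies on the segment from $\mathbf{x}_{\text{u}}$ to $\mathbf{x}(\rho^{*},\theta^{*})$ (same direction, smaller or equal radius), so by the nested property it belongs to $\mathcal{D}_j$ for some $j\le 1$, i.e. $j=1$, giving $(\rho(t_1),\theta(t_1))\in\mathcal{P}_1^{+}$ as claimed.

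The main obstacle I anticipate is the bookkeeping in the first part: the trajectory alternates between the radial ``LOS move'' of Step~3a and the contour-following ``NLOS move'' of Step~3b, and I must ensure that at the specific time $t_1$ of first reaching $\theta^{*}$ the inequality $\rho(t_1)\le\rho^{*}$ holds regardless of which branch is active — in particular ruling out that a contour-following excursion through $\mathcal{P}_2$ lets $\rho$ grow past $\rho^{*}$ before $\theta$ comes back down to $\theta^{*}$. The resolution is that contour-following holds $F_1$ exactly constant while $|\theta|$ only increases (by (\ref{eq:local-optimality-condition-0}), $\partial F_1/\partial|\theta|>0$, so on a level set of $F_1$ the radius $\rho$ must \emph{decrease} as $|\theta|$ increases on the near side of the minimizer); hence during any NLOS excursion $\rho$ is actually bounded by its value at the excursion's start, and combined with $\rho\le\rho_1^{*}(\theta)$ on the LOS moves and the monotone overall behaviour established in Lemma~\ref{lem:monotonicity}, the bound $\rho(t_1)\le\rho_1^{*}(\theta^{*})=\rho^{*}$ propagates to $t_1$.
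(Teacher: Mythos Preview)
Your second step (deducing $(\rho(t_1),\theta^{*})\in\mathcal{P}_1^{+}$ from $\rho(t_1)\le\rho^{*}$ via nestedness) is exactly right and matches the paper. The problem is entirely in the first step, and it stems from the claim $\rho^{*}=\rho_1^{*}(\theta^{*})$. This identity need not hold: $(\rho^{*},\theta^{*})$ minimizes $F_1$ subject to the \emph{constraint} $(\rho,\theta)\in\mathcal{P}_1^{+}$, and the nested property only guarantees that moving radially \emph{inward} keeps you in $\mathcal{P}_1$; moving outward can exit $\mathcal{P}_1$. So it is perfectly possible that $\rho^{*}<\rho_1^{*}(\theta^{*})$ with $(\rho^{*},\theta^{*})$ sitting on the $\mathcal{P}_1$--$\mathcal{P}_2$ boundary. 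Once that identity fails, your contradiction goes the wrong way (you get $F_1(\rho(t_1),\theta^{*})>F_1(\rho^{*},\theta^{*})$, which is useless), and your ``cleanest route'' collapses for the same reason. Separately, your final paragraph asserts that $\rho$ \emph{decreases} along an $F_1$-contour as $|\theta|$ increases; this is backwards. On the near side $\partial F_1/\partial\rho<0$ and $\partial F_1/\partial|\theta|>0$, so on a level set $d\rho$ and $d|\theta|$ have the \emph{same} sign --- consistent with Lemma~\ref{lem:monotonicity}, which says $\rho(t)$ is strictly increasing even in Step~3b).

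The paper's argument avoids all of this and is short. It never needs $\rho^{*}=\rho_1^{*}(\theta^{*})$. Assume $\rho(t_1)>\rho^{*}$. Lemma~\ref{lem:property-one-side} gives $\partial F_1(\rho,\theta^{*})/\partial\rho<0$ for all $\rho$ in $[\rho^{*},\rho(t_1)]\subset[0,\rho(t_1)]$, hence $F_1(\rho(t_1),\theta^{*})<F_1(\rho^{*},\theta^{*})$. Now invoke Lemma~\ref{lem:achievability}: there is $\tau\le t_1$ with $(\rho(\tau),\theta(\tau))\in\mathcal{P}_1^{+}$ and $F_1(\rho(\tau),\theta(\tau))=F_1(\rho(t_1),\theta^{*})<F_1(\rho^{*},\theta^{*})$, contradicting optimality of $(\rho^{*},\theta^{*})$ for $\mathscr{P}'_{1+}$. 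That is the missing idea: use achievability with the \emph{strict decrease} of $F_1$ on $[\rho^{*},\rho(t_1)]$ (coming from Lemma~\ref{lem:property-one-side}), rather than trying to pin $\rho^{*}$ to the unconstrained minimizer.
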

\begin{proof}
We note that $\rho(t_{1})\leq\rho^{*}$. This is because if $\rho(t_{1})>\rho^{*}$,
we must have $F_{1}(\rho(t_{1}),\theta^{*})<F_{1}(\rho^{*},\theta^{*})$,
since $\partial F_{1}(\rho,\theta^{*})/\partial\rho<0$ for all $\rho(t_{1})>\rho>\rho^{*}$
as from Lemma \ref{lem:property-one-side}. As a result of Lemma \ref{lem:achievability},
there exists $\tau\leq t_{1}$, such that 
\[
F_{1}(\rho(\tau),\theta(\tau))=F_{1}(\rho(t_{1}),\theta^{*})<F_{1}(\rho^{*},\theta^{*})
\]
and $(\rho(\tau),\theta(\tau))\in\mathcal{P}_{1}^{+}$, which contradicts
to the assumption that $(\rho^{*},\theta^{*})$ is the solution to
(\ref{eq:subproblem-P1}). 

As $(\rho^{*},\theta^{*})\in\mathcal{P}_{1}^{+}$, from the nested
segmented propagation property (\ref{eq:nested-segmented-model-order-condition})\textendash (\ref{eq:nested-segmented-model-nested-condition}),
we can conclude that $(\rho(t_{1}),\theta(t_{1}))\in\mathcal{P}_{1}^{+}$.
\end{proof}

Since $(\rho(t_{1}),\theta(t_{1}))\in\mathcal{P}_{1}^{+}$, $\rho(t)$
will increase to $\rho^{*}$ following Step \ref{enu:alg-two-phase-update-loop}a)
in Algorithm \ref{enu:alg-initialization}. This completes the proof
that the algorithm trajectory passes through $(\rho(t_{2}),\theta(t_{2}))=(\rho^{*},\theta^{*})$
at time $t_{2}$, where $t_{1}\leq t_{2}\leq T_{3}$.

\textbf{Step C:} Now, the algorithm iterate $(\rho(t),\theta(t))$
is at the stage of Step \ref{enu:alg-two-phase-update-loop}a) in
Algorithm \ref{alg:uav-positioning}, with $\theta(t)=\theta^{*}$
and $\rho(t)\leq\rho^{*}$. It must reach $(\rho^{*},\theta^{*})$
following Step \ref{enu:alg-two-phase-update-loop}a), where $(\rho^{*},\theta^{*})$
solves $\mathscr{P}_{1+}^{'}$.

With these, we confirm the results of Lemma \ref{lem:Local-optimality}.


\subsection{Optimality for the $K$ Segment Case}

We now extend Lemma \ref{lem:Local-optimality} to the case of more
than two segments. 
\begin{lem}
[Optimality after $k$ Loops]\label{lem:partial-opt-K} After completing
the $k$th loop of Steps \ref{enu:alg-two-phase-update})\textendash \ref{enu:alg-two-phase-upadte-loop-left})
in Algorithm \ref{alg:uav-positioning} at time $t=T_{k}$, the following
holds, 
\begin{equation}
F_{\min}(T_{k})\leq\underset{\theta\geq0,(\rho,\theta)\in\mathcal{P}_{k}}{\text{minimize}}\quad F_{k}(\rho,\theta)\label{eq:per-segment-minimum}
\end{equation}
for all $k=1,2,\dots,K-1$. Moreover, (\ref{eq:per-segment-minimum})
holds for $k=K$ when the entire algorithm terminates.
\end{lem}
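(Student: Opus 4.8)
The plan is to extend Lemma~\ref{lem:Local-optimality}, which already gives the $k=1$ case, by induction on $k$. Fix $k$ with $1\le k\le K-1$ and suppose (\ref{eq:per-segment-minimum}) has been established for $k-1$ (the base case $k=1$ is Lemma~\ref{lem:Local-optimality} applied to the first loop, which partitions the plane into the virtual LOS region $\tilde{\mathcal D}_1=\mathcal D_1$ and its complement). In the $k$th loop, Algorithm~\ref{alg:uav-positioning} runs exactly the two-segment search of Steps~\ref{enu:alg-two-phase-update})--\ref{enu:alg-two-phase-upadte-loop-left}) with the virtual LOS region $\tilde{\mathcal D}_k=\bigcup_{j=1}^{k}\mathcal D_j$ playing the role of $\mathcal D_1$ and the virtual NLOS region $\tilde{\mathcal D}_k^{\text c}=\bigcup_{j=k+1}^{K}\mathcal D_j$ playing the role of $\mathcal D_2$, and with the fictitious cost $F_k(\rho,\theta)$ playing the role of $F_1$. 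The first thing I would check is that this ``virtualized'' two-segment instance satisfies the hypotheses used in the proof of Lemma~\ref{lem:Local-optimality}: the virtual regions are nested along rays (immediate from (\ref{eq:nested-segmented-model-nested-condition}), since a union of nested segments is nested); the optimal point lies in $\mathcal P$ (Proposition~\ref{prop:Search-region}); $F_k$ has a unique per-$\theta$ minimizer (Proposition~\ref{prop:Partial-Optimality}); and $F_k$ obeys the monotonicity/comparison inequalities (\ref{eq:local-optimality-condition-0})--(\ref{eq:local-optimality-condition-2}) of Lemma~\ref{lem:partial-optimality}. Crucially, in (\ref{eq:local-optimality-condition-1})--(\ref{eq:local-optimality-condition-2}) the right-hand side runs over $j\ge k$, i.e.\ over exactly the segments making up the virtual NLOS region, so the comparison step that powers Step~A of Lemma~\ref{lem:Local-optimality}'s proof goes through verbatim with $F_1\rightsquigarrow F_k$, $F_2\rightsquigarrow F_{k+1}$ (and more generally $F_j$, $j>k$) on the NLOS side.

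Granting those hypotheses, I would replay the three-step argument of Lemma~\ref{lem:Local-optimality} (Steps A, B, C) with the substitutions above. Let $(\rho^{*}_{(k)},\theta^{*}_{(k)})$ denote the minimizer of $F_k$ over $\{(\rho,\theta)\in\mathcal P_k: \theta\ge 0\}$ (or the full $\mathcal P_k$, by symmetry of the left/right branches). Lemma~\ref{lem:achievability} generalizes: along the $k$th-loop trajectory, whenever the iterate is in $\tilde{\mathcal D}_k^{\text c}$ it is moving on a contour $F_k(\rho,\theta)=C$, so the same value $F_k$ is achieved at an earlier time inside $\tilde{\mathcal D}_k$; and the initialization point $(\rho_k^0,0)$ is in $\tilde{\mathcal D}_k$ by definition (\ref{eq:problem-critical-point-K})--(\ref{eq:problem-critical-point-K-constraint}). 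Lemma~\ref{lem:property-one-side} generalizes: $\rho_k^0\le\rho_k^{*}(0)$ by the same nestedness argument (if $\rho_k^{*}(0)<\rho_k^0$ then $(\rho_k^{*}(0),0)$ would be feasible for (\ref{eq:problem-critical-point-K}) and beat $\rho_k^0$), and the loop stops only once $\partial F_k/\partial\rho\ge 0$, hence $\rho(t)\le\rho_k^{*}(\theta(t))$ throughout. Then Steps~A--C show the $k$th-loop trajectory passes through $(\rho^{*}_{(k)},\theta^{*}_{(k)})$ before $T_k$, so $F_{\min}(T_k)\le F_k(\rho^{*}_{(k)},\theta^{*}_{(k)})=\min_{(\rho,\theta)\in\mathcal P_k}F_k(\rho,\theta)$, which is (\ref{eq:per-segment-minimum}). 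Since $F_{\min}(t)$ is non-increasing and the loops run $k=1,\dots,K-1$ in order, the bound (\ref{eq:per-segment-minimum}) accumulated at $T_{K-1}$ (together with the earlier ones, which persist because $F_{\min}$ only decreases) covers all $k\le K-1$.

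For $k=K$, the virtual NLOS region is empty, so there is no search loop; instead Step~\ref{enu:alg-final-step}) of Algorithm~\ref{alg:uav-positioning} directly records $F_K(\rho_K^0,0)$, where $\rho_K^0$ minimizes $F_K(\rho,0)$ over $\mathbf x(\rho,0)\in\mathcal D_K$. Combining with (\ref{eq:local-optimality-condition-1})--(\ref{eq:local-optimality-condition-2}) applied to $F_K$ (whose right-hand side with $j\ge K$ is just $F_K$ itself, i.e.\ these inequalities say $\min_{\rho}F_K(\rho,\theta')\le\min_{\rho,\theta>\theta'}F_K(\rho,\theta)$, so the on-axis minimum of $F_K$ is its global minimum over $\theta\ge 0$, and by symmetry over all $\theta$) gives $F_{\min}(T)\le F_K(\rho_K^0,0)=\min_{(\rho,\theta)\in\mathcal P_K}F_K(\rho,\theta)$, which is (\ref{eq:per-segment-minimum}) for $k=K$. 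Once Lemma~\ref{lem:partial-opt-K} is in hand, Theorem~1A follows at once: $F_{\min}(T)\le\min_k\min_{(\rho,\theta)\in\mathcal P_k}F_k(\rho,\theta)=\min_k\min_{(\rho,\theta)\in\mathcal P_k}F(\rho,\theta)=\min_{(\rho,\theta)}F(\rho,\theta)$, the global optimum of $\mathscr P'$, and since $F_{\min}(t)$ is realized by the tracked position $\hat{\mathbf x}(t)$, the tracked trajectory is globally optimal.

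The main obstacle I anticipate is bookkeeping the interaction between loops rather than any single analytical step: when the $k$th loop begins, the iterate is not re-initialized to the BS but picks up from $(\rho_k^0,0)$, and one must be sure that the ``virtual LOS'' abstraction is genuinely a faithful two-segment instance each time — in particular that the stopping-criterion arguments in Step~A (the $\partial F_k/\partial\rho\ge 0$ case and the $\rho\ge L\cos\theta$ case) only ever invoke the comparison inequalities (\ref{eq:local-optimality-condition-1})--(\ref{eq:local-optimality-condition-2}) with indices $j\ge k$, which live in the current virtual NLOS region, and never need information about segments already ``absorbed'' into the virtual LOS region. A secondary subtlety is that (\ref{eq:per-segment-minimum}) is stated over $\theta\ge 0$; I would note that Step~\ref{enu:alg-two-phase-upadte-loop-left}) (the left branch) handles $\theta\le 0$ by the mirror-image argument, so the $\theta\ge 0$ restriction is really ``one branch at a time'' and the union of the two branches covers all of $\mathcal P_k$.
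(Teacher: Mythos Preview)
Your overall plan---reduce the $k$th loop to the two-segment instance with virtual LOS region $\tilde{\mathcal P}_k=\bigcup_{j\le k}\mathcal P_j$ and virtual fictitious cost $F_k$, then invoke Lemma~\ref{lem:Local-optimality}---is exactly the paper's route, and your verification that the hypotheses of Lemma~\ref{lem:Local-optimality} carry over is correct. The $k=K$ handling via Step~\ref{enu:alg-final-step}) and (\ref{eq:local-optimality-condition-1})--(\ref{eq:local-optimality-condition-2}) also matches the paper. The ``induction'' framing is a red herring: you never use the hypothesis for $k-1$, and indeed the paper treats each $k$ independently.

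There is, however, a genuine gap in your Step~A--C replay. You define $(\rho^{*}_{(k)},\theta^{*}_{(k)})$ as the minimizer of $F_k$ over $\mathcal P_k$ and assert that the $k$th-loop trajectory passes through it. But when you substitute $\mathcal P_1\rightsquigarrow\tilde{\mathcal P}_k$ in Lemma~\ref{lem:Local-optimality}, what you actually obtain is that the trajectory passes through the minimizer of $F_k$ over the \emph{virtual LOS region} $\tilde{\mathcal P}_k$, not over $\mathcal P_k\subsetneq\tilde{\mathcal P}_k$. The contradiction arguments in Step~A (Lemma~\ref{lem:stopping-theta}) produce a point $(\rho(\tau),\theta(\tau))$ in the virtual LOS region $\tilde{\mathcal P}_k^{+}$ with smaller $F_k$-value; this is only a contradiction if $(\rho^{*}_{(k)},\theta^{*}_{(k)})$ is optimal over $\tilde{\mathcal P}_k^{+}$, which you have not assumed. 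So your claimed ``passes through $(\rho^{*}_{(k)},\theta^{*}_{(k)})$'' is not justified.

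The paper closes exactly this gap with a short case split you are missing. Let $(\hat\rho^{(k)},\hat\theta^{(k)})$ be the minimizer of $F_k$ over $\tilde{\mathcal P}_k$ (which the virtualized Lemma~\ref{lem:Local-optimality} does reach). If $(\hat\rho^{(k)},\hat\theta^{(k)})\in\mathcal P_k$, the track record picks up $F_k(\hat\rho^{(k)},\hat\theta^{(k)})$ directly. If instead $(\hat\rho^{(k)},\hat\theta^{(k)})\in\mathcal P_j$ for some $j<k$, then the \emph{actual} cost recorded there is $F(\hat\rho^{(k)},\hat\theta^{(k)})=F_j(\hat\rho^{(k)},\hat\theta^{(k)})\le F_k(\hat\rho^{(k)},\hat\theta^{(k)})$ by the ordering (\ref{eq:inequality-less-obstructed-segment-preferred}). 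Either way $F_{\min}(T_k)\le F_k(\hat\rho^{(k)},\hat\theta^{(k)})\le\min_{(\rho,\theta)\in\mathcal P_k}F_k(\rho,\theta)$, the last inequality because $\mathcal P_k\subset\tilde{\mathcal P}_k$. Adding this two-line case analysis fixes your argument and brings it in line with the paper's proof.
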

\begin{proof}
For the $k$th loop $(k\leq K-1)$ of Steps \ref{enu:alg-two-phase-update})
\textendash{} \ref{enu:alg-two-phase-upadte-loop-left}) in Algorithm
\ref{alg:uav-positioning}, the iteration is equivalent to that in
the two segment case, $K=2$. Specifically, the virtual propagation
segment partition $(\tilde{\mathcal{P}}_{k},\tilde{\mathcal{P}}_{k}^{\text{c}})$
in the $k$th outer loop corresponds to the LOS-NLOS partition $(\mathcal{P}_{1},\mathcal{P}_{2})$
in the loop of Steps \ref{enu:alg-two-phase-update}) \textendash{}
\ref{enu:alg-two-phase-upadte-loop-left}) for the $K=2$ case. Moreover,
the function $F_{k}$ in the $k$th outer loop corresponds to the
function $F_{1}$ in the $K=2$ case. As a result, using Lemma \ref{lem:Local-optimality},
Steps \ref{enu:alg-two-phase-update}) \textendash{} \ref{enu:alg-two-phase-upadte-loop-left})
in the $k$th loop equivalently solve 
\begin{equation}
\mathscr{P}_{k}':\quad\underset{\theta\geq0,(\rho,\theta)\in\widetilde{\mathcal{P}}_{k}}{\text{minimize}}\quad F_{k}(\rho,\theta)\label{eq:k-th-optimization}
\end{equation}
where $\tilde{\mathcal{P}}_{k}\triangleq\bigcup_{j=1}^{k}\mathcal{P}_{j}$
and $k\leq K-1$. 

We denote the minimum value of (\ref{eq:k-th-optimization}) as $\tilde{F}_{\min,k}$,
and the algorithm trajectory $(\rho(t),\theta(t))$ reaches the optimal
solution $(\hat{\rho}^{(k)},\hat{\theta}^{(k)})$ at time $t_{k}\leq T_{k}$.
Since the constraint set $\tilde{\mathcal{P}}_{k}$ in (\ref{eq:k-th-optimization})
contains the constraint set $\mathcal{P}_{k}$ in (\ref{eq:per-segment-minimum}),
it must hold that 
\[
\tilde{F}_{\min,k}\leq\mbox{minimize}_{\theta\geq0,(\rho,\theta)\in\mathcal{P}_{k}}F_{k}(\rho,\theta).
\]
Therefore, we still need to show $F_{\min}(T_{k})\leq\tilde{F}_{\min,k}$.

With such a goal, the following two cases are examined:

(i) If $(\hat{\rho}^{(k)},\hat{\theta}^{(k)})\in\mathcal{P}_{k}$,
then $F_{\min}(T_{k})\leq F(\rho(t_{k}),\theta(t_{k}))=F_{k}(\rho(t_{k}),\theta(t_{k}))=\tilde{F}_{\min,k}$
due to the track record update $F_{\min}(t)$ from Step 3a). 

(ii) If $(\hat{\rho}^{(k)},\hat{\theta}^{(k)})\notin\mathcal{P}_{k}$,
then it holds that $(\hat{\rho}^{(k)},\hat{\theta}^{(k)})\in\widetilde{\mathcal{P}}_{k}\backslash\mathcal{P}_{k}=\bigcup_{i<k}\mathcal{P}_{i}$.
\Ac{wlog}, assume that $(\hat{\rho}^{(k)},\hat{\theta}^{(k)})\in\mathcal{P}_{j}$
for some $j\leq k-1$. We thus have 
\begin{align}
\tilde{F}_{\min,k} & =F_{k}(\hat{\rho}^{(k)},\hat{\theta}^{(k)})\nonumber \\
 & \geq F_{j}(\hat{\rho}^{(k)},\hat{\theta}^{(k)})\label{eq:partial-opt-K-ieq1}\\
 & =F(\rho(t_{k}),\theta(t_{k}))\nonumber \\
 & \geq F_{\min}(t_{k})\nonumber \\
 & \geq F_{\min}(T_{k})\label{eq:partial-opt-K-ieq2}
\end{align}
where the inequality on the first line is from conditions (\ref{eq:local-optimality-condition-1})\textendash (\ref{eq:local-optimality-condition-2}).
The second line is due to the track record update $F_{\min}(t)$ from
Step 3a). This shows that inequality (\ref{eq:per-segment-minimum})
is also true for $k$. 

As a result, we have shown 
\[
F_{\min}(T_{k})\leq\tilde{F}_{\min,k}\leq\mbox{minimize}_{\theta\geq0,(\rho,\theta)\in\mathcal{P}_{k}}F_{k}(\rho,\theta)
\]
for $k=1,2,\dots,K-1$.

Finally, the last step in Algorithm \ref{alg:uav-positioning} yields
\begin{align}
F_{\min} & \leq F_{K}(\rho_{K}^{*}(0),0)\nonumber \\
 & \leq\underset{\rho\geq0,0<\theta\leq\frac{\pi}{2}}{\text{minimize}}\,F_{K}(\rho,\theta)\label{eq:partial-opt-K-ieq5}\\
 & \leq\underset{\theta\geq0,(\rho,\theta)\in\mathcal{P}_{K}}{\text{minimize}}\,F_{K}(\rho,\theta)\nonumber 
\end{align}
where the second inequality in (\ref{eq:partial-opt-K-ieq5}) is from
conditions (\ref{eq:local-optimality-condition-1})\textendash (\ref{eq:local-optimality-condition-2}).
The result of Lemma \ref{lem:partial-opt-K} is thus confirmed. 
\end{proof}

Theorem 1A is a direct result from Lemma \ref{lem:partial-opt-K},
since (\ref{eq:per-segment-minimum}) holds for $k=1,2,\dots,K$ and
$F_{\min}(t)$ is non-increasing, which implies that $F_{\min}(T_{K})$
is the global minimum value of $\mathscr{P}'$ and $(\hat{\rho}(T_{K}),\hat{\theta}(T_{K}))$
is the globally optimal solution to $\mathscr{P}'$. 

\section{Proof of Theorem \ref{thm:Maximum-Trajectory-Length}}

\label{app:pf-prop-maximum-traj-length}

An important property of Algorithm \ref{alg:uav-positioning} is that
the segment of search trajectory in Step \ref{enu:alg-two-phase-update-loop})
does not ``turn back'', and so does that in Step \ref{enu:alg-two-phase-upadte-loop-left}).
\begin{lem}
[Monotonicity]\label{lem:monotonicity} Step \ref{enu:alg-two-phase-update-loop})
in Algorithm \ref{alg:uav-positioning} strictly and monotonically
increases $\rho(t)$, and it also monotonically increases $|\theta(t)|$.
Similar property holds in Step \ref{enu:alg-two-phase-upadte-loop-left}).
\end{lem}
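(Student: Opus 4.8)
The plan is to treat the two sub-updates inside the loop of Step \ref{enu:alg-two-phase-update-loop}) separately --- the virtual-LOS update (\ref{eq:alg-eq1}) and the virtual-NLOS (contour-following) update (\ref{eq:alg-eq2}) --- show that each one strictly increases $\rho$ and never decreases $|\theta|$, and then obtain Step \ref{enu:alg-two-phase-upadte-loop-left}) by the mirror-image argument $\theta\mapsto-\theta$. The virtual-LOS case is immediate: (\ref{eq:alg-eq1}) sends $(\rho,\theta)$ to $(\rho+\delta,\theta)$, so $\rho$ grows by exactly $\delta$ and $|\theta|$ is unchanged. All the content is in the contour-following phase (\ref{eq:alg-eq2}).

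For that phase I would first pin down the geometry of the step direction. Write $\hat{\mathbf r}\triangleq(\mathbf{x}(\rho,\theta)-\mathbf{x}_{\text{u}})/\rho=\mathbf{M}(\theta)\mathbf{u}$ for the unit radial direction away from the user. Differentiating (\ref{eq:rotation-matrix}) gives $\frac{d}{d\theta}\mathbf{M}(\theta)=\mathbf{M}(\theta)\bigl[\begin{smallmatrix}0 & -1\\ 1 & 0\end{smallmatrix}\bigr]$, so $\hat{\mathbf r}^{\text{T}}\bigl(\frac{d}{d\theta}\mathbf{M}(\theta)\mathbf{u}\bigr)=\mathbf{u}^{\text{T}}\bigl[\begin{smallmatrix}0 & -1\\ 1 & 0\end{smallmatrix}\bigr]\mathbf{u}=0$, i.e. the two vectors appearing in (\ref{eq:alg-eq2}) are orthogonal. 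Consequently $\hat{\mathbf r}^{\text{T}}\Delta\mathbf{x}=\gamma$, which has two consequences: the bracketed direction in (\ref{eq:alg-eq2}) has norm $\ge 1$ so $\gamma$ is well defined, and $\|\mathbf{x}+\Delta\mathbf{x}-\mathbf{x}_{\text{u}}\|^{2}=\rho^{2}+2\rho\gamma+\delta^{2}>\rho^{2}$, so the update strictly increases $\rho$ --- confirming that (\ref{eq:alg-eq2}) really is the ``moving away from the user'' branch, and that in continuous time $\dot\rho=\kappa\bar\gamma>0$ along the NLOS phase.

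Next I would track $|\theta|$ using the level-set constraint $F_{k}(\rho,\theta)\equiv C$ that defines (\ref{eq:alg-eq2}). While the loop of Step \ref{enu:alg-two-phase-update-loop}) is still running, its stopping rule guarantees $\partial F_{k}(\rho,\theta)/\partial\rho<0$, so increasing $\rho$ at fixed $\theta$ strictly decreases $F_{k}$; to stay on the level set $C$, the value must be pushed back up, and since $\partial F_{k}/\partial|\theta|>0$ for $\theta\neq0$ (Lemma \ref{lem:partial-optimality}, equation (\ref{eq:local-optimality-condition-0})), this forces $|\theta|$ to strictly increase. Equivalently, decomposing (\ref{eq:alg-eq2}) in the $(\rho,\theta)$ basis gives $d\theta/d\rho=-(\partial F_{k}/\partial\rho)/(\partial F_{k}/\partial\theta)$, which is positive on the right branch (where $\theta>0$, hence $\partial F_{k}/\partial\theta>0$) and negative on the left branch (where $\theta<0$, hence $\partial F_{k}/\partial\theta<0$); together with $\dot\rho>0$ this yields $\frac{d}{dt}|\theta(t)|>0$. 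Since the virtual-LOS and virtual-NLOS phases meet continuously at segment boundaries, where these relations still hold (the loop has not terminated there), $\rho(t)$ is strictly increasing and $|\theta(t)|$ non-decreasing over all of Steps \ref{enu:alg-two-phase-update-loop}) and \ref{enu:alg-two-phase-upadte-loop-left}).

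The step I expect to be the main obstacle is the bookkeeping near $\theta=0$, where $\partial F_{k}/\partial\theta$ vanishes and the direction in (\ref{eq:alg-eq2}) and the scaling $\gamma$ are not a priori well defined. I would close this gap by noting that each branch is launched from $|\theta|=\delta/\rho_{k}^{0}>0$ and, by the argument above, $|\theta|$ never decreases, so the iterate stays in a region bounded away from the axis $\{\theta=0\}$ (and, until termination, from the stopping set $\{\partial F_{k}/\partial\rho\ge0\}$); on that region $\partial F_{k}/\partial\theta$ is bounded away from $0$, making the continuous-time dynamics (\ref{eq:alg-eq2}) well posed, and the discrete iteration inherits the same strict monotonicity for step sizes $\delta$ small enough that the update does not overshoot the per-ray minimizer $\rho_{k}^{*}(\theta)$.
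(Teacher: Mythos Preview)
Your proposal is correct and follows essentially the same approach as the paper: both arguments treat Step~\ref{enu:alg-two-phase-update-loop}a) trivially, show $\Delta\rho=\gamma>0$ in Step~\ref{enu:alg-two-phase-update-loop}b) via the orthogonality $\mathbf{u}^{\text{T}}\mathbf{M}(\theta)^{\text{T}}\frac{d}{d\theta}\mathbf{M}(\theta)\mathbf{u}=0$, and deduce the monotonicity of $|\theta|$ from the contour relation $\frac{\partial F_k}{\partial\rho}d\rho+\frac{\partial F_k}{\partial\theta}d\theta=0$ combined with $\partial F_k/\partial\rho\le 0$ (Lemma~\ref{lem:property-one-side}) and $\partial F_k/\partial|\theta|>0$ (Lemma~\ref{lem:partial-optimality}). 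Your version is in fact slightly more careful than the paper's --- you use the exact identity $\|\mathbf{x}+\Delta\mathbf{x}-\mathbf{x}_{\text{u}}\|^{2}=\rho^{2}+2\rho\gamma+\delta^{2}$ rather than a first-order expansion, and you explicitly address well-posedness near $\theta=0$ --- but these refinements do not change the substance of the argument.
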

\begin{proof}
In Step \ref{enu:alg-two-phase-update-loop}a), $\rho(t)$ strictly
and monotonically increases, while $\theta(t)$ keeps unchanged. In
Step \ref{enu:alg-two-phase-update-loop}b), we have (omitting the
higher order term)
\begin{align*}
\rho+\Delta\rho & =\|\mathbf{x}+\Delta\mathbf{x}-\mathbf{x}_{\text{u}}\|\\
 & =\sqrt{\|\mathbf{x}-\mathbf{x}_{\text{u }}\|^{2}+2(\mathbf{x}-\mathbf{x}_{\text{u}})^{\text{T}}\Delta\mathbf{x}+\|\Delta\mathbf{x}\|^{2}}\\
 & =\rho\Big(1+\frac{1}{\rho^{2}}(\mathbf{x}-\mathbf{x}_{\text{u}})^{\text{T}}\Delta\mathbf{x}+o(\|\mathbf{x}\|)\Big)
\end{align*}
and since $(\mathbf{x}-\mathbf{x}_{\text{u}})/\rho=\mathbf{M}(\theta)\mathbf{u}$
from (\ref{eq:polar-representation-xy-plane}), we have 
\begin{align*}
\Delta\rho & =\mathbf{u}^{\text{T}}\mathbf{M}(\theta)^{\text{T}}\Delta\mathbf{x}\\
 & =\mathbf{u}^{\text{T}}\mathbf{M}(\theta)^{\text{T}}\mathbf{M}(\theta)\mathbf{u}\gamma\\
 & \qquad\qquad+\gamma\rho\mathbf{u}^{\text{T}}\mathbf{M}(\theta)^{\text{T}}\frac{d}{d\theta}\mathbf{M}(\theta)\mathbf{u}\Big(-\frac{\partial F_{k}}{\partial\theta}\Big)^{-1}\frac{\partial F_{k}}{\partial\rho}
\end{align*}
which equals to $\gamma$, being strictly positive. Therefore, Step
\ref{enu:alg-two-phase-update-loop}) strictly increases $\rho(t)$. 

In addition, Step \ref{enu:alg-two-phase-update-loop}b) moves on
the contour of $F_{k}(\rho,\theta)=C$, whose dynamics is give by
\[
\frac{\partial F_{k}(\rho,\theta)}{\partial\rho}d\rho+\frac{\partial F_{k}(\rho,\theta)}{\partial\theta}d\theta=0
\]
in which $\partial F_{k}(\rho(t),\theta(t))/\partial\rho\leq0$ according
to Lemma \ref{lem:property-one-side} (with a straight-forward generalization
from $F_{1}$ to $F_{k}$) and $\partial F_{k}(\rho,\theta)/\partial|\theta|>0$
according to Lemma \ref{lem:partial-optimality}. As a result, Step
\ref{enu:alg-two-phase-update-loop}b) monotonically increases $|\theta(t)|$. 
\end{proof}

\begin{figure}[h]
\begin{centering}
\includegraphics[width=0.7\columnwidth]{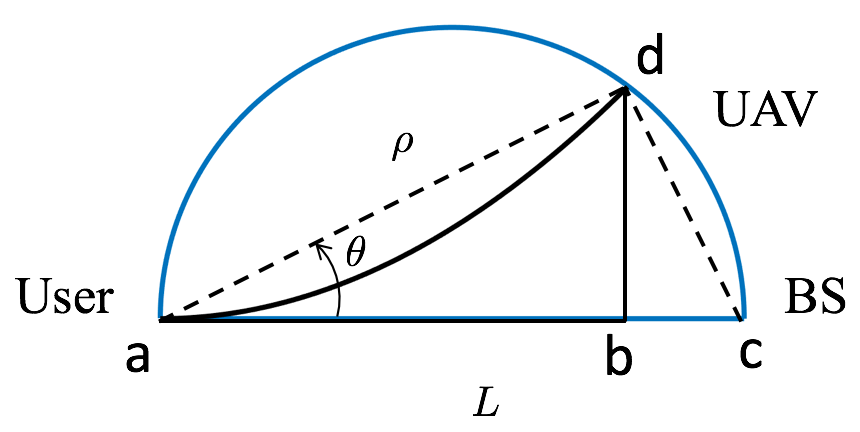}
\par\end{centering}
\caption{\label{fig:geometry-search-region} Search region and search trajectory,
where point \textbf{a} denotes user, point \textbf{c} denotes the
BS, and point \textbf{d} denotes UAV at the boundary of the search
area.}
\end{figure}
We now show that the boundary of the search region $\mathcal{P}$
in (\ref{eq:search-region}) for the $\theta>0$ branch is given by
a semi-circle as illustrated in Fig. \ref{fig:geometry-search-region}
(the blue semi-circle). To see this, we first note that the boundary
of $\mathcal{P}$ for $\theta>0$ is given by $(\rho\cos\theta,\rho\sin\theta)$,
where $\rho=L\cos\theta$ and $0\leq\theta\leq\pi/2$. Therefore,
$\overline{\text{\textbf{ad}}}=L\cos\theta$, $\overline{\text{\textbf{ac}}}=L$,
$\overline{\text{\textbf{dc}}}=\sqrt{(L\cos^{2}\theta-L)^{2}+(L\cos\theta\sin\theta-0)^{2}}$,
which yields $(\overline{\text{\textbf{ad}}})^{2}+(\overline{\text{\textbf{dc}}})^{2}=(\overline{\text{\textbf{ac}}})^{2}$
and hence point $\mathbf{d}$ is on the semi-circle with the diameter
given by line segment $\overline{\text{\textbf{ac}}}$.

From the monotone property in Lemma \ref{lem:monotonicity}, the search
trajectory is a convex curve starting from point \textbf{a} at the
user and ending at the semi-circle, \emph{e.g.}, point \textbf{d}.
Note that the length of any curve $\breve{\text{\textbf{ad}}}$ is
less than $\overline{\text{\textbf{ab}}}+\overline{\text{\textbf{bd}}}=L\cos^{2}\theta+L\cos\theta\sin\theta$,
where the maximum value over $0\leq\theta<\pi/2$ can be numerically
evaluated to be roughly $1.2L$.

Algorithm \ref{alg:uav-positioning} consists of the search on the
BS-user axis with maximum length $L$, and $K-1$ loops for the off-BS-user
axis searches, where each loop consists of searches on the left and
right branches each with maximum length $1.2L$. As a result, the
total length of the search trajectory is upper bounded by $(2.4K-1.4)L$.

\bibliographystyle{IEEEtran}


\end{document}